\documentclass[12pt]{article}
\usepackage[small]{titlesec}
\usepackage{natbib,amsmath,amssymb,amsthm,graphicx,setspace,paralist,booktabs,rotating,subcaption,float,color}
\usepackage{graphicx}
\usepackage{enumerate}
\usepackage{url} % not crucial - just used below for the URL 
\usepackage{amssymb, amsmath, amsthm}
\usepackage{breakcites}

\usepackage{bigstrut} %needed to keep blockarrays from being crunched
\usepackage{blkarray}
\usepackage{booktabs}
\usepackage{enumitem} % detailed list control
\usepackage{float} % for 'H' forced placement of floats
\usepackage{graphicx}
\usepackage{subcaption}
\usepackage{mwe}
\usepackage{comment}
\usepackage[colorlinks=true, allcolors=LinkBlue, backref=page]{hyperref}
\usepackage{lmodern}
\usepackage{mathrsfs} % for mathscr
\usepackage{multirow}
\usepackage{natbib} % for \citet and \citep
\usepackage{scalerel}
\usepackage{setspace}
\usepackage{soul} % for \st = strikethrough
\usepackage{subcaption}
\usepackage{titling}
\usepackage{verbatim}
\usepackage{xcolor}
\usepackage{xfrac}
\usepackage{algpseudocode}
\usepackage{bbm}
\usepackage{setspace}
\usepackage{amsfonts}
\usepackage{outlines}

\AtBeginDocument{
	\abovedisplayskip=5pt plus 2pt minus 2pt
	\belowdisplayskip=\abovedisplayskip
	\abovedisplayshortskip=2pt plus 2pt minus 2pt
	\belowdisplayshortskip=\belowdisplayskip}

\titlespacing*{\section}{0pt}{*2}{*1}
\titlespacing*{\subsection}{0pt}{*2}{*1} 
\setlength\pltopsep\medskipamount
\bibpunct{(}{)}{;}{a}{}{,}
\setlength\bibsep{0pt}
\setlength\pltopsep\medskipamount 
\setlength\heavyrulewidth{.5pt}
\setlength\lightrulewidth{.4pt}
\setlength\cmidrulewidth{.4pt}

\setlist{noitemsep, topsep=0pt} % for enumitem
\definecolor{LinkBlue}{rgb}{.15, .25, .85} %for hyperref

\binoppenalty=\maxdimen % prevent linebreaks after binary operators, usually 700
\relpenalty=\maxdimen % prevent linebreaks after relations, usually 500

\setlength{\bigstrutjot}{1pt}

% Prevent natbib from breaking
\makeatletter
\renewcommand*{\NAT@spacechar}{~}
\makeatother

% Maybe unnecessary:
\parindent=0pt
\setlength{\parskip}{10pt plus 1pt minus 1pt}

\usepackage{algorithm}
\makeatletter

%\pdfminorversion=4
% NOTE: To produce blinded version, replace "0" with "1" below.
\newcommand{\blind}{1}

% DON'T change margins - should be 1 inch all around.
\addtolength{\oddsidemargin}{-.5in}%
\addtolength{\evensidemargin}{-1in}%
\addtolength{\textwidth}{1in}%
\addtolength{\textheight}{1.7in}%
\addtolength{\topmargin}{-1in}%

%%%%%%%%%%%%%%%%%%%%%%%%%%%%%% LyX specific LaTeX commands.
%% A simple dot to overcome graphicx limitations

%\floatstyle{ruled}
% Custom floats, theorems, etc.
\newfloat{algorithm}{tbp}{loa}
\providecommand{\algorithmname}{Algorithm}
\floatname{algorithm}{\protect\algorithmname}

\newtheorem{theorem}{Theorem}
\newtheorem{example}{Example}
\newtheorem{lemma}{Lemma}

\newtheorem{corollary}{Corollary}

\newtheorem{assumption}{Assumption}[section]
\newtheorem{prop}{Proposition}

\def \bP {\mathbb{P}}
\def \bE {\mathbb{E}}
\def \bR {\mathbb{R}}

\def \var {\mathsf{Var}}

%\input defs.tex

%%%%%%%%% bold face %%%%%%%%%%

\usepackage{xspace}

\newcommand{\TV}{{\sf TV}}

\newcommand{\diff}{\mathrm{d}}

%% parenthesis

%\newcommand{\var}{\mathsf{var}}

\newcommand{\Poi}{\mathsf{Poi}}
\newcommand{\Geo}{{\mathsf{Geo}}}
\newcommand{\Binom}{{\mathsf{Binom}}}

\newcommand{\Indc}{\mathbf{1}}

\newcommand{\sfE}{{\mathsf{E}}}

\newcommand{\sfN}{{\mathsf{N}}}
\newcommand{\sfO}{{\mathsf{O}}}

\newcommand{\sfU}{{\mathsf{U}}}

\newcommand{\calC}{{\mathcal{C}}}
\newcommand{\calD}{{\mathcal{D}}}

\newcommand{\calF}{{\mathcal{F}}}

\newcommand{\calO}{{\mathcal{O}}}

\newcommand{\calS}{{\mathcal{S}}}
\newcommand{\calT}{{\mathcal{T}}}

\newcommand{\barP}{{\bar{P}}}

\begin{document}

\def\spacingset#1{\renewcommand{\baselinestretch}%
{#1}\small\normalsize} \spacingset{1}

%%%%%%%%%%%%%%%%%%%%%%%%%%%%%%%%%%%%%%%%%%%%%%%%%%%%%%%%%%%%%%%%%%%%%%%%%%%%%%

\if1\blind
{
  \title{\bf Unbiased Multilevel Monte Carlo methods for intractable distributions: MLMC meets MCMC}
  \author{Tianze Wang,\ Guanyang Wang}
  \maketitle
  	\footnotetext{Guanyang Wang is Assistant Professor, Department of Statistics, Rutgers University, Piscataway, NJ 08854. (E-mail: \emph{guanyang.wang@rutgers.edu}). Tianze Wang is Ph.D. candidate, Department of Statistics, Rutgers University, Piscataway, NJ 08854. (E-mail: \emph{tianze.wang@rutgers.edu}) .}
  
} \fi

\if0\blind
{
  \bigskip
  \bigskip
  \bigskip
  \begin{center}
    {\bf Unbiased Multilevel Monte Carlo methods for intractable distributions: MLMC meets MCMC}
\end{center}
  \medskip
} \fi

\bigskip
\begin{abstract}
Constructing unbiased estimators from Markov chain Monte Carlo (MCMC) outputs is a difficult problem that has recently received a lot of attention in the statistics and machine learning communities. However, the current unbiased MCMC framework only works when the quantity of interest is an expectation, which excludes many practical applications. In this paper, we propose a general method for constructing unbiased estimators for functions of expectations and extend it to construct unbiased estimators for nested expectations. Our approach combines and generalizes the unbiased MCMC and Multilevel Monte Carlo (MLMC) methods. In contrast to traditional sequential methods, our estimator can be implemented on parallel processors. We show that our estimator has a finite variance and computational complexity and can achieve $\varepsilon$-accuracy within the optimal $O(1/\varepsilon^2)$ computational cost under mild conditions. Our numerical experiments confirm our theoretical findings and demonstrate the benefits of unbiased estimators in the massively parallel regime.

\end{abstract}

\noindent%
{\it Keywords:} debias, function of expectation, parallel computation, nested expectation, coupling
\vfill

\newpage
\spacingset{1.9} % DON'T change the spacing!
\section{Introduction}\label{sec:intro}
Monte Carlo methods generate unbiased estimators for the expectation of a distribution.  In practice, however, it may be impractical to sample from the underlying distribution and the quantity of interest may not be an expectation. Generally, most inference problems can be represented as estimating a quantity of the form $\calT(\pi)$, where $\pi$ is one or a group of distributions and $\calT$ is a functional of $\pi$. We begin by considering several motivating examples to gain a deeper understanding of the different forms that $\calT(\pi)$ might take.
\begin{example}[Integration]\label{eg:integration}
	Let $\pi$ be a probability distribution and $f$ a $\pi$-integrable function. The problem of estimating $\bE_\pi[f]$ can be viewed as estimating $\calT(\pi)$ where $\calT$ is the integral operator: $\calT(\pi) := \int f(x)\pi(dx).$

\end{example}

\begin{example}[Nested Monte Carlo]\label{eg: nested}
	Let $\pi$ be a probability distribution, and suppose the quantity of our interest has the form $\calT(\pi) := \bE_\pi[\lambda]$, where $\lambda$ is itself intractable. The intractable function $\lambda$ may take the form $\lambda(x) := f(x, \gamma(x)),$ where $\gamma(x) = \bE_{y\sim p(y\mid x)}[\phi(x,y)]$ is a conditional expectation. One concrete example is the two-stage optimal stopping problem,  where $\gamma(x) = \max\{x, \bE[y|x]\}$. Estimating the nested expectation is known as a challenging problem in Monte Carlo methods due to its involved structure \citep{rainforth2018nesting}. 
\end{example}

\begin{example}[Ratios of normalizing constants]\label{eg: ratio of normalizing constant}
	Let $\pi_1(x) =f_1(x)/Z_1$ and  $\pi_2(x) = f_2(x)/Z_2$ be two probability densities with common support. We assume $f_1$ and $f_2$ can be easily evaluated, but the normalizing constants $Z_1$ and $Z_2$ are computationally intractable. Consider the task of estimating the ratio of normalizing constants, i.e., $Z_1/Z_2$, standard calculation shows $Z_1/Z_2 = \bE_{\pi_2}[f_1]/\bE_{\pi_1}[f_2].$
	The problem can be viewed as estimating $\calT(\pi)$ by choosing $\pi$ as the product measure $\pi_1\times\pi_2$, and 
	$\calT(\pi) := \bE_{\pi_2}[f_1]/\bE_{\pi_1}[f_2].$
	The problem finds statistical and physics applications, including hypothesis testing, Bayesian inference, and estimating free energy differences. We refer the readers to \cite{meng1996simulating} for other applications.
\end{example}

\begin{example}[Quantile estimation]\label{eg: quantile}
	Let $\pi$ be a probability distribution with  cumulative distribution function $F_\pi$ and $q$ a constant in $(0,1)$. Estimating the $q$-th quantile of $\pi$ can be formulated as estimating $\calT(\pi)$ where $
	\calT(\pi) := \inf_v\{F_\pi(v)\geq q\}. $
	Quantile estimation problem has applications in statistics, economics, and other fields. We refer the readers to \cite{koenker2001quantile, takeuchi2006nonparametric, romano2019conformalized} for more discussions, and \cite{doss2014markov} for an MCMC-based method.
\end{example}

In all the examples above, the distribution $\pi$ can be intractable.  In some cases, such as Example \ref{eg:integration} and \ref{eg: nested}, the quantity of interest is an expectation under $\pi$, although the function inside the expectation may or may not be intractable. In other cases, including Example \ref{eg: ratio of normalizing constant} and \ref{eg: quantile},  $\calT$ is a functional of $\pi$, but not an expectation.

Throughout this paper, we focus on designing unbiased estimators of $\calT(\pi)$ assuming one can only access outputs from some MCMC algorithm that leaves $\pi$ as stationary distribution.  Unbiased estimators are of particular interest because they can help users save computation time in a parallel implementation environment.  To elaborate, classical MCMC estimators, which are based on the empirical distribution after running the MCMC algorithm for a fixed number of iterations, are generally biased unless the algorithm is initialized at the target distribution $\pi$.  This bias can be problematic in a parallel computing environment, where the number of processors is huge but the computational budget per processor is limited. In contrast, unbiased estimators can be computed on different devices in parallel without communication, allowing users to control the mean-squared error (which is only determined by the variance) to an arbitrarily low level by simply increasing the number of processors.  Evidences support the advantage of unbiased estimators in parallel Monte Carlo algorithms are provided in \cite{rosenthal2000parallel,nguyen2022many}.

On top of parallel computing, the confidence intervals can be easily constructed using unbiased estimators from Monte Carlo outputs to improve uncertainty quantification in cases where the variance is hard to estimate. Moreover, these unbiased estimators are often more adaptable and can be used as subroutines in more complicated Monte Carlo problems like pseudo-marginal MCMC algorithms \citep{andrieu2009pseudo} and nested Monte Carlo problems \citep{rainforth2018nesting,zhou2021unbiased}.

Without further assumption on $\calT$ and $\pi$, it is well known that constructing unbiased estimators of $\calT(\pi)$ is difficult. Computational challenges appear in both components of the pair $(\calT, \pi)$. The bias of standard Monte Carlo estimators arises from the nonlinearity of $\calT$ and the sampling error of the MCMC algorithm. Fortunately, recent works provide promising solutions when one component of the above $(\calT, \pi)$ pair is easy while the other is relatively difficult. We briefly review the following two cases separately:
\begin{itemize}
	\item(Case 1: Easy $\calT$, difficult $\pi$): When $\calT$ is an integral operator with respect to some tractable function $f$, but $\pi$ is infeasible to sample from, i.e., $\calT(\pi) := \bE_\pi[f]$ for some intractable $\pi$. The problem is considered by Jacob, O'Leary, and Atchad{\'e} (JOA henceforth) \citep{jacob2020unbiased}. The JOA estimator,  which follows the idea of \cite{glynn2014exact}, solves this problem via couplings of Markov chains. The unbiased MCMC framework has recently raised much attention. It has been applied in convergence diagnostics \citep{biswas2019estimating, biswas2021bounding, biswas2020coupling}, gradient estimation \citep{ruiz2020unbiased}, asymptotic variance estimation \cite{douc2022solving}, and so on. 
	\item(Case 2: Easy $\pi$, difficult $\calT$): When $\pi$ can be sampled perfectly, but $\calT(\pi) := g(\bE_\pi[f])$ is a function of the expectation, or $\calT$ is an expectation with respect to a function which further depends on an expectation (e.g, the nested expectation), the state of the art debiasing technique is the unbias MLMC  method developed by McLeish, Glynn, Rhee, and  Blanchet \citep{blanchet2015unbiased, rhee2015unbiased, mcleish2011general} which is a randomized version of the celebrated (non-randomized) MLMC methods pioneered by Heinrich and Giles  \citep{heinrich2001multilevel, giles2008multilevel, giles2015multilevel}. Unbiased MLMC methods have also found many applications, including gradient estimation \citep{shi2021multilevel}, optimal stopping \citep{zhou2021unbiased}, robust optimization \citep{levy2020large}. 
\end{itemize}

In summary, the unbiased MCMC method assumes easy $\cal T$ (an integral operator) but difficult $\pi$, and the unbiased MLMC method assumes easy $\pi$ (perfectly simulable) but difficult $\cal T$. Both assumptions can be violated in many practical applications, such as Example \ref{eg: nested} -- \ref{eg: quantile}. Although immense progress has been made, there is no systematic way of constructing unbiased estimators for general $\calT(\pi)$ beyond special cases.

In this article, we present a step toward designing unbiased estimators of $\calT(\pi)$ for the general $(\calT,\pi)$ pair by combining and extending the ideas of the unbiased MCMC and MLMC methods.  We propose generic unbiased estimators for functions of expectations, i.e., $\calT(\pi) = g(m(\pi)):= g(\bE_\pi[f(X)])$ where $\pi$ is a $d$-dimensional probability measure that can only be approximately sampled by MCMC methods, $f: \bR^d \rightarrow \bR^m$ is a deterministic map, and $g: \bR^m \rightarrow \bR$ is a deterministic function \footnote{For simplicity, we only consider scalar-valued $g$ in this paper, though our method can be naturally generalized to vector-valued functions.}. Other technical assumptions will be made clear in the subsequent sections. The unbiased estimator is easily parallelizable. It has both finite variance and computational cost for a general class of problems, which implies a `square root convergence rate' that matches the optimal rate of Monte Carlo methods \citep{novak2006deterministic} given by the Central Limit Theorem. Moreover, some technical assumptions on $g$ relax the standard `linear growth' assumption  in \cite{Blanchet2015UnbiasedMC} and \cite{blanchet2019unbiased}, which may be of independent interest. 

Our method can be naturally generalized to the unbiased estimation of the nested expectation introduced in Example \ref{eg: nested} under intractable distributions. The nested expectation is commonly regarded as a challenging task for Monte Carlo simulation. Even if one can sample perfectly from the underlying distribution, the standard `plug-in' Monte Carlo estimator is not only biased but also has a suboptimal computational cost ($\calO(\epsilon^{-3})$ or $\calO(\epsilon^{-4})$) under varying assumptions to achieve a mean square error (MSE) of $\epsilon^2$.  The proposed estimator has three advantages over the standard `plug-in' estimator. It is unbiased, has  $\calO(\epsilon^{-2})$ expected computational cost to achieve $\epsilon^2$-MSE, and works when the conditional distribution can only be approximated by MCMC methods.  

Our method naturally connects the unbiased MCMC with the MLMC method. Unbiased MCMC is an emerging area in statistics and machine learning for its potential for parallelization. The methodology in \cite{jacob2020unbiased} has been extended to different MCMC algorithms, including the Hamiltonian Monte Carlo \citep{heng2019unbiased} and the pseudo-marginal   MCMC \citep{middleton2020unbiased}. In contrast, the MLMC method (both the non-randomized and randomized version) is shown to be successful in applied math, operation research, and computational finance for estimating the expectation of SDE solutions \citep{giles2008multilevel,rhee2015unbiased}, option pricing \citep{belomestny2015multilevel, zhou2021unbiased}, and inverse problems \citep{hoang2013complexity, dodwell2015hierarchical,beskos2017multilevel, jasra2018multi}. 
When the quantity of interest is $\bE_\pi[f]$ for challenging underlying distribution $\pi$ (in contrast to $g(\bE_\pi[f])$ that we considered here), there already exists similar ideas on combining the unbiased MLMC and MCMC framework on specific problems. In  \cite{heng2021unbiased_b}, \cite{heng2021unbiased_a}, the authors propose a four-way coupling mechanism to unbiasedly estimate $\bE_\pi[f]$ when $\pi$ arises from some stochastic differential equations. Nevertheless, overall, the connections between unbiased MCMC and MLMC methods still seem largely unexplored. We hope this work will serve as a bridge for these communities and invite researchers from broader areas to develop these methods together.

The rest of this paper is organized as follows. Section \ref{subsec:notation} introduces the notations. In Section \ref{sec:existing methods}, we describe the high-level idea behind our method without diving into details. This section will also clarify the connections between unbiased MCMC and MLMC methods. We formally propose our unbiased estimator in Section \ref{subsec:construction}. In Section \ref{subsec:nest}, we generalize our estimator for estimating nested expectations. 
In Section \ref{subsec:theory}, we state the assumptions and prove the theoretical properties. In Section \ref{sec:numerical}, we implement our method on several examples to study its empirical performance. We conclude this paper in Section \ref{sec:conclusion}. Technical details such as proofs and additional experiments are deferred to the Appendix.

\subsection{Notations}\label{subsec:notation}
Throughout this article, we preserve the notation $g$ to denote a function from its domain $\calD \subset \bR^m$ to $\bR$. We write $\pi$ as a $d$-dimensional probability measure, and $\pi_1, \cdots, \pi_d$ for its marginal distributions. We denote by $m_f(\pi): =  \bE_\pi[f(X)]$ the expected value/vector of $f$ under $\pi$, and write it as $m(\pi)$  when it is unlikely to cause confusion.  The $L^p$ norm of $v\in \bR^d$ is written as $\lVert v\rVert_p := \left({\sum_{i=1}^d \lvert v_i \rvert^p}\right)^{1/p}$. For the $L^2$ norm, we simply write $\lVert v\rVert := \lVert v\rVert_2$. The geometric distribution with success probability $r$ is denoted by $\Geo(r)$, and write its probability mass function  as $p_n = p_n(r)  = (1-r)^{n-1} r$. The uniform distribution on $[0,1]$ is denoted by $\sfU[0,1]$. The multivariate normal  with mean $\mu$ and covariance matrix $\Sigma$ is denoted by $\sfN(\mu,\Sigma)$. The binomial distribution with $N$ trials and parameter $p$ is denoted by $\Binom(N,p)$. The Poisson distribution with parameter $\lambda$ is denoted by $\Poi(\lambda)$. Given a set $A\subset \bR^d$, we denote by $A^\circ$ all the interior points of $A$.  For a differentiable function $h: \bR^d \rightarrow \bR$, we denote by $Dh := (\frac{\partial h}{\partial x_1}, \frac{\partial h}{\partial x_2}, \cdots, \frac{\partial h}{\partial x_d})$ the gradient of $h$. Given two probability measures $\mu$ and $\nu$, we write their total variation (TV) distance  as $\lVert \mu - \nu \rVert_\TV := \sup_A \lvert \mu(A) - \nu(A)\rvert$. We adopt the convention that $\sum_{i = m}^n a_i = 0$ if $m>n$.

\section{A Simple Identity: Unbiased MCMC meets MLMC}\label{sec:existing methods}
Consider the task of designing unbiased estimators of $g(m(\pi)) = g\big(\bE_\pi[f(X)]\big)$. The problem is extensively studied in the literature when one can draw independent and identically distributed ($i.i.d.$) samples from $\pi$. Unbiased estimators are known to exist or not exist under different contexts \citep{keane1994bernoulli,jacob2015nonnegative}. Different debiasing techniques \citep{nacu2005fast,blanchet2015unbiased,Blanchet2015UnbiasedMC, vihola2018unbiased} have been proposed and analyzed. 
%The core idea in most techniques is to write $g(m(\pi))$ as an infinite series $\sum_{k=1}^\infty a_k$. One can then choose a random level $N$ with probability $p_N$ and construct the importance sampling-type estimator $\hat{a}_N/p_N$. Suppose each $\hat{a}_n$ is unbiased for $a_n$ for every $n$, then $\hat{a}_N/p_N$ is generally unbiased for $\sum_{k=1}^\infty a_k$.
Among existing methods, the unbiased MLMC framework works with the greatest generality.

When $\pi$ is infeasible to sample from, our first observation is based on the following simple identity. For  every random variable $H$ with $\bE[H] = m(\pi)$, we have:
\begin{align}\label{eqn: simple identity}
	g(m(\pi)) = g(\bE[H]).
\end{align}

Formula \eqref{eqn: simple identity} is mathematically straightforward, but the right-hand side of \eqref{eqn: simple identity} is computationally more tractable than the left-hand side. To be more precise, one main difficulty in estimating $g(m(\pi))$ arises from the difficulty in sampling $\pi$. However, our observation is  the quantity $g(m(\pi))$ essentially depends only $m(\pi)$ -- an expectation under $\pi$, but not  $\pi$ itself. Therefore, the quantity $m(\pi)$ can be replaced by the expectation of any unbiased estimator of $m(\pi)$. In other words,
we can relax the previous assumption `i.i.d. samples from $\pi$' by `i.i.d. unbiased estimators of $m(\pi)$'. Suppose $H_1, H_2, \ldots$ are  i.i.d. unbiased estimators of $m(\pi)$ that we can sample from. 
Then it suffices to estimate $g(\bE[H_1])$ unbiasedly. The difficulty is now reduced to estimating a function of expectation, and the existing unbiased MLMC methods can be applied.

After observing \eqref{eqn: simple identity}, it suffices to construct unbiased estimators of $m(\pi)$ provided that $\pi$ cannot be directly simulated. The unbiased MCMC framework provides us with natural solutions. Suppose a Markov chain with transition kernel $P$ that targets $\pi$ as stationary distribution. It is often possible to construct a pair of coupled Markov chains  $(Y,Z) = (Y_t, Z_t)_{t=1}^\infty$ that both evolve according to $P$. By design, if the pair $(Y_t, Z_{t-1})$ meets at some random time $\tau$ and stays together after meeting, then the Jacob-O'Leary-Atchad{\'e} (JOA) estimator, which will be formally introduced in shortly later, is unbiased for $m(\pi)$. 
Putting the unbiased MLMC and JOA estimator together, we can unbiasedly estimate $g(m(\pi))$ using the following two-step strategy described in  Figure \ref{fig:workflow} below. The unbiased MCMC algorithm is used here as a generator for random variables with expectation $m(\pi)$. We will use the outputs of the unbiased MCMC algorithm as inputs to feed into the unbiased MLMC approach and eventually construct an unbiased estimator of $g(m(\pi))$.

\begin{figure}[h]
	\centering    
	\includegraphics[width=0.9\textwidth]{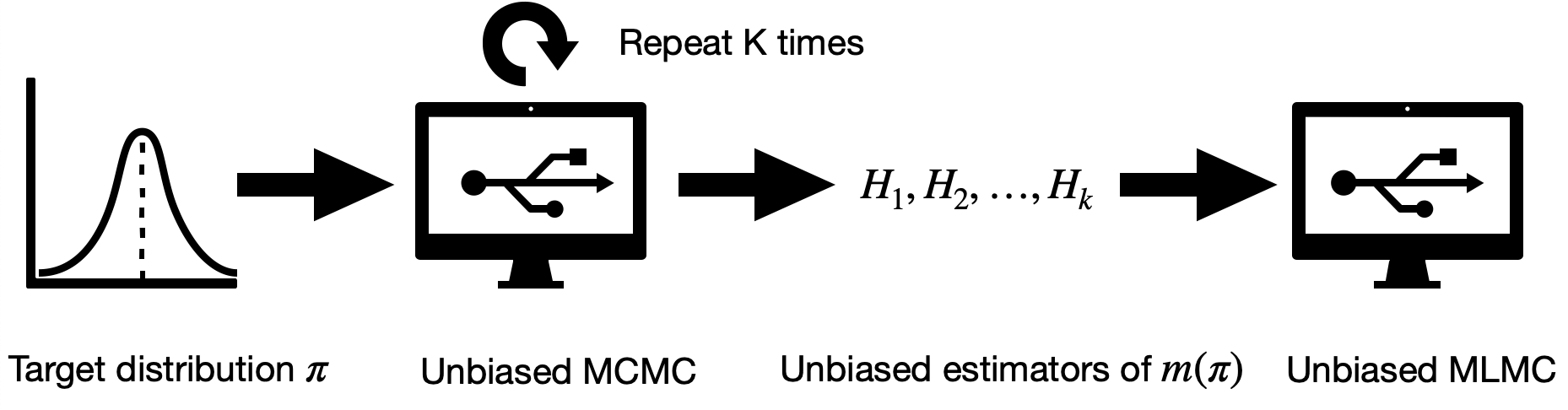} 
	\caption{The workflow for constructing an unbiased estimator of $g(m(\pi))$.}\label{fig:workflow}
\end{figure}

\section{Unbiased estimators for functions of expectation}\label{sec:estimator}
In this section, we discuss our estimator for $g(m(\pi))$ from MCMC outputs in detail. 
We start with a brief review of the JOA estimator of $m(\pi)$ in Section \ref{subsubsec:unbiased MCMC}. 
Our general framework is described in Section \ref{subsec: general g}. A family of simplified estimators is given in Section \ref{subsec: special g}  when $g$ admits additional structures. In Section \ref{subsec:nest}, we discuss the unbiased estimation of nest expectations using a generalized version of our approach. In Section \ref{subsec:the domain problem}, we discuss the problem regarding the domain of $g$ and provide a transformation to avoid the domain problem. In Section \ref{subsec:theory}, we give theoretical justifications for our method. 
\subsection{Constructing  the unbiased estimator}\label{subsec:construction}
\subsubsection{The Jacob-O'Leary-Atchad{\'e} (JOA) estimator of $m(\pi)$} \label{subsubsec:unbiased MCMC}
Let $\Omega$ be a Polish space equipped with the standard Borel $\sigma$-algebra $\calF$. Let $P: \Omega\times \calF \rightarrow [0,1]$ be the Markov transition kernel that leaves $\pi$ as stationary distribution. The Jacob-O'Leary-Atchad{\'e} (JOA) estimator uses a coupled pair of Markov chains that both have transition kernel $P$. Formally, the coupled pair $(Y,Z) = (Y_t, Z_t)_{t=1}^\infty$ is a Markov chain on the product space $\Omega\times \Omega$. The transition kernel $\bar P$, which is also called the coupling of $(Y,Z)$,  satisfies $\barP((x,y), A\times \Omega) = P(x,A), \barP((x,y), \Omega\times B) = P(y,B)$ for every $x,y \in \Omega$ and $A,B\in \calF$. The coupled chain starts with $Y_0\sim \pi_0, Y_1\sim P(Y_0,\cdot )$ and $Z_0\sim \pi_0$ independently. Then at each step $t\geq 2$, one samples $(Y_t, Z_{t-1}) \sim \barP((Y_{t-1}, Z_{t-2}), \cdot)$. Suppose the coupling $\bar P$ is `faithful' \citep{rosenthal1997faithful}, meaning that there is a random but  finite time $\tau$ such that $Y_{\tau} = Z_{\tau-1}$, and $Y_t = Z_{t-1}$ for every $t\geq \tau$. Then for every $k$, the estimator
$	H_k(Y,Z):= f(Y_k) + \sum_{i=k+1}^{\tau-1}(f(Y_i) - f(Z_{i-1}))$ is  unbiased for $\bE_\pi[f]$.   The following informal calculation shows the unbiasedness in \cite{jacob2020unbiased}:
\begin{align*}
	m(\pi) = \lim_{n\rightarrow \infty} \bE[f(Y_n)] &=  \bE[f(Y_k)] + \sum_{n=k+1}^\infty (\bE[f(Y_{n})] - \bE[f(Y_{n-1})])\\
	&= \bE[f(Y_k)] + \sum_{n=k+1}^\infty\bE[f(Z_{n}) - f(Y_{n-1})] \\
	&= \bE[f(Y_k)] + \sum_{n=k+1}^{\tau-1}\bE[f(Z_{n}) - f(Y_{n-1})] = \bE[H_k(Y,Z)].
\end{align*}
The rigorous proof requires assumptions on the target $\pi$ and the distribution of $\tau$, see \cite{jacob2020unbiased, middleton2020unbiased} and our appendix for details. In principle, the above construction works for arbitrary initialization $\pi_0$, though the efficiency depends crucially on the initialization. In practice, users typically choose $\pi_0$ in the same way as they initialize their standard MCMC algorithm. Furthermore, for any fixed integer $m \geq k$, the  `time-averaged' estimator $H_{k:m}(Y,Z): = (m-k+1)^{-1}\sum_{l=k}^m H_l(Y,Z)$ clearly retains unbiasedness and reduces the variance. In practice, users typically choose $k$ to be a large quantile of the coupling time and $m$ to be several multiples of $k$. Theoretical and empirical investigations of these methods are provided in \cite{o2021metropolis, wang2021maximal}. More sophisticated estimators using $L$-lag coupled chains are discussed in  \cite{biswas2019estimating}, but the main idea remains the same.

\subsubsection{Unbiased estimator of $g(m(\pi))$}\label{subsec: general g}

Suppose we can access a routine $\cal S$ such as the JOA estimator in Section \ref{subsubsec:unbiased MCMC}, which outputs unbiased estimators of $m(\pi)$. The estimator of $g(m(\pi))$ can then be constructed by the randomized MLMC method. Let $H_1, H_2, \cdots, H_{2m}$ be a sequence of $i.i.d.$ random variables. We let $	S_H(2m) := \sum_{k=1}^{2m}H_i$ be the summation of all the $2m$ terms, and let
$	S_H^{\sfO}(m) :=\sum_{k=1}^m H_{2k-1}, S_{H}^{\sfE}(m) := \sum_{k=1}^m H_{2k}$
be the summation of  all the odd and even terms, respectively. Our estimator is described by Algorithm \ref{alg:MLMC}.

\begin{algorithm}[htbp]
	\caption{Unbiased Multilevel Monte-Carlo estimator}\label{alg:MLMC}
	
	\begin{algorithmic}
		
		\State \textbf{Input:} 
		\begin{itemize}
			\item A subroutine $\calS$ for generating unbiased estimators of $m(\pi)$
			\item A function $g: \calD \rightarrow \bR$
			\item  The parameter $p$ for geometric distribution
		\end{itemize}
		
		\State \textbf{Output:} Unbiased estimator of $g(m(\pi))$ \\
		\begin{enumerate}
			\item Sample $N$ from the geometric distribution $\Geo(p)$
			\item Call $\calS$ for $2^{N}$ times and label the outputs by $H_1, ..., H_{2^{N}}$ 
			\item Calculate the  quantities $S_H(2^N)$, $S_H^{\sfO}(2^{N-1})$ and $S_{H}^{\sfE}(2^{N-1})$ defined above
			\item Calculate $\Delta_N = g\left(S_H(2^{N})/2^N\right) -\frac{1}{2}\left(g\left(S_H^\sfO(2^{N-1})/2^{N-1}\right) + g\left(S_H^\sfE(2^{N-1})/2^{N-1}\right)\right)$
		\end{enumerate}
		\State\textbf{Return: } $W=\Delta_N/p_N + g(H_1)$.
	\end{algorithmic}
\end{algorithm}

Now we discuss the construction of our estimator $W$. Our approach is closely related to the Blanchet--Glynn estimator \citep{blanchet2015unbiased}. The critical difference is that our method relaxes the assumption `$i.i.d.$ samples from $\pi$' by `unbiased estimator of $m(\pi)$' and incorporates the JOA estimator as a subroutine. Since exact sampling from $ \pi $ is generally challenging, this relaxation is crucial for practical applications.

After rewriting $g(m(\pi)) = g(\bE[H])$, the core idea is to write $g(\bE[H])$ as the limit of a sequence of expectations. Here we use the Law of Large Numbers (LLN) and write $$g(\bE[H]) =  \bE[\lim_{n\rightarrow\infty} g(S_H(2^n)/2^n)] = \lim_{n\rightarrow\infty} \bE[g(S_H(2^n)/2^n)].$$
After introducing our technical assumptions, we will justify the validity of exchanging the order between the expectation and limit. Then one can write the limit of expectations as an infinite summation of consecutive sums, i.e., 
\begin{align*}
   g(E[H]) =  \lim_{n\rightarrow\infty} \bE[g(S_H(2^n)/2^n)] &= \bE[g(H_1)] + \sum_{n=1}^\infty \bE[g(S_H(2^n)/2^n)] - \bE[g(S_H(2^{n-1})/2^{n-1})]\\
    &= \bE[g(H_1)]  +  \sum_{n=1}^\infty  \bE[\Delta_n],
\end{align*}
 where $\Delta_n$ is defined in Step 4 in Algorithm \ref{alg:MLMC}. For each fixed $n$, the random variable $\Delta_n$ can be simulated with cost $2^n$. To tackle the infinite summation of the expectations, one can  choose a random level $N$ with probability $p_N$ and construct the importance sampling-type estimator $\Delta_N/p_N$. The following informal calculation justifies the unbiasedness of  $W$ (output of Algorithm \ref{alg:MLMC}). 
\begin{align*}
	\bE\left[W\right] &= \bE\left[g(H_1)\right] + \bE\left[\Delta_N/p_N\right] =  \bE\left[g(H_1)\right] +\bE [\bE\left[\Delta_N/p_N\mid N\right]]\\
	& = \bE\left[g(H_1)\right] + \sum_{n=1}^\infty \bE[\Delta_n] = \bE\left[g(H_1)\right] + \sum_{n=1}^\infty (\bE[g(S_H(2^n)/2^n)] - \bE[g(S_H(2^{n-1})/2^{n-1})])\\
	& = \lim_{n\rightarrow\infty} \bE[g(S_H(2^n)/2^n)] = g(\bE[H]) = g(m(\pi)).
\end{align*}

Moreover, constructing $\Delta_n$ is a crucial step in Algorithm \ref{alg:MLMC}. The construction in Step 4 of Algorithm \ref{alg:MLMC}
is often referred to as the `antithetic difference estimator,' which is also used in \cite{giles2014antithetic, blanchet2015unbiased}.  A natural question is whether one can replace the antithetic difference design with the following seemingly  more straightforward  estimator:
$
\tilde \Delta_n = g\left(S_H(2^{n})/2^n\right) - g\left(S_H(2^{n-1})/2^{n-1}\right).
$
It turns out we cannot. The rationale behind the antithetic difference design is that we want to control both the variance and computational cost simultaneously. As we will see from Section \ref{subsec:theory}, the antithetic difference design allows one to cancel both the constant and linear terms in the Taylor expansion.   In contrast, $\tilde \Delta_n$ only cancels the constant term. This difference eventually implies our unbiased estimator (output of  Algorithm \ref{alg:MLMC})  will have both finite variance and finite computational cost only if we use the antithetic difference design.

It may seem daunting that  Algorithm \ref{alg:MLMC} generates $2^N$ samples for each implementation. However, the actual computational cost is reasonable as the random variable $N$ follows a geometric distribution and therefore has an exponentially light tail that compensates for the exponentially increasing term $2^N$. To be more precise, suppose it takes unit cost to call $\calS$ once,  in several practical cases including \cite{Blanchet2015UnbiasedMC}, the authors choose $p = 1 - 2^{-1.5}\approx 0.646$, the expected computational cost for implementing Algorithm \ref{alg:MLMC} once is then around $
\sum_{n = 0}^\infty 2^n (1-p)^{n-1} p = \frac{p}{1-p}\sqrt{2}\approx 2.580.$
Therefore, the expected cost of Algorithm \ref{alg:MLMC} is shorter than calling the subroutine $\calS$ three times. Detailed discussion on the computational cost and the choice of $p$ can be found in Section \ref{subsec:theory}.

We use the JOA estimator in Algorithm \ref{alg:MLMC} as our algorithm needs a subroutine to sample unbiased estimators of $m(\pi)$. In principle, any unbiased estimator of $m(\pi)$ (see, e.g., \cite{agapiou2018unbiased,ruzayqat2022unbiased}) can also be fed into Algorithm \ref{alg:MLMC} as a subroutine. On the other hand,  the JOA estimator is by far the most general framework for constructing unbiased estimators of $m(\pi)$ given intractable $\pi$. For concreteness, we will assume the subroutine $\calS$ is the JOA estimator subsequently.

\subsubsection{Unbiased estimator of  polynomials and other special functions}\label{subsec: special g}
Section \ref{subsec: general g} provides us a relatively general framework for unbiased estimators of $g(m(\pi))$. In some situations where the target function $g$ has certain nice properties, the unbiased estimators can be easily obtained without resorting to the unbiased MLMC framework. For example, if $g(x) = x^k$ is a univariate monomial function, one can call the unbiased MCMC algorithm $k$ times and obtain unbiased estimators $H_1, \cdots, H_k$ of $\bE_\pi[X]$. The estimator $\prod_{l=1}^k H_l$ will then be unbiased for $m(\pi)^k$. The argument above can be naturally extended to the case where $m(\pi) \in \bR^m$ and $g: \bR^m \rightarrow \bR$ is a multivariate polynomial function. We use the multi-index $k = (k_1,\cdots, k_m)$ with $\sum_{i=1}^m k_i\leq n$ where $k_1, \ldots, k_m$ are non-negative integers, and $x^k = x_1^{k_1} x_2^{k_2} \cdots x_m^{k_m}$. Let $g(x) = \sum_{k\leq n}  \alpha_k x^k$ denote a multivariate polynomial with degree at most $n$. The unbiased estimator of $g(m(\pi))$ can be constructed as follows. First, we call the unbiased MCMC subroutine $\calS$ for $n$ times and label the outputs by $H_1, \cdots, H_n$, each is an independent vector-valued unbiased estimator of $m(\pi)$. Then for each $k = (k_1, \cdots, k_m)$ we calculate the quantity 
$\hat H(k) =\prod_{l_1 = 1}^{k_1}H_{l_1,1} \prod_{l_2 = k_1+1}^{k_1+k_2}H_{l_2,2} \cdots \prod_{l_m = k_1 + \cdots + k_{m-1}+ 1}^{k_1 + \cdots + k_m} H_{l_m, m},$
where $H_{a,b}$ stands for the $b$-th coordinate of $H_a\in \bR^d$. It is  clear from the independence of $H_1, \cdots, H_n$ that $\bE[\hat H(k)] = m(\pi)^k$. Finally, we output $\sum_k  \alpha_k \hat H(k)$, which is unbiased for $g(m(\pi))$ by the linearity of expectation. It is different from Algorithm \ref{alg:MLMC} as it requires a fixed number of calls for $\cal S$.

When $g: \bR \rightarrow \bR$ is a real analytic function on $\cal D$, i.e., $g(x) = \sum_{n=0}^\infty a_i (x - a)^n$ for some real number $a$. Suppose $\tilde N$ is a non-negative integer random variable with $\bP(\tilde N = k) = q_k$. The unbiased estimator for $g(m(\pi))$ can be constructed by first generating $\tilde N$, and then calling the subroutine $\calS$ for $\tilde N$ times to generate unbiased estimators of $\bE_\pi[X]$. Denote the outputs by $H_1, \cdots H_{\tilde N}$, the final  estimator can be expressed by $(a_{\tilde N}/q_{\tilde N}) \cdot (\prod_{j=1}^{\tilde N}(H_j - a)/\tilde N!)$
This idea exists in previous literature, such as \cite{blanchet2015unbiased}, when $\pi$ can be perfectly simulated. We generalize this idea to the case where $\pi$ is intractable. In particular, when $g(x) = e^x$ and $\tilde N$ follow from the Poisson distribution, the estimator is known as the `Poisson estimator,' which is used in both physics and statistics, see \cite{wagner1987unbiased,papaspiliopoulos2009methodological, fearnhead2010random}. 

Albeit useful in many cases, the power-series-type estimators generally have strong assumptions about the smoothness of the target function. It also requires the knowledge of all the higher-order derivatives of $g$, which is generally infeasible when $g$ is complicated. Therefore, throughout this paper, we will primarily focus on using the unbiased MLMC framework for estimating $g(m(\pi))$ given its generality. This subsection intends to remind our readers that more straightforward choices may exist when  $g$ behaves `nice' enough.  

\subsection{Nested Expectations}\label{subsec:nest}
Now we  extend our method to estimate the nested expectations. Recall that a nested expectation can be written as  $\bE_\pi[\lambda]$, where $\lambda(x) := f(x, \gamma(x))$, where $\gamma(x) = \bE_{y\sim \pi(y\mid x)}[\phi(x,y)]$ is another expectation under the conditional distribution. We first decompose the joint distribution $\pi(x,y)$ as the marginal distribution $\pi(x)$ times the conditional distribution of $\pi(y|x)$. When fixing $x = x_0$, then $\lambda(x_0) = f(x_0, \bE_{y\sim \pi(y\mid x_0)}[\phi(x_0,y)])$ is  a function of $\bE_{\pi(y|x_0)}[\phi(x_0,y)]$ and our previous framework can be applied. Our   estimator is as follows.

\begin{algorithm}[htbp]
	\caption{Unbiased Multilevel Monte-Carlo estimator for nested expectation}\label{alg:MLMCnest}
	
	\begin{algorithmic}
		\State 
		\begin{enumerate}
			\item Sample $x$ from $\pi(x)$
			\item Given $x$ fixed, generate an unbiased estimator   $\hat \lambda(x)$ of $\lambda(x)$ using Algorithm \ref{alg:MLMC}
		\end{enumerate}
		\State\textbf{Return: } $\hat \lambda(x)$.
	\end{algorithmic}
\end{algorithm}
 Algorithm \ref{alg:MLMCnest} can be viewed as the `conditional' version of Algorithm \ref{alg:MLMC}. We first sample $x$ and apply Algorithm \ref{alg:MLMC} to generate an unbiased estimator under  $\pi(\cdot | x)$. After taking the randomness of $x$ into account, we show the output Algorithm  \ref{alg:MLMCnest}  is unbiased for $\bE_\pi[\lambda]$.
\begin{prop}\label{prop:nested unbias}
	We have $\bE[\hat \lambda] = \bE_\pi[\lambda]$.
\end{prop}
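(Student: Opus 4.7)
The plan is to prove the claim by a direct application of the tower property of conditional expectation, combined with the conditional unbiasedness of Algorithm \ref{alg:MLMC} established (informally) in Section \ref{subsec: general g}. Concretely, for each fixed $x$, Algorithm \ref{alg:MLMCnest} invokes Algorithm \ref{alg:MLMC} with the target functional $g(\cdot) := f(x,\cdot)$ applied to the conditional expectation $m(\pi(\cdot \mid x)) = \bE_{y\sim \pi(y\mid x)}[\phi(x,y)] = \gamma(x)$. Hence, conditional on the first-stage sample $x$, the output $\hat\lambda(x)$ is an unbiased estimator of $g(m(\pi(\cdot\mid x))) = f(x,\gamma(x)) = \lambda(x)$.

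First, I would condition on the outer sample $x \sim \pi(x)$ and write
\begin{align*}
\bE[\hat\lambda] \;=\; \bE_{x\sim \pi(x)}\bigl[\bE[\hat\lambda(x) \mid x]\bigr].
\end{align*}
Next, I would invoke the unbiasedness property of Algorithm \ref{alg:MLMC} applied conditionally on $x$: the inner MCMC subroutine $\calS$ targets $\pi(\cdot \mid x)$ and produces \iid unbiased estimators of $\gamma(x)$, so the output of Algorithm \ref{alg:MLMC} satisfies $\bE[\hat\lambda(x) \mid x] = f(x,\gamma(x)) = \lambda(x)$. Substituting this in gives $\bE[\hat\lambda] = \bE_{x\sim \pi(x)}[\lambda(x)] = \bE_\pi[\lambda]$, which is the desired identity.

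The main obstacle is ensuring the inner unbiasedness holds pointwise (or $\pi$-a.s.) in $x$ rather than just in some averaged sense. This requires verifying that the technical assumptions needed for Algorithm \ref{alg:MLMC} to be unbiased (the ones to be formalized in Section \ref{subsec:theory}, such as integrability and appropriate coupling properties of the MCMC chain targeting $\pi(\cdot \mid x)$, and growth/smoothness conditions on $g = f(x,\cdot)$) hold uniformly enough in $x$ for a Fubini-type interchange to be valid $\pi$-almost surely. I would therefore state the proposition as requiring the assumptions of Algorithm \ref{alg:MLMC} to hold for $\pi$-a.e. $x$, and then the tower property and Fubini's theorem make the two-line argument rigorous. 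Everything else is bookkeeping.
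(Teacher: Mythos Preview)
Your proposal is correct and follows essentially the same approach as the paper: the paper's proof is a four-line application of the law of iterated expectation, writing $\bE[\hat\lambda] = \int \bE[\hat\lambda(x)\mid x]\,\pi(dx) = \int \lambda(x)\,\pi(dx) = \bE_\pi[\lambda]$. Your discussion of the need for the inner unbiasedness to hold $\pi$-a.e.\ in $x$ is actually more careful than the paper's treatment, which leaves this implicit.
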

The proof of Proposition \ref{prop:nested unbias} is given in Appendix \ref{subsubsec: proof nested}.

Algorithm \ref{alg:MLMCnest} is useful when $\pi(x)$ can be directly sampled from, and $\pi(y | x)$ can be approximated sampled from some MCMC algorithms. To see the potential applications of Algorithm \ref{alg:MLMCnest},  we present a typical example of the nested expectation, namely estimating the expected utility under partial information  \citep{giles2018mlmc, giles2019decision}. Other examples, including the Bayesian experimental design and variational autoencoders, are given in \cite{rainforth2018nesting, hironaka2021efficient, goda2022unbiased}.

\begin{example}[The  utility under partial information]
	Suppose we have a two-stage process $(X,Y)$ with joint distribution $\pi(x,y)$. Suppose we have $D$ possible strategies (for example, treatments), each with corresponding utility $f_d(x,y)$ for $d \in\{1, \cdots, D\}$. If we have to choose a strategy without seeing the values of $(X,Y)$, the optimal expected utility would be $\max_d \bE[f_d(X,Y)]$. Similarly, after seeing the whole information, the optimal utility would be $\bE[\max_d f_d (X, Y)]$. In the intermediate case, if one has observed only $X$,  the optimal strategy would maximize the conditional utility, i.e., $d^*(X) = \arg\max_d \bE[f_d(X, Y)|X]$. The optimal utility with partial information is $\bE[\max_d \bE[f_d(X, Y)|X]]$, which is a nested expectation. 
	
	The expected utility under partial information finds applications in computational finance, especially in option pricing \cite{belomestny2015multilevel, zhou2021unbiased}. Meanwhile, the  difference between full and partial utility 
	$ \bE[\max_d f_d (X,Y)] - \bE[\max_d \bE[f_d(X,Y)|X]]$ quantifies the `value' of the information in $Y$, which also has applications in the evaluation of Value-at-Risk (VaR) \citep{giles2018mlmc} and medical areas \citep{ades2004expected}. Existing literature typically assumes one can sample directly from  $\pi(x,y)$, and regard the intractable $\pi(x,y)$ as an open question, see Section 5 of \cite{giles2019decision} for discussions.
\end{example}

\subsection{The domain problem and the $\delta$-transformation}\label{subsec:the domain problem}

There is an extra subtly in implementing Algorithm \ref{alg:MLMC}. Besides requiring $H$ to be an unbiased estimator of $m(\pi)$, Algorithm \ref{alg:MLMC} implicitly requires the range of $S_H(m)/m$ to be a subset of the domain of $g$. This constraint is naturally satisfied when $g: \calD \rightarrow \bR$ has domain $\calD = \bR^m$, such as $g(x) = e^x$, or $g(x_1,x_2) = \max\{x_1,x_2, 1\}$. However, many natural functions are not defined on the whole space, such as $g(x) = 1/x$, or $g(x_1,x_2) = x_1/ x_2$. These functions arise in statistical applications such as doubly-intractable problems \citep{lyne2015russian}, estimating the ratio of normalizing constants \citep{meng1996simulating}. Unfortunately, Algorithm \ref{alg:MLMC} cannot be implemented if  $S_H(m)/m$ falls outside $\calD$.  

Consider a concrete problem of estimating $g(m(\pi))) = 1/m(\pi)$ where $\pi$ is a probability measure on $\Omega$. The problem can be naturally avoided if  $S_H(m)/m \neq 0$ almost surely, which is often the case for continuous state-space $\Omega$. However, the algorithm may fail for discrete state spaces. Even if $\Omega$ only contains positive numbers, the resulting JOA estimator may still take $0$ with positive probability. The same problem  gets worse if the domain of $g$ is of the form $\{x\mid\lVert x\rVert \geq c\}$, where both continuous and discrete Markov chains may fail. 

We add an extra $\delta$-transformation to address this issue when needed. Suppose $\calD \supset \bR^d \setminus B_{\delta}$, where $B_{\delta}:= \{x\mid \lVert x \rVert \leq \delta\}$. In other words, $\calD$ contains everything in $\bR^d$ except for a compact set. Let $H$ be the output of the unbiased MCMC subroutine $\calS$. If $\lVert H \rVert \leq \delta$, we flip a fair coin and move $H$ to $H + 2\delta$ given head and $H- 2\delta$ given tail. Formally  the transformation can be defined as $H \rightarrow\tilde H := H 1_{\lVert H\rVert\geq \delta} +  (H + 2\delta B)1_{\lVert H\rVert < \delta}$, where $B$ follows a uniform two-point distribution on $\{-1,1\}$.  After the transformation, $\tilde H$ has support in $\calD$, and the next proposition shows $\tilde H$ has the same expectation as $H$ (and therefore still unbiased), with variance no larger than the variance of $H$ plus an absolute constant.

\begin{prop}\label{prop:delta transform} Let $\tilde H$ be $\delta$-transformation of $H$, then $\lVert\tilde H\rVert \geq \delta$ and $\bE[\tilde H] = \bE[H] = m(\pi)$, and $\var[\tilde H] =  \var[H] + 4\delta^2\bP[\lVert H\rVert\geq \delta]\leq \var[H] + 4\delta^2 $.
\end{prop}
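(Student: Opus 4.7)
The three claims are governed by the representation $\tilde H = H + 2\delta B\,\mathbf{1}_{\{\lVert H\rVert<\delta\}}$, where $B$ is independent of $H$ with $\bE[B]=0$ and $B^2=1$ (interpreting $B$ coordinatewise in the vector-valued case). My plan is to handle the three assertions in sequence, each being a short computation once this decomposition is in place.

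For the first claim $\lVert\tilde H\rVert\geq\delta$, I would split into two cases. On the event $\{\lVert H\rVert\geq\delta\}$ the identity $\tilde H=H$ gives the conclusion trivially. On the event $\{\lVert H\rVert<\delta\}$, $\tilde H = H+2\delta B$, and the reverse triangle inequality yields $\lVert\tilde H\rVert\geq 2\delta\lVert B\rVert-\lVert H\rVert>2\delta\cdot 1-\delta=\delta$ (using that $\lVert B\rVert=1$ in the scalar case, and $\lVert B\rVert\geq 1$ coordinatewise if $B$ is vectorized). So $\tilde H$ lives in $\bR^d\setminus B_\delta\subset\calD$ almost surely.

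For unbiasedness, the extra term $2\delta B\,\mathbf{1}_{\{\lVert H\rVert<\delta\}}$ has mean zero because $B$ is independent of $H$ and $\bE[B]=0$:
\[
\bE[\tilde H]=\bE[H]+2\delta\,\bE[B]\,\bP[\lVert H\rVert<\delta]=\bE[H]=m(\pi).
\]

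For the variance, I would write
\[
\var[\tilde H]=\var[H]+\var\!\left(2\delta B\,\mathbf{1}_{\{\lVert H\rVert<\delta\}}\right)+2\,\cov\!\left(H,\,2\delta B\,\mathbf{1}_{\{\lVert H\rVert<\delta\}}\right).
\]
The cross covariance vanishes because $\bE[B\cdot H\,\mathbf{1}_{\{\lVert H\rVert<\delta\}}]=\bE[B]\,\bE[H\,\mathbf{1}_{\{\lVert H\rVert<\delta\}}]=0$ by independence, and similarly for the product $\bE[H]\bE[B\,\mathbf{1}_{\{\lVert H\rVert<\delta\}}]$. The remaining term equals $4\delta^2\,\bE[B^2\,\mathbf{1}_{\{\lVert H\rVert<\delta\}}]=4\delta^2\,\bP[\lVert H\rVert<\delta]$, yielding the stated equality (the probability in the proposition appears to be a typographical issue, but the weaker bound $\var[\tilde H]\leq \var[H]+4\delta^2$ follows immediately since probabilities are at most $1$). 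None of these steps presents a real obstacle; the only care needed is keeping the vector-valued case honest by interpreting $B$ as having independent $\pm 1$ coordinates and using the coordinatewise version of the covariance calculation, so that the independence of $B$ and $H$ makes the cross term vanish term by term.
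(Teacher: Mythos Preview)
Your proof is correct and follows essentially the same route as the paper: both use the representation $\tilde H = H + 2\delta B\,\mathbf{1}_{\{\lVert H\rVert<\delta\}}$ and exploit the independence of $B$ from $H$ to kill the cross term, with your variance--covariance decomposition being a trivially equivalent repackaging of the paper's direct computation of $\bE[\tilde H^2]$. You also correctly flag the typo in the proposition (the probability should indeed be $\bP[\lVert H\rVert<\delta]$, as the paper's own proof confirms when read with the correct indicator), and you supply the short argument for $\lVert\tilde H\rVert\geq\delta$ via the reverse triangle inequality, which the paper's proof actually omits.
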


The $\delta$-transformation  can be used after Step 2 of Algorithm \ref{alg: MLMC}  for the outputs of the unbiased MCMC algorithm. After getting  $H_1, \ldots, H_{2^N}$ of $m(\pi)$, we could apply the $\delta$-transformation on each of them to ensure every $\tilde H_i$ is still unbiased but has support inside $\calD$. Since the above proposition shows the $\delta$-transformation only increases the variance by no more than $4\delta^2$, theoretical results in Section \ref{subsec:theory} below also hold for estimators after the transformation, albeit a slightly worse dependency on the constants.

\subsection{Theoretical results}\label{subsec:theory}

With all the notations above, we are ready to state our technical assumptions and prove the theoretical results. Our theoretical analysis will focus on the unbiased estimator described in Algorithm \ref{alg:MLMC}. All the results still go through if the $\delta$-transformation is needed. Recall that $g$ is a function from $\calD$ to $\bR$, and $H_1, H_2, \cdots $ are $i.i.d.$ unbiased estimators of $m(\pi)$. Now we denote by $V_n\subset \mathbb R^d$ the range of $(H_1 + \cdots + H_n)/n$ for every $n$ and $V := \cup_{n=1}^\infty V_n$. Our assumptions are posed on both $g$ and $H_i$:

\begin{assumption}[Domain]\label{ass: domain}
	The function $g: \calD\rightarrow \mathbb R$ satisfies $V\subset \calD$. Moreover,  $m(\pi)$ is in the interior of $\calD$, i.e., $m(\pi) \in \calD^\circ$.
\end{assumption}

\begin{assumption}[Consistency]\label{ass: consistency}
	$\bE[g(S_H(n)/n)]\rightarrow g(m(\pi))$ as $n\rightarrow \infty$.
\end{assumption}

\begin{assumption}[Smoothness]\label{ass: differentiable}
	The function $g$ is continuously differentiable in a neighborhood of $m(\pi)$, and $Dg\left(\cdot\right)$ is locally  H\"{o}lder continuous with exponent $\alpha>0$. In other words, there exists $\varepsilon > 0$, $\alpha > 0$ and $c = c(\epsilon) >  0$ such that s for every $x,y \in (m(\pi) - \epsilon, m(\pi) + \epsilon$),
	$\lVert Dg(x)-Dg(y)\rVert\leq c\lVert x-y\rVert^{\alpha}.$ 
\end{assumption}

\begin{assumption}[Moment]\label{ass: moment}
	There exists some $l > 2+\alpha$ such that $H$ has finite $l$-th moments, i.e., $\mathbb{E}[\lVert H_1\rVert^{l}_l] = \sum_{i=1}^m \bE[\lvert H_{1,i}\rvert^l]<\infty.$
\end{assumption}

\begin{assumption}[Smoothness--Moment Tradeoff]\label{ass:tradeoff}
	There exist constants $s > 1$, $\alpha_s  \in \bR$, and $\calC_s > 0$ such that $2\alpha_s + (s-1)l > 2s$ and $\mathbb E(\lvert \Delta_n\rvert^{2s}) \leq \calC_s 2^{-\alpha_s n}$ for every $n\geq 0$, where 
	$$\Delta_n =\begin{cases}
		g\left(S_H(2^{n})/2^n\right) -\frac{1}{2}\left(g\left(S_H^{\sfO}(2^{n-1})/2^{n-1}\right) + g\left(S_H^{\sfE}(2^{n-1})/2^{n-1}\right)\right) \qquad & n\geq 1\\
		g(H_1) \qquad &n = 0.
	\end{cases} $$
\end{assumption}

We briefly comment on the Assumptions \ref{ass: domain} -- \ref{ass:tradeoff}. The descriptions below are mostly pedagogical, and the detailed proofs are deferred to the Appendix (Section \ref{sec:appendix}).

The Domain Assumption \ref{ass: domain} guarantees Algorithm \ref{alg:MLMC} can be implemented. When $g$ does not directly satisfy this assumption, but $\calD \supset \bR^d \setminus B_{\delta}$, then we apply the $\delta$-transformation to enforce the first half of Assumption \ref{ass: domain} holds. All the theoretical results still hold.

The consistency Assumption \ref{ass: consistency} is expected and somewhat necessary. It appears in related works, including \cite{vihola2018unbiased, Blanchet2015UnbiasedMC} explicitly or implicitly. The Law of Large Numbers guarantees $S_H(n)/n \rightarrow m(\pi)$, therefore $g(S_H(n)/n) \rightarrow g(m(\pi))$ due to the continuity. Assumption \ref{ass: consistency} is generally satisfied by the dominated convergence theorem. 

The Smoothness Assumption \ref{ass: differentiable} guarantees both $g$ is smooth enough at a neighborhood of $m(\pi)$, and the derivative of $g$ is  H\"{o}lder continuous. When $g$ is infinitely differentiable, and there is no singularity on a neighborhood of $m(\pi)$, then we expect Assumption \ref{ass: differentiable} to hold with $\alpha \geq 1$.  We emphasize that we only require $Dg$ to be locally H\"{o}lder continuous near $m(\pi)$, which is much weaker than requiring $Dg$ to be globally H\"{o}lder continuous. 

The  Moment Assumption \ref{ass: moment} requires  more than $l$-th moment of the unbiased estimator $H_i$, where $l$ is strictly larger than $2+\alpha$.  When the JOA estimator is used for generating $H_i$, Assumption \ref{ass: moment} generally holds when $f$ has strictly more than $l$-th moment under $\pi$, and the coupling time $\tau$ has a very light tail. The tail behavior of $\tau$ is closely related to the mixing time of the underlying MCMC algorithm. We recall that a $\pi$-stationary Markov chain with transition kernel $P$ is said to be geometrically ergodic if there is a $\gamma\in(0,1)$ and a function $C: \Omega \rightarrow (0,\infty)$ such that $\lVert P^n(x,\cdot) - \pi \rVert_\TV \leq C(x) \gamma^n,$
for $\pi$--a.s. $x$. Geometric ergodicity is a central notion in MCMC theory. There is a large body of literature, including but not limited to, \cite{mengersen1996rates, roberts1996exponential, roberts1996geometric, wang2020theoretical, livingstone2019geometric}, that shows 
a wide family of MCMC algorithms is geometrically ergodic. 

Our result for guaranteeing Assumption \ref{ass: moment} is  the following.
\begin{prop}[Verifying Assumption \ref{ass: moment}, informal]\label{prop: moment, informal}
	Suppose the Markov chain $P$ is $\pi$-stationary and geometrically ergodic, and $f$ is a measurable function with finite $p$-th moment under $\pi$ for any $p>l$. Suppose also  there exists a set $\calS\subset \Omega$, a constant $\tilde \epsilon \in (0,1)$ such that $
	\inf_{(x,y)\in \calS \times \calS} \barP((x,y), \calD) \geq \tilde\epsilon,$
	where $\calD:= \{(x,x): x\in \Omega\}$ is the diagonal of $\Omega\times \Omega$. Then the JOA estimator  $H_k(Y,Z):= f(Y_k) + \sum_{i=k+1}^{\tau-1}(f(Y_i) - f(Z_{i-1}))$ has a finite $l$-th moment, and therefore satisfies Assumption \ref{ass: moment}. 
\end{prop}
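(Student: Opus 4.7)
The plan is to decompose the JOA estimator
\[
H_k(Y,Z) = f(Y_k) + \sum_{i=k+1}^{\tau-1}\bigl(f(Y_i) - f(Z_{i-1})\bigr)
\]
using Minkowski's inequality and bound the $L^l$ norm of each piece separately. The initial term $f(Y_k)$ is controlled directly: since the chains are $\pi$-stationary and geometrically ergodic, $\|f(Y_k)\|_l$ is bounded in terms of moments of $f$ under $\pi$ (and under $\pi_0$), which are finite by the $L^p$ assumption with $p > l$. The substantive work is controlling the random tail sum, where both the length $\tau - k - 1$ and the summands $f(Y_i) - f(Z_{i-1})$ must be handled simultaneously.

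For the tail sum, the key step is
\[
\Bigl\|\sum_{i=k+1}^{\tau-1}\bigl(f(Y_i) - f(Z_{i-1})\bigr)\Bigr\|_l \;\leq\; \sum_{i=k+1}^{\infty} \bigl\|\Indc\{\tau > i\}\cdot\bigl(f(Y_i) - f(Z_{i-1})\bigr)\bigr\|_l.
\]
Fix $p > l$ from the moment hypothesis and choose $q$ satisfying $1/l = 1/p + 1/q$. Hölder's inequality then gives
\[
\bigl\|\Indc\{\tau > i\}\cdot\bigl(f(Y_i) - f(Z_{i-1})\bigr)\bigr\|_l \;\leq\; \bP(\tau > i)^{1/q}\bigl(\|f(Y_i)\|_p + \|f(Z_{i-1})\|_p\bigr).
\]
Since $f \in L^p(\pi)$ and the marginals are $\pi$-stationary (geometric ergodicity supplies a drift function controlling moments uniformly in $i$ for non-stationary initializations), the parenthesized factor is bounded uniformly in $i$ by some constant $M_p$. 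So the whole problem reduces to establishing geometric decay $\bP(\tau > i) \leq C\rho^i$ for some $\rho \in (0,1)$.

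The tail bound on $\tau$ is the main obstacle, and it is exactly what the coupling hypothesis in the statement is designed for: the condition $\inf_{(x,y)\in \calS\times\calS}\barP((x,y),\calD)\geq \tilde\epsilon$ says that whenever the coupled chain $(Y_t,Z_{t-1})$ enters $\calS\times\calS$, it meets in one step with probability at least $\tilde\epsilon$. Geometric ergodicity of $P$ yields a Foster--Lyapunov drift for the product chain toward a level set that can be taken inside $\calS\times\calS$, and combining this drift with the one-step minorization gives, by standard regenerative/renewal arguments (cf.\ the analogous results in \cite{jacob2020unbiased, middleton2020unbiased}), the desired geometric tail $\bP(\tau > i)\leq C\rho^i$.

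Putting the pieces together,
\[
\|H_k\|_l \;\leq\; \|f(Y_k)\|_l + 2M_p \sum_{i=k+1}^{\infty} C^{1/q}\rho^{i/q} \;<\; \infty,
\]
so the $l$-th moment of the JOA estimator is finite, which is what Assumption \ref{ass: moment} demands. The delicate point I expect to spend most care on is the interplay between geometric ergodicity of the marginal kernel $P$ and the small-set/minorization condition on the product kernel $\barP$: one must verify that the Lyapunov function witnessing geometric ergodicity of $P$ lifts to a drift for $(Y_t,Z_{t-1})$ whose sublevel sets sit inside $\calS\times\calS$, so that the single-step coupling probability $\tilde\epsilon$ can actually be cashed in to yield geometric tails of $\tau$. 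Making this link explicit (rather than informally invoking it) is where most of the technical content of the formal proof will live.
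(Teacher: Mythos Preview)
Your proposal is correct and follows essentially the same route as the paper: the paper's argument (Lemma~\ref{lem: tail-moment} combined with Proposition~\ref{prop:exponential tail}) likewise applies Minkowski to the tail sum, then H\"{o}lder with exponents satisfying $1/l = 1/p + 1/q$ to separate the moment of $f(Y_i)-f(Z_{i-1})$ from the tail probability $\bP(\tau>i)$, and finally invokes the geometric-tail result for $\tau$ from \cite{jacob2020unbiased} (which is exactly the drift-plus-minorization argument you sketch). The only organizational difference is that the paper first states the polynomial-tail case and then observes that exponential tails imply polynomial tails of every order, whereas you go straight to the geometric case; this changes nothing of substance.
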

The formal description of the above proposition and the detailed proofs will be deferred to Appendix \ref{subsec: moment assumption and mixing}. It can be viewed as a slightly stronger version of Proposition 3.1 in \cite{jacob2020unbiased}, where the authors established the finite second-order moment.

The Tradeoff Assumption \ref{ass:tradeoff} bounds $\mathbb E[\lVert \Delta_n\rVert^{2s}]$. The condition $2\alpha_s + (s-1)l > 2s$  reflects the tradeoff between the smoothness of $g$ and the moment assumption on $H_i$. Consider the following scenarios: \textit{1:}  Suppose $g$ is at least twice continuously differentiable, and the derivative $Dg$ is Lipschitz continuous. Then we have $\Delta_n = \calO((S_H(2^n)/2^n)^2)$ by Taylor expansion. Meanwhile, the Central Limit Theorem (CLT) shows $\Delta_n = \calO_p(2^{-n})$. Therefore we choose $\alpha_s = s$, and  Assumption \ref{ass:tradeoff} is true for positive $l$. In this case, Assumption \ref{ass:tradeoff} is weaker than  \ref{ass: moment}. \textit{2:} Suppose $g$ is at most of  linear growth, i.e., $\lvert g(x) \rvert \leq c(1 + \lVert x\rVert)$. In this case we can only bound $\Delta_n$ by $\calO_p(2^{-n/2})$ again by the CLT.  We  choose $\alpha_s = s/2$  and it thus requires $l > s/(s-1)$. This is also the assumption in \cite{Blanchet2015UnbiasedMC, blanchet2019unbiased}. \textit{3:} Suppose  $\mathbb E[\lVert \Delta_n\rVert^{2s}]$ is uniformly bounded. Then we expect to choose $\alpha_s = 0$, and therefore $l > 2s/(s-1)$. In summary, stronger smoothness requirements on $g$ result in weaker assumptions on the moment of $H_i$, and vice versa.

Our main theoretical result is as follows.
\begin{theorem}\label{thm:bound-estimator}
	Under Assumption \ref{ass: domain} -- \ref{ass:tradeoff}, let  $\gamma := \min\{\alpha, \frac{\alpha_s}{s} + \frac{(s-1)l}{2s} - 1\} > 0.$ 	if $N\in\{1,2,\ldots\}$  is  geometrically distributed with success parameter $p \in \left(\frac{1}{2},1-\frac{1}{2^{(1+\gamma)}}\right)$, then the estimator $W:= \frac{\Delta_N}{p_N}+ g(H_1)$  described in Algorithm \ref{alg:MLMC} satisfies:
	\begin{enumerate}
		\item $\bE[W] = g(m(\pi))$,
		\item There exists a constant $C$ such that 
		$\var(W) \leq \bE[W^2] \leq Cp^{-1}\frac{2^{-(1+\gamma)}}{1 - \big((1-p)2^{1+\gamma}\big)^{-1}} <\infty.$
		\item The expected computational cost of Algorithm \ref{alg:MLMC} is finite.
	\end{enumerate}
\end{theorem}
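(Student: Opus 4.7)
The overall plan is to establish the key estimate $\mathbb{E}[\Delta_n^2]\leq C\cdot 2^{-(1+\gamma)n}$ for every $n\geq 1$, from which all three claims follow by routine computations with the geometric distribution: (2) reduces to summing a geometric series, (3) follows from $\mathbb{E}[2^N]<\infty$ whenever $p>1/2$, and (1) follows by telescoping once the absolute summability $\sum_n \mathbb{E}|\Delta_n|<\infty$ justifies an interchange of sum and expectation. Throughout I write $m := m(\pi)$, $\bar H := S_H(2^n)/2^n$, $\bar H^O := S_H^O(2^{n-1})/2^{n-1}$, $\bar H^E := S_H^E(2^{n-1})/2^{n-1}$, and exploit the antithetic identity $\bar H = (\bar H^O + \bar H^E)/2$.

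\textbf{Main step (variance of $\Delta_n$).} I split on the event $G_n := \{\|\bar H - m\|\vee \|\bar H^O - m\|\vee \|\bar H^E - m\|<\varepsilon\}$, where $\varepsilon$ is the neighborhood radius from Assumption \ref{ass: differentiable}. On $G_n$, a first-order Taylor expansion of $g$ around $m$ combined with local H\"older continuity of $Dg$ yields the remainder estimate $|g(y) - g(m) - Dg(m)\cdot(y-m)|\leq c\|y-m\|^{1+\alpha}$. Crucially, both the constant term $g(m)$ \emph{and} the linear term $Dg(m)\cdot(\cdot - m)$ cancel when $g(\bar H)$ is combined antithetically with $\tfrac12(g(\bar H^O)+g(\bar H^E))$, precisely because $\bar H = (\bar H^O + \bar H^E)/2$. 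Hence on $G_n$,
$$|\Delta_n|\leq c\bigl(\|\bar H - m\|^{1+\alpha} + \tfrac12\|\bar H^O - m\|^{1+\alpha} + \tfrac12\|\bar H^E - m\|^{1+\alpha}\bigr).$$
Marcinkiewicz--Zygmund together with Assumption \ref{ass: moment} bounds $\mathbb{E}[\|\bar H - m\|^{2(1+\alpha)}]\leq C_1\cdot 2^{-(1+\alpha)n}$, and likewise for the two sub-averages, so $\mathbb{E}[\Delta_n^2\Indc_{G_n}]\leq C_2\cdot 2^{-(1+\alpha)n}$. On $G_n^c$ I apply H\"older's inequality with conjugate exponents $(s,s/(s-1))$: Assumption \ref{ass:tradeoff} controls $\mathbb{E}|\Delta_n|^{2s}\leq \calC_s\cdot 2^{-\alpha_s n}$, and Markov's inequality with Marcinkiewicz--Zygmund gives $\Prob(G_n^c)\leq C\varepsilon^{-l}\cdot 2^{-ln/2}$, yielding
$$\mathbb{E}[\Delta_n^2\Indc_{G_n^c}]\leq C_3\cdot 2^{-n[\alpha_s/s + (s-1)l/(2s)]}.$$
The exponent exceeds $1$ exactly by the inequality $2\alpha_s + (s-1)l > 2s$ in Assumption \ref{ass:tradeoff}. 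Combining the two contributions delivers the target bound with $\gamma = \min\{\alpha,\,\alpha_s/s + (s-1)l/(2s) - 1\}$.

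\textbf{Assembling the claims and main obstacle.} For (2), $\mathbb{E}[W^2]\leq 2\mathbb{E}[g(H_1)^2] + 2\sum_{n\geq 1}\mathbb{E}[\Delta_n^2]/p_n$; the first term is finite by Assumption \ref{ass:tradeoff} at $n=0$, and the geometric series $\sum_n 2^{-(1+\gamma)n}/((1-p)^{n-1}p)$ converges (yielding the stated closed form) precisely when $p<1-2^{-(1+\gamma)}$. For (1), the symmetry $\mathbb{E}[g(\bar H^O)] = \mathbb{E}[g(\bar H^E)]$ makes $\mathbb{E}[\Delta_n] = \mathbb{E}[g(\bar H)] - \mathbb{E}[g(S_H(2^{n-1})/2^{n-1})]$ telescope, $\sum_n\mathbb{E}|\Delta_n|\leq \sum_n(\mathbb{E}[\Delta_n^2])^{1/2}<\infty$ justifies Fubini for $\mathbb{E}[\Delta_N/p_N]$, and Assumption \ref{ass: consistency} supplies the limiting value $g(m)$. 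For (3), the cost is bounded by a constant multiple of $\mathbb{E}[2^N] = p/(1-2(1-p))$, finite when $p>1/2$. The hard part is unmistakably the good-event estimate: the antithetic design is indispensable, for without cancellation of the linear Taylor term $|\Delta_n|$ would merely be $\calO_p(2^{-n/2})$, giving $\gamma=0$ and hence an empty admissible range for $p$ (the variance condition $p<1-2^{-(1+\gamma)}=1/2$ would contradict the cost condition $p>1/2$). Handling only \emph{local} rather than global H\"older continuity forces the good/bad event split, and the bad-event bound is precisely where Assumption \ref{ass:tradeoff}'s calibrated trade-off between smoothness and moments is used.
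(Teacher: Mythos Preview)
Your proposal is correct and follows essentially the same approach as the paper: the key estimate $\mathbb{E}[\Delta_n^2]\le C\,2^{-(1+\gamma)n}$ (Lemma~\ref{lem: delta-square}) is obtained via the same good/bad event split, with Marcinkiewicz--Zygmund on the good set and H\"older plus Assumption~\ref{ass:tradeoff} plus Markov on the bad set, and Statements 1--3 are then assembled exactly as in the paper's proof. The only cosmetic difference is that on the good event you bound $|\Delta_n|$ via the Taylor remainder $|g(y)-g(m)-Dg(m)(y-m)|\le c\|y-m\|^{1+\alpha}$ applied separately to $\bar H,\bar H^O,\bar H^E$, whereas the paper applies the mean value theorem to the two halves of $\Delta_n$ and arrives at the single bound $|\Delta_n|\le c\,\|\bar H^O-\bar H^E\|^{1+\alpha}$; both routes yield the same $2^{-(1+\alpha)n}$ rate.
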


The proof of Theorem \ref{thm:bound-estimator} relies on the following key lemma to bound $\Delta_n$:

\begin{lemma}\label{lem: delta-square}
	We have $\bE[\lvert \Delta_n\rvert ^2] = C2^{-(1+\gamma)n},$ where $\gamma = \{\alpha, \frac{\alpha_s}{s} + \frac{(s-1)l}{2s} - 1\} > 0,$ 
	and	$C = C(m,l, \epsilon, s, \alpha)$ is a constant provided that Assumption \ref{ass: domain} -- \ref{ass:tradeoff} are satisfied.
\end{lemma}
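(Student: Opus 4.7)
My plan is to split the second moment according to whether the sample averages involved in $\Delta_n$ stay within the smoothness neighborhood $B_\varepsilon(m(\pi))$ from Assumption~\ref{ass: differentiable}. On the good event, the antithetic difference design will allow me to extract second-order Taylor cancellations; on the bad event, I would interpolate the Tradeoff Assumption~\ref{ass:tradeoff} against a polynomial tail bound coming from Assumption~\ref{ass: moment}.

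Setting $\bar H^{\sfO} := S_H^{\sfO}(2^{n-1})/2^{n-1}$ and $\bar H^{\sfE} := S_H^{\sfE}(2^{n-1})/2^{n-1}$ (so that $S_H(2^n)/2^n = (\bar H^{\sfO} + \bar H^{\sfE})/2$), I would define the good event $A_n := \{\|\bar H^{\sfO} - m(\pi)\| \leq \varepsilon,\ \|\bar H^{\sfE} - m(\pi)\| \leq \varepsilon\}$. On $A_n$, I would Taylor expand $g$ around $m(\pi)$ with integral remainder and bound the remainder by $\tfrac{c}{1+\alpha}\|x - m(\pi)\|^{1+\alpha}$ using the local H\"older continuity of $Dg$. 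The key point is that the constant and linear terms cancel in $\Delta_n = g((\bar H^{\sfO}+\bar H^{\sfE})/2) - \tfrac{1}{2}g(\bar H^{\sfO}) - \tfrac{1}{2}g(\bar H^{\sfE})$ exactly because of the antithetic design, leaving $|\Delta_n|\mathbf{1}_{A_n} \leq C(\|\bar H^{\sfO} - m(\pi)\|^{1+\alpha} + \|\bar H^{\sfE} - m(\pi)\|^{1+\alpha})$ after absorbing the middle term by convexity of $x\mapsto x^{1+\alpha}$. I would then apply the Marcinkiewicz--Zygmund/Rosenthal inequality to $\bar H^{\sfO} - m(\pi)$ at exponent $\min(2(1+\alpha), l)$, interpolating via $\|x\|^{2(1+\alpha)} \leq \varepsilon^{2(1+\alpha)-l}\|x\|^l$ on $A_n$ when the moment budget is tight, to conclude that $\bE[|\Delta_n|^2\mathbf{1}_{A_n}] \leq C\cdot 2^{-n(1+\min(\alpha,(l-2)/2))}$.

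For the complement, I would apply H\"older with exponents $s$ and $s/(s-1)$ to obtain $\bE[|\Delta_n|^2\mathbf{1}_{A_n^c}] \leq (\bE[|\Delta_n|^{2s}])^{1/s}\,\bP(A_n^c)^{(s-1)/s}$. The first factor is controlled by $\calC_s^{1/s}\,2^{-\alpha_s n/s}$ directly from Assumption~\ref{ass:tradeoff}, while for $\bP(A_n^c)$ I would combine a union bound with Markov's inequality at level $l$ and Rosenthal (which is legitimate because $l > 2+\alpha > 2$) to obtain $\bP(A_n^c) \leq C'\cdot 2^{-(n-1)l/2}$. Multiplying the two factors gives the rate $2^{-n(1+\gamma_2)}$ with $\gamma_2 := \alpha_s/s + (s-1)l/(2s) - 1$, which is strictly positive by the tradeoff inequality $2\alpha_s+(s-1)l>2s$. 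Combining the two pieces then yields $\bE[|\Delta_n|^2] \leq C\cdot 2^{-n(1+\gamma)}$ with $\gamma = \min\{\alpha,\gamma_2\}$; in the regime $l \geq 2(1+\alpha)$ the good-event rate $2^{-n(1+\alpha)}$ dominates and the complement bound is binding whenever $\gamma_2 < \alpha$, while in the regime $l < 2(1+\alpha)$ the interpolated good-event rate is $2^{-n\cdot l/2}$, which I would verify is still at least $2^{-n(1+\gamma_2)}$ in the range of $s$ permitted by the tradeoff assumption, so taking the minimum recovers the claimed exponent.

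The main technical obstacle I anticipate is reconciling the three parameters $\alpha$, $l$, $s$: the H\"older remainder naturally wants $2(1+\alpha)$ moments from $H_1$, which may exceed the assumed $l$, forcing the interpolation step on $A_n$, and one must then verify in every regime that the rate achieved on the good event is at least as fast as $2^{-n(1+\gamma)}$ so that $\gamma$ actually emerges from the minimum of the two pieces. Apart from this case analysis, the remaining ingredients --- Taylor expansion with integral remainder, H\"older, Markov, and Rosenthal's inequality for sums of independent random variables --- are all standard, and the final constant $C$ depends only on $m,l,\varepsilon,s,\alpha$ as claimed.
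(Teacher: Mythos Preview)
Your proposal follows essentially the same strategy as the paper: split on the event that both half-sample averages lie within $\varepsilon$ of $m(\pi)$; on the bad event apply H\"older's inequality with exponents $(s,s/(s-1))$, invoke Assumption~\ref{ass:tradeoff} for the $|\Delta_n|^{2s}$ factor, and bound the probability via Markov at level $l$ plus Marcinkiewicz--Zygmund; on the good event exploit the antithetic cancellation and apply Marcinkiewicz--Zygmund at order $2(1+\alpha)$.

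The only noteworthy difference is how the good-event bound is extracted. The paper does not Taylor-expand around $m(\pi)$; instead it applies the mean value theorem directly between the three sample averages, writing
\[
\Delta_n=\tfrac14\bigl(Dg(\xi_n^{\sfO})-Dg(\xi_n^{\sfE})\bigr)\bigl(\bar H^{\sfE}-\bar H^{\sfO}\bigr),
\]
and then uses the local H\"older continuity of $Dg$ to get $|\Delta_n|\leq c\,\|\bar H^{\sfO}-\bar H^{\sfE}\|^{1+\alpha}$ in one stroke. Your route---cancelling the affine part of a Taylor expansion about $m(\pi)$ to leave three remainder terms of size $\|\cdot-m(\pi)\|^{1+\alpha}$---reaches the same $2^{-n(1+\alpha)}$ rate, just with a slightly different intermediate quantity. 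Your additional caution about the regime $l<2(1+\alpha)$ and the interpolation $\|x\|^{2(1+\alpha)}\leq\varepsilon^{2(1+\alpha)-l}\|x\|^{l}$ on the good event is a subtlety the paper simply glosses over (it asserts the $2(1+\alpha)$-th moment is finite because $l>2+\alpha$, which is not literally what Assumption~\ref{ass: moment} guarantees); your patch is the natural one.
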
 

The proof is deferred to Appendix \ref{subsec: bounding delta}, but the main idea is to use the antithetic design to cancel the linear term in the Taylor expansion. This cancellation in turn gives us $\bE[\lvert \Delta_n\rvert ^2] = \calO(2^{-(1+\Omega(1))n}),$ which has an $\calO(2^{-(\Omega(1))n})$ gain over the canonical rate from the CLT.  With Lemma \ref{lem: delta-square} in hand, we are ready to show Theorem \ref{thm:bound-estimator}.
\begin{proof}[Proof of Theorem \ref{thm:bound-estimator}]
	We will first show Statement $1$ assuming Statement $2$ holds. Then we show both Statement $2$ and $3$  holds. 
	
	\textit{Proof of Statement $1$}: Suppose $W$ has a finite second moment, then the conditional distribution $\bE[W| N]$ is well defined (see Section 4.1 of \cite{durrett2019probability}). By the law of iterated expectation:
$\bE[W] = \bE\big[\bE[W\mid N]\big] = \bE[g(H_1)] + \bE\big[\frac{\bE[\Delta_n\mid N]}{p_N}\big] =  \bE[g(H_1)] + \bE\big[d_N/p_N\big],$
	where $d_n = \bE[g(S_H(2^n)/2^n)] - \bE[g(S_H(2^{n-1})/2^{n-1})]$. We can further calculate $\bE\big[d_N/p_N\big]$:
	$
	\bE\big[d_N/p_N] = \sum_{i=1}^\infty (d_i/p_i) p_i = \sum_{i=1}^\infty d_i.
	$
	Therefore 
	$\bE[W] = \lim_{n\rightarrow\infty} \bE[g(S_H(2^n)/2^n)] = g(m(\pi)),$ as desired. The last equality uses Assumption \ref{ass: consistency}.
	
	\textit{Proof of Statement $2$}: Since $
	\bE[W^2] \leq 2 \big(\bE[g(H_1)^2] + \bE\big[\Delta_N^2/p_N^2\big]\big)
	$,
	it suffices to show $ \bE\big[\Delta_N^2/p_N^2\big] <\infty$. We have
	$
	\bE\big[\Delta_N^2/p_N^2\big] = \sum_{n=1}^\infty \bE[\Delta_n^2] (1-p)^{-n+1} p^{-1}.
	$
	By Lemma \ref{lem: delta-square},
	\begin{align*}
		\bE\big[\frac{\Delta_N^2}{p_N^2}\big] \label{eqn: variance}\leq Cp^{-1}(1-p)\sum_{n=1}^\infty 2^{-(1+\gamma)n} (1-p)^{-n}& = Cp^{-1}(1-p)\sum_{n=1}^\infty \big((1-p)2^{1+\gamma}\big)^{-n}\\
		&= Cp^{-1}\frac{2^{-(1+\gamma)}}{1 - \big((1-p)2^{1+\gamma}\big)^{-1}} < \infty,
	\end{align*}
	where the last inequality follows from $(1-p) > 2^{-(\gamma+1)}$.
	
	\textit{Proof of Statement $3$}:  Let $C_H$ be the computation cost for implementing the unbiased MCMC subroutine $\calS$ once. It is shown in \cite{jacob2020unbiased} that $C_H < \infty$. The computation cost for implementing Algorithm \ref{alg:MLMC} essentially comes from $2^N$ calls of the subroutine $\calS$, where $N\sim \Geo(p)$. Therefore it suffices to show $2^N$ has a finite expectation. We calculate
	\begin{align*}
		\bE[2^N] = \sum_{n=1}^\infty 2^n p(n) =   \sum_{n=1}^\infty 2^n (1-p)^{n-1}p = \frac{2p}{2p-1} < \infty,
	\end{align*}
	where the last inequality follows from $p > 1/2$.
\end{proof}

Theorem \ref{thm:bound-estimator} immediately implies the following corollary on the computational cost, with proof given in Appendix \ref{subsec:other proof}. 
The computation cost $O(1/\epsilon^2)$ is shown to be rate-optimal \cite{heinrich1992lower,dagum2000optimal} for  Monte Carlo estimators. 

\begin{corollary}\label{cor:application_of_unbiasedness}
	Under Assumption \ref{ass: domain} -- \ref{ass:tradeoff}, for any $\varepsilon >0$, we can construct an estimator $\tilde W$ within expected computational cost $\calO(1/\epsilon^2)$, such that the mean square error between $\tilde W$ and the ground truth $g(m(\pi))$ is bounded by $\epsilon^2$, i.e.
	$\bE[(\tilde W - g(m(\pi)))^2] \leq \epsilon^2.$
\end{corollary}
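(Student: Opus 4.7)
The plan is to invoke Theorem~\ref{thm:bound-estimator} directly and then reduce variance by averaging. Let $W$ be the estimator produced by Algorithm~\ref{alg:MLMC} with $p$ chosen as in Theorem~\ref{thm:bound-estimator}. Statement~1 of that theorem gives $\bE[W]=g(m(\pi))$, Statement~2 gives a finite bound $\sigma^2 := \var(W) < \infty$, and Statement~3 gives a finite expected computational cost $C_W := \bE[\text{cost of one run of Algorithm~\ref{alg:MLMC}}] < \infty$. All three of these quantities depend only on $g$, $\pi$, $f$, and the chosen $p$, and are fixed constants independent of $\varepsilon$.

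Next, I would define $\tilde W$ as the sample mean of $n := \lceil \sigma^2/\varepsilon^2 \rceil$ independent copies $W_1, \ldots, W_n$ of $W$, each produced by an independent call to Algorithm~\ref{alg:MLMC}. By the linearity of expectation, $\bE[\tilde W] = g(m(\pi))$, so $\tilde W$ remains unbiased and the mean-squared error coincides with the variance:
\begin{equation*}
\bE\big[(\tilde W - g(m(\pi)))^2\big] = \var(\tilde W) = \frac{\sigma^2}{n} \leq \varepsilon^2,
\end{equation*}
which gives the stated accuracy guarantee.

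For the cost, since the $n$ copies are independent and each has expected cost $C_W$, the total expected computational cost is $n \cdot C_W \leq (\sigma^2/\varepsilon^2 + 1) \cdot C_W = \calO(1/\varepsilon^2)$, as required.

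There is no real obstacle here; the result is essentially a corollary of the finiteness statements in Theorem~\ref{thm:bound-estimator} combined with the trivial parallel averaging trick that unbiased estimators enable. The only thing worth remarking is that, because the individual cost of each $W_i$ is itself a random variable (as the geometric level $N$ varies across calls), one should interpret ``expected computational cost'' in the appropriate Wald-style sense: the expected total cost of running $n$ independent copies is $n\,\bE[\text{cost of }W]$ by independence and linearity, which is exactly what we used.
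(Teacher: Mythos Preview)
Your proof is correct and follows essentially the same approach as the paper: invoke Theorem~\ref{thm:bound-estimator} for unbiasedness, finite variance, and finite expected cost, then average $n \approx \sigma^2/\varepsilon^2$ independent copies. If anything, your version is slightly more careful, since you take $n = \lceil \sigma^2/\varepsilon^2\rceil$ to ensure an integer sample size and explicitly note the Wald-style justification for the expected total cost.
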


Now we discuss the choice of the parameter $p$  when implementing Algorithm \ref{alg:MLMC} in practice. Theorem \ref{thm:bound-estimator} suggests every $p \in (1/2, 1 - 1/2^{1+\gamma})$ guarantees unbiasedness, finite variance, and finite computational cost. On the other hand, a larger value of $p$ yields a faster completion time but a larger variance for obtaining one estimator using Algorithm \ref{alg:MLMC}. The actual choice of $p$ depends on the user's objective and the number of available processors. Here we discuss two practical scenarios:
\begin{itemize}
    \item Suppose the user has sufficiently many processors and wants to minimize the completion time. The users should choose the parameter $p$ as larger as possible (but no larger than the theoretical limit $1 - 1/2^{1+\gamma}$) to fully utilize their parallel computation capacity. To be precise, for fixed $p \in (1 - 1/2^{1+\gamma})$ and error tolerance level $\epsilon > 0$, `sufficiently many'  means more than $\var_p(W)/\epsilon^2$ processors, where $\var_p(W)$ is the variance of the output of Algorithm \ref{alg:MLMC} with input parameter $p$. In practice, the quantity $\var_p(W)$ is usually unknown to the users as a-priori. Nevertheless, users can either use the upper bound in Theorem \ref{thm:bound-estimator} as a conservative estimate or run some pre-experiments to estimate $\var_p(W)$. 
    \item Suppose the user wants to minimize the total computational cost over all the processors (which is different from the completion time when multiple processors are available). Then the objective is to minimize the work-normalized variance $\tilde{\sigma}_p^2(W)$  defined in  \cite{glynn1992asymptotic}, which is the product of the computation cost and the variance of an individual estimator. Then it follows from the above calculation that the $\tilde{\sigma}^2(W)$ is upper bounded by a constant multiple of $
	\sum_{n=1}^\infty \big((1-p)2^{1+\gamma}\big)^{-n} \times \sum_{n=1}^\infty \big(2(1-p)\big)^{n}.$ By Cauchy-Schwarz inequality, this upper bound can be minimized by choosing $p = 1 - 2^{-1 - \frac \gamma 2}$. When $\gamma = 1$, $p$ can be chosen as $ 1 - 2^{-\frac32}\approx 0.646$, recovering the result in \cite{Blanchet2015UnbiasedMC}. 
\end{itemize}

Finally, we present two Central Limit Theorems (CLTs) of our estimator. These results directly follow the standard arguments from \cite{glynn1991analysis, Blanchet2015UnbiasedMC}. These results show our estimator has the  `square-root' convergence rate. The CLTs can also help establish confidence intervals.

\begin{itemize}
	\item When the number of estimators $W_1, W_2, \ldots, W_n,\ldots$ in Algorithm \ref{alg:MLMC} goes to infinity, we have $
	\big(\frac{\sum_{i=1}^n W_i}{\sqrt n} - g(m(\pi)) \big)\rightarrow \sfN(0, \var(W_1))$ as $n\rightarrow \infty. $
	\item Given a fixed  budget $b$, let $N(b)$ be the number of $i.i.d.$ estimators $W_1, W_2, \ldots, W_{N(b)}$ that can be generated by time $b$. Then we have $
	\sqrt{b}\cdot\big(\frac{\sum_{i=1}^{N(b)} W_i}{N(b)}- g(m(\pi))\big) \rightarrow \sfN(0 , \tilde{\sigma}^2(W))$ as $b\rightarrow \infty$,
	where $\tilde{\sigma}^2(W)$ is the work-normalized variance defined above.
\end{itemize}

\section{Numerical examples}\label{sec:numerical}
Now we investigate the empirical performance of the proposed method with several examples. We first implement the algorithm on a multivariate Beta distribution and then on a $2$-D Ising model with periodic boundaries. In both examples, we compare the performance of our estimator with the standard Monte Carlo estimator when multiple processors are available. Finally, we estimate the nested expectations using a small real-data example modeled by the cut-distribution. Additional numerical experiment for estimating the inverse of natural statistics of the Ising model is presented in Appendix \ref{sec: extra experiment}. Throughout this section, the standard Monte-Carlo (or MCMC/Metropolis--Hastings/Gibbs sampler) estimator for $g(\bE_\pi[f])$ stands for the `plug-in' estimator  $g(\sum_{i=l}^n f(X_i)/n)$, where $\{X_i\}$ follows some MCMC algorithm targeting at $\pi$ with a burn-in period $l$. Fix any quantity $\mu$ that users wish to estimate, we define the relative error of an estimator $X$ as $\sqrt{\bE[(X - \mu)^2]}/|\mu|$.

\subsection{Product of inverse expectations}\label{subsec:product beta}
We begin with a toy model with known ground truth. Let $X = (X_1,\cdots, X_K)$ be a random vector with  independent components  $X_i \sim \mathsf{Beta}(i,1)$.  We are interested in the product of the inverse expectation:
$g_K\left(\bE[X]\right) =  \prod_{i=1}^K1/\bE[X_i]$. Standard calculation shows
$g_K\left(\bE[X]\right)  = K+1.$ Meanwhile, $g_K$ cannot be expressed as an expectation, so existing methods fail to provide unbiased estimators.

We apply our method to this problem. We first test the sensitivity of Algorithm \ref{alg:MLMC} to the parameter $p$, the success probability of the geometric distribution. Setting $K = 8,$ and using the R package `unbiasedmcmc' in \cite{jacob2020unbiased} for estimating $\bE[X_i]$ \footnote{Here the Beta distribution can be perfectly sampled, and there is no need to use the JOA estimator  in practice. However, for illustrating our general framework, we still implement the JOA estimators for estimating $\bE[X_i]$ via couplings of MCMC algorithms.}, we generate  $5\times 10^{4}$ unbiased estimates of $g_K\left(\cdot\right)$ using Algorithm \ref{alg:MLMC}  with parameter $p$ ranging from $0.6$ to $0.8$, $k = 4\times 10^4$ and $m = 4k$. Figure \ref{fig:beta_differentp} reports the relative and standard errors for each $p$. The plot shows that the estimates are pretty accurate and vary little for different $p$. We set $p=0.7$ in the following experiments to ensure high accuracy and efficient computation. Then we let $K$ change from $1$ to $8$ and test the accuracy of our method. For each $K$, we implement Algorithm \ref{alg:MLMC} for $5\times 10^4$ times independently to generate unbiased estimators of $g_K$. Our point estimates and the corresponding standard errors are reported in Figure \ref{fig:beta_differentk}. It is clear that the point estimates are highly accurate and fit the ground truth almost perfectly. The standard error gets larger when $K$ increases, indicating a higher uncertainty under higher dimensionality.

\begin{figure}[h]
	\begin{subfigure}{.5\textwidth}
		\centering
		% include first image
		\includegraphics[width=.8\linewidth]{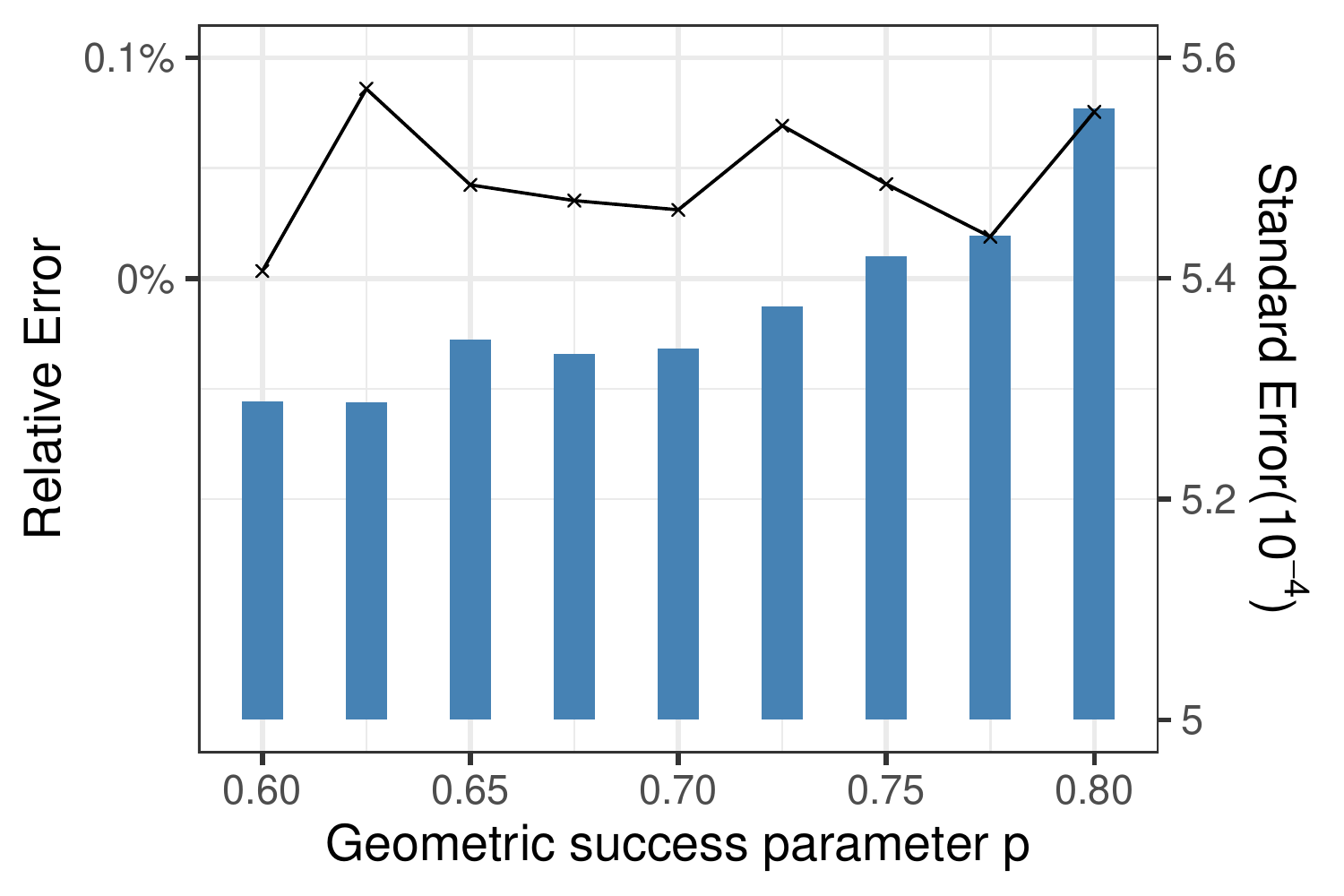}  
		\caption{$p$ changes, $K = 8$}
		\label{fig:beta_differentp}
	\end{subfigure}
	\hfill
	\begin{subfigure}{.5\textwidth}
		\centering
		% include second image
		\includegraphics[width=.8\linewidth]{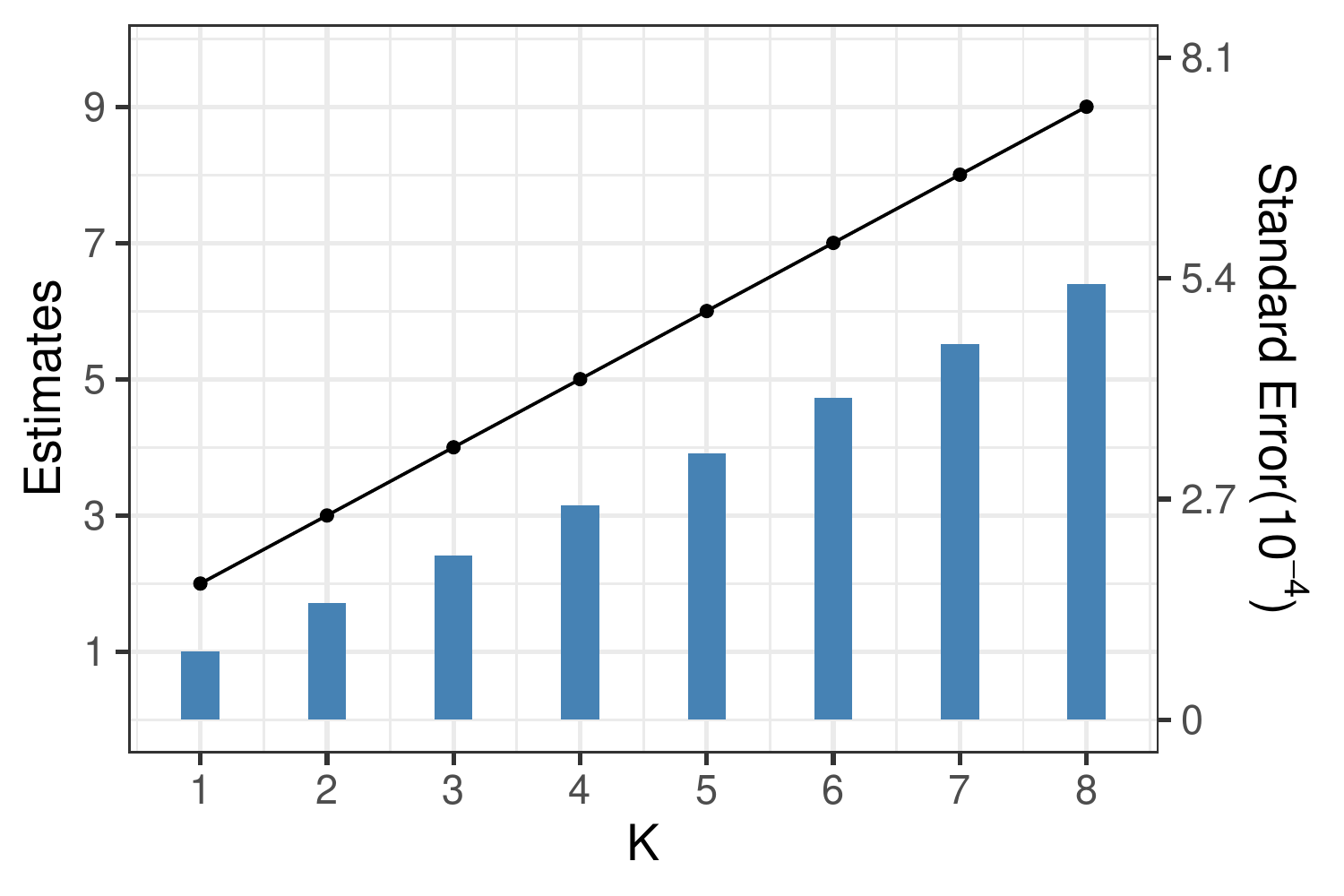}  
		\caption{$K$ changes, $p = 0.7$}
		\label{fig:beta_differentk}
	\end{subfigure}
	\caption{The relative error (line plot) and standard error (histogram) plots for $g_K$ based on $5\times 10^4$ unbiased estimators. Left: Fix dimension $K = 8$, parameter $p$ varies from $0.6 - 0.8$. Right: Fix parameter $p = 0.7$, dimension $K$ varies from $1$ to $8$.}
	\label{fig:multibeta}
\end{figure}

Now we compare our estimator with a Metropolis-Hastings estimator to show the performance of our method in the parallel regime. To make a fair computation, we use the same random-walk transition kernel in both the unbiased MCMC subroutine $\cal S$ of Algorithm \ref{alg:MLMC} and the MCMC algorithm. Since Algorithm \ref{alg:MLMC} takes a random computation time per run, we follow \cite{nguyen2022many} to ensure equal computation time across processors as follows: On each processor, we always first run  Algorithm \ref{alg:MLMC} and record its running time. Then we run the standard MCMC algorithm for the same time and discard the first $10\%$ samples as burn-in. This way, the two algorithms have the same computational cost for each processor. Finally, we run both methods independently on multiple processors and compare their accuracy after averaging their results respectively over all the processors.

Figure \ref{fig:beta_box} depicts the different bias/variance behaviors between a single standard MCMC estimator and our unbiased estimator. A standard MCMC estimator is typically slightly biased but with a smaller variance. Here, the MCMC estimator slightly overestimates the ground truth. In contrast, our unbiased estimator completely eliminates the bias but has a larger variance. For a single estimator, the standard MCMC estimator has a smaller MSE. 

Nevertheless, the benefit of no bias becomes significant in the parallel regime, as averaging over multiple processors significantly decreases the variance but keeps the bias the same. As shown in Figure \ref{fig:beta_error}, when we increase the number of processors, the relative error of our unbiased estimators eventually vanishes. In contrast, the error of the MCMC estimator will never converge to $0$ due to its systematic bias. Here the relative error from the systematic bias of MCMC is around $0.5\%$. In this example, our estimator becomes more accurate than the standard MCMC estimator when there are more than $2500$ processors. 

\begin{figure}[h] 
	\begin{subfigure}{.5\textwidth}
		\centering
		% include first image
		\includegraphics[width=.8\linewidth]{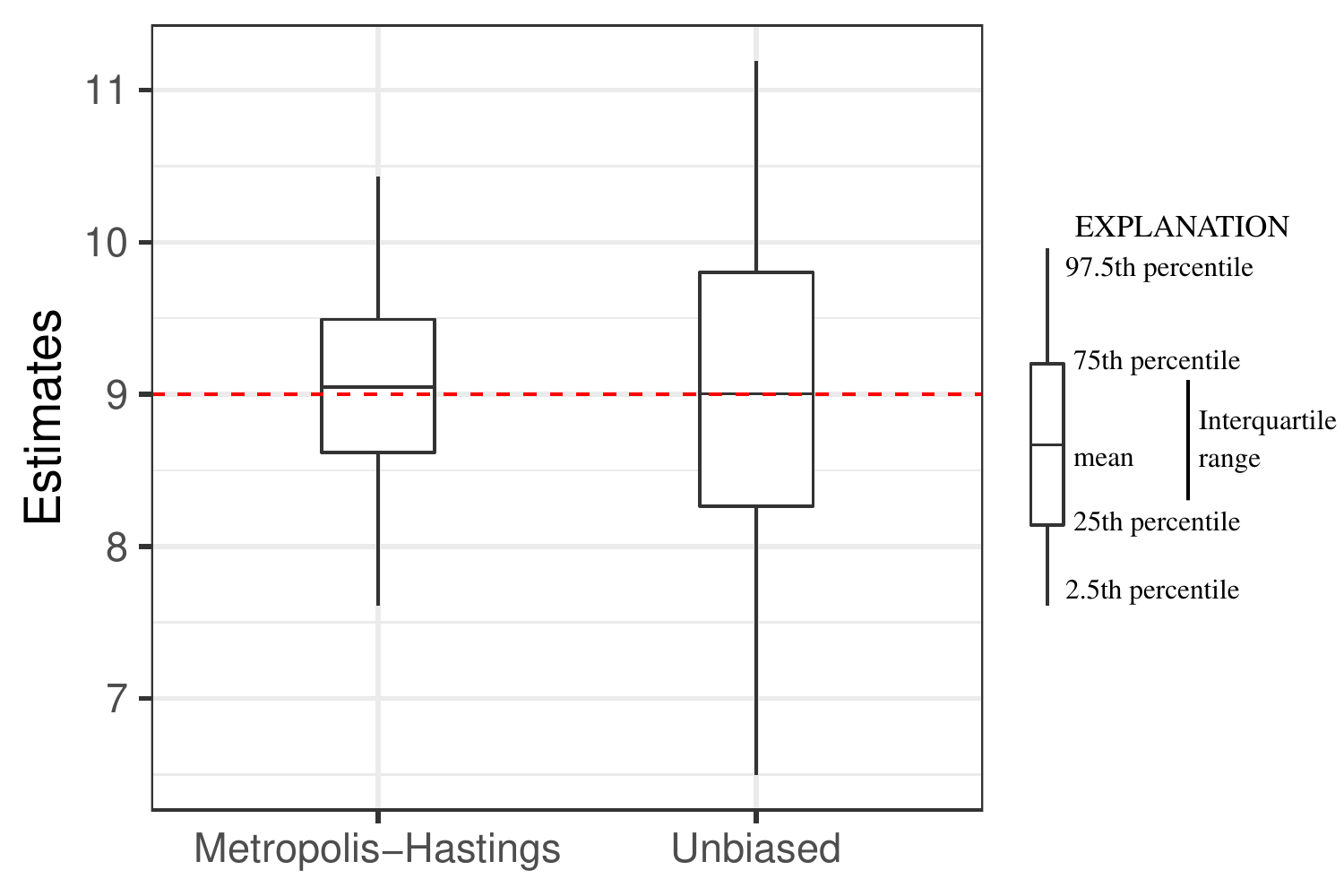}  
		\caption{Box plot of estimators for $g_K$, $K= 8$}
            \label{fig:beta_box}
	\end{subfigure}
	\hfill
	\begin{subfigure}{.5\textwidth}
		\centering
		% include second image
		\includegraphics[width=.8\linewidth]{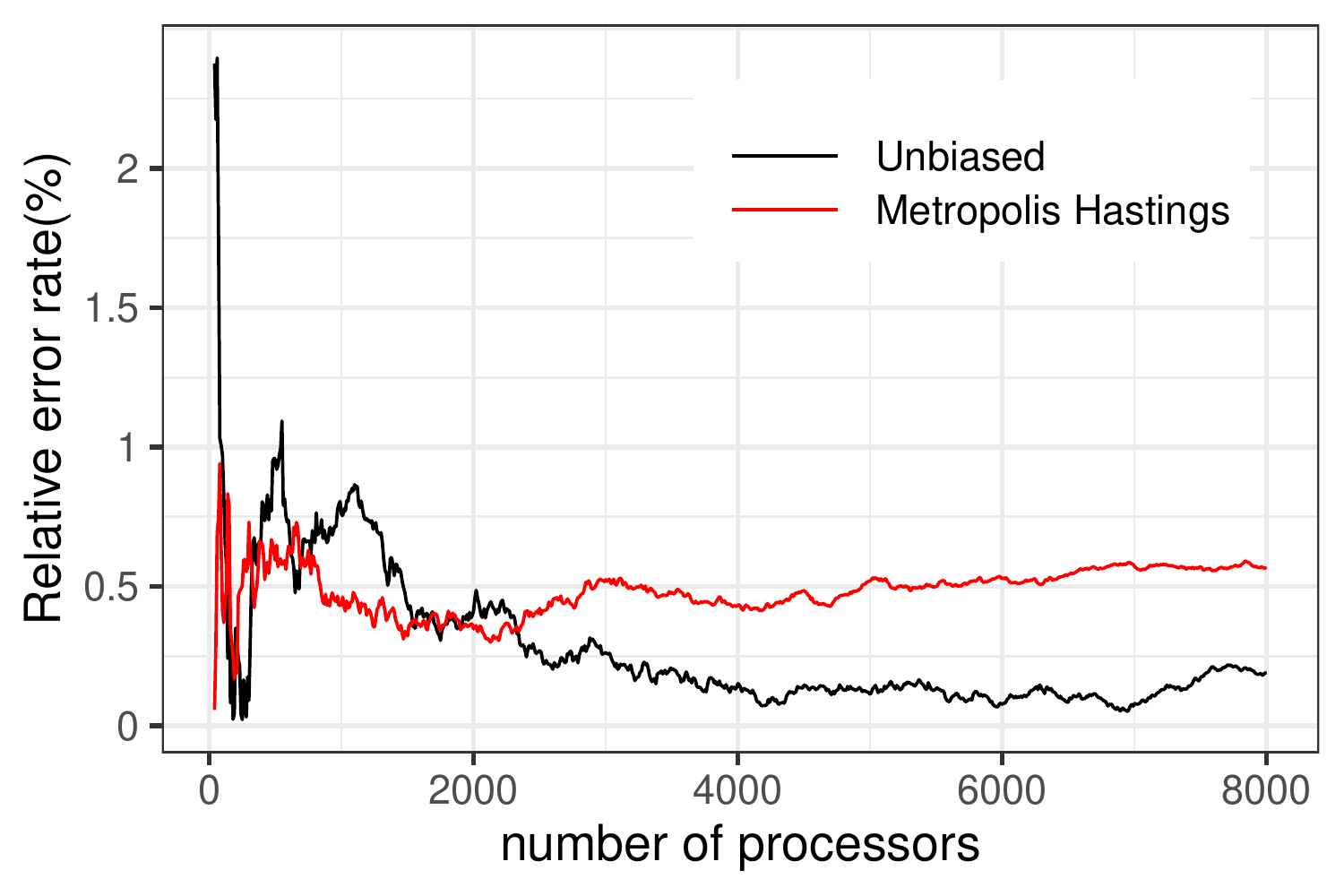}  
		\caption{Empirical relative error for $g_K$, $K = 8$}
            \label{fig:beta_error}
	\end{subfigure}
	\caption{Left: Box plot of $5\times 10^4$ estimators generated 
        by Metropolis-Hastings and Algorithm \ref{alg:MLMC}. The red dashed line represents the true value. Right: Relative error of the standard MCMC estimator (red) and unbiased estimator (black) as a function of the number of processors.}
        \label{fig:beta_compare}
\end{figure}

\subsection{Ising model}
We examine our method on the $2$-D square-lattice Ising model. Let $\Lambda$ be a set of $n\times n$ lattice sites with periodic boundary conditions. A spin configuration $\sigma\in \{-1,1\}^{n\times n}$ is an assignment of spins to all the lattice vertices. A $2$-D Ising model is a probability distribution over all the spin configurations, defined as $
p_{\theta}(\sigma) = \exp(-\theta H(\sigma))/Z(\theta)$.
Here  $H(\sigma) = -\sum_{\langle I,J\rangle}\sigma_i\sigma_j$ is the `the Hamiltonian function', where the sum is over all pairs of neighboring  sites. The normalizing constant $Z(\theta)  = \sum_\sigma \exp(-\theta H(\sigma))$ is the partition function. The parameter $\theta\geq 0$ is  interpreted as the inverse temperature in  physics. 

Now we consider the problem of estimating the ratio of normalizing constant $Z(\theta_1)/Z(\theta_2)$. The problem,  also known as estimating the free energy differences, is of great interest in computational physics and statistics  \citep{bennett1976efficient,meng1996simulating}. Since the Ising model is computationally intensive to be sampled perfectly (see \cite{propp1996exact}), unbiased estimators of $Z(\theta_1)/Z(\theta_2)$ are generally unavailable in the previous literature.  

We will use our method to construct unbiased estimators of $Z(\theta_1)/Z(\theta_2)$.
First, we notice that the ratio can be written as $
Z(\theta_1)/Z(\theta_2) = \bE_{\theta_2}[e^{\theta_2 H(\sigma)}]/\bE_{\theta_1}[e^{\theta_1 H(\sigma)}].$
For fixed $\theta_1,\theta_2$, we call the JOA estimators for unbiased estimation of $Z(\theta_1)$ and $Z(\theta_2)$ independently and feed them into Algorithm \ref{alg:MLMC} for unbiased estimators of the ratio. The JOA estimators can be obtained via coupling two Gibbs samplers using the package `unbiasedmcmc' in \cite{jacob2020unbiased}. We implement our method using  $n=12, p = 0.7, k = 4\times 10^3, m = 2k,$ $\theta_1\in \{0.02,0.03,\dots,0.18\}$ and $\theta_2 \in \{0.02,0.10\}$ on a CPU-based computer cluster. For each combination of $(\theta_1, \theta_2)$, we use our unbiased method to generate $2\times 10^4$ unbiased estimators each. We present results in Figure \ref{fig:ising_estimates}. The solid line represents our estimates for $Z(\theta_{1})/Z(0.02)$ and dash line represents our estimates for $Z(\theta_{1})/Z(0.10)$. For comparison, we also run $2\times 10^4$ independent repetitions of the standard Gibbs sampler estimators for each combination of $(\theta_1,\theta_2)$. Using the same method described in the previous example (Section \ref{subsec:product beta}), each run of the Gibbs sampler takes the same amount of time as the unbiased estimator.

To check the accuracy and compare with the standard Gibbs sampler estimator, we need to know the ground truth for every $Z(\theta_1)/Z(\theta_2)$, which is not analytically tractable. Here for each pair $(\theta_1,\theta_2)$, we run a very long Gibbs sampler for $2\times 10^5$ steps with half burn-in and run $10^4$ independent repetitions to estimate  both $\bE_{\theta_2}[e^{\theta_2 H(\sigma)}]$ and $\bE_{\theta_1}[e^{\theta_1 H(\sigma)}]$. Then we use their ratio as a proxy for our ground truth for $Z(\theta_1)/Z(\theta_2)$. Figure \ref{fig:relative_error}  compares these two methods in terms of their estimation error as a function of $\theta_1$. As shown in the plot, for every $(\theta_1\theta_2)$ pair, our unbiased estimator has a relative error very close to $0$. This suggests our estimator is highly accurate. In contrast, the Gibbs sampler has a non-negligible bias, which grows as $\theta_1$ grows. In particular, the error (which comes from bias) of the standard Gibbs sampler estimator is more than  $6\%$ when $\theta_1$ gets closer to $0.18$, while our unbiased estimator has an error much less than $1\%$.

To further examine the error of both methods as a function of the number of processors, we fix $\theta_2 = 0.1$ and choose $\theta_1 = 0.15$ and $0.18$ to plot the relative error versus the number of processors in Figure \ref{fig:normalization}. The behavior is very similar to Figure \ref{fig:beta_compare} for the Beta example. Again, as the number of processors increases, the error of the unbiased Monte Carlo estimator vanishes when the number of processors increases. In contrast, the systematic bias causes the error of the Gibbs sampler is always no less than $1.5\%$ and $6\%$ for $\theta_1 = 0.15$ and $0.18$, respectively, no matter how many processors are used. Together with the experiments in Section \ref{subsec:product beta}, it is clear that our estimator is significantly preferable to the standard Monte Carlo method when the users have many parallel processors but a limited budget per processor.

\begin{figure}[h]
	\begin{subfigure}{.5\textwidth}
		\centering
		% include first image
		\includegraphics[width=.8\linewidth]{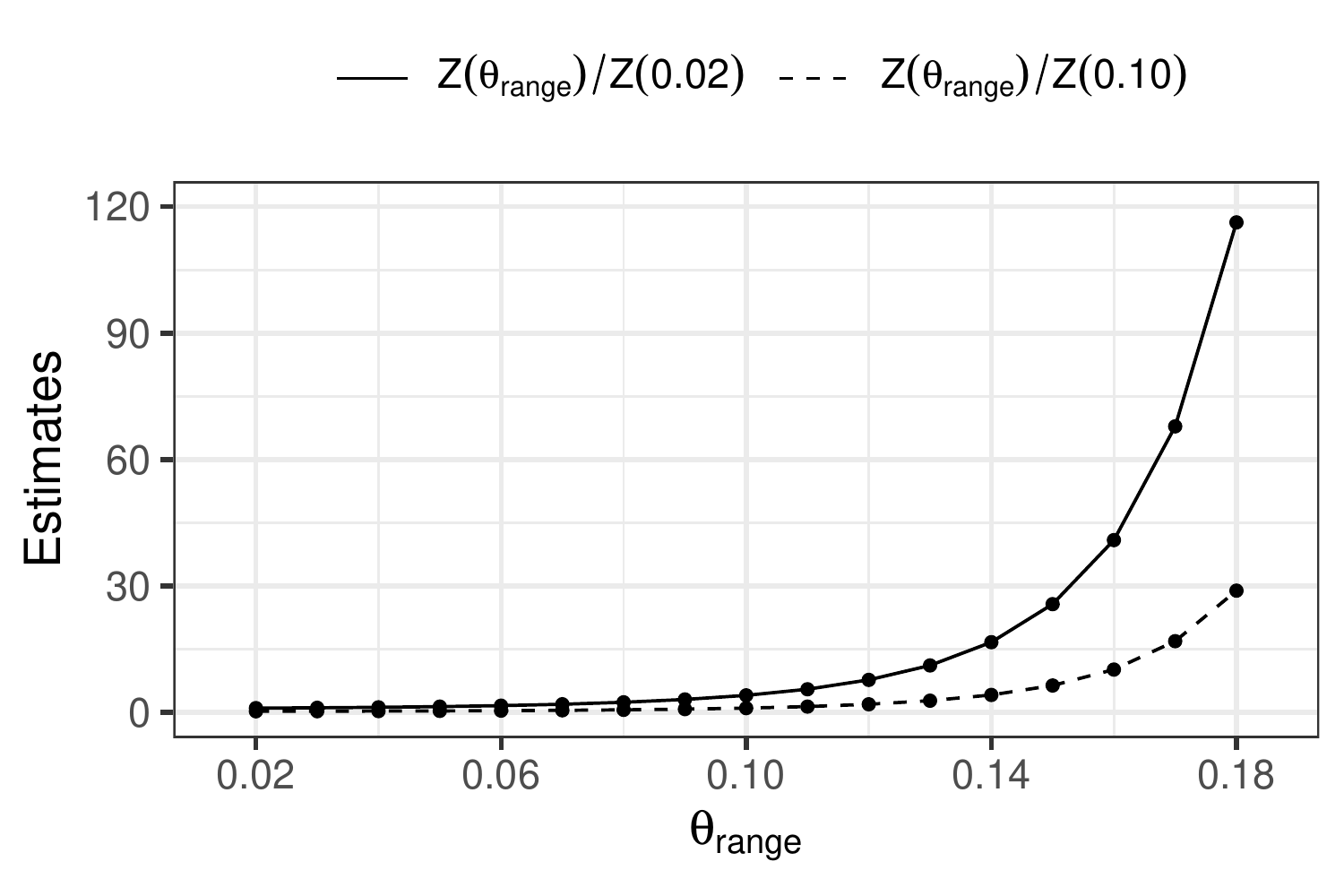}  
		\caption{Estimates for $Z_{\theta_1}/Z_{\theta_2}$ as a function of $\theta_1$}
		\label{fig:ising_estimates}
	\end{subfigure}
	\hfill
	\begin{subfigure}{.5\textwidth}
		\centering
		% include second image
		\includegraphics[width=.8\linewidth]{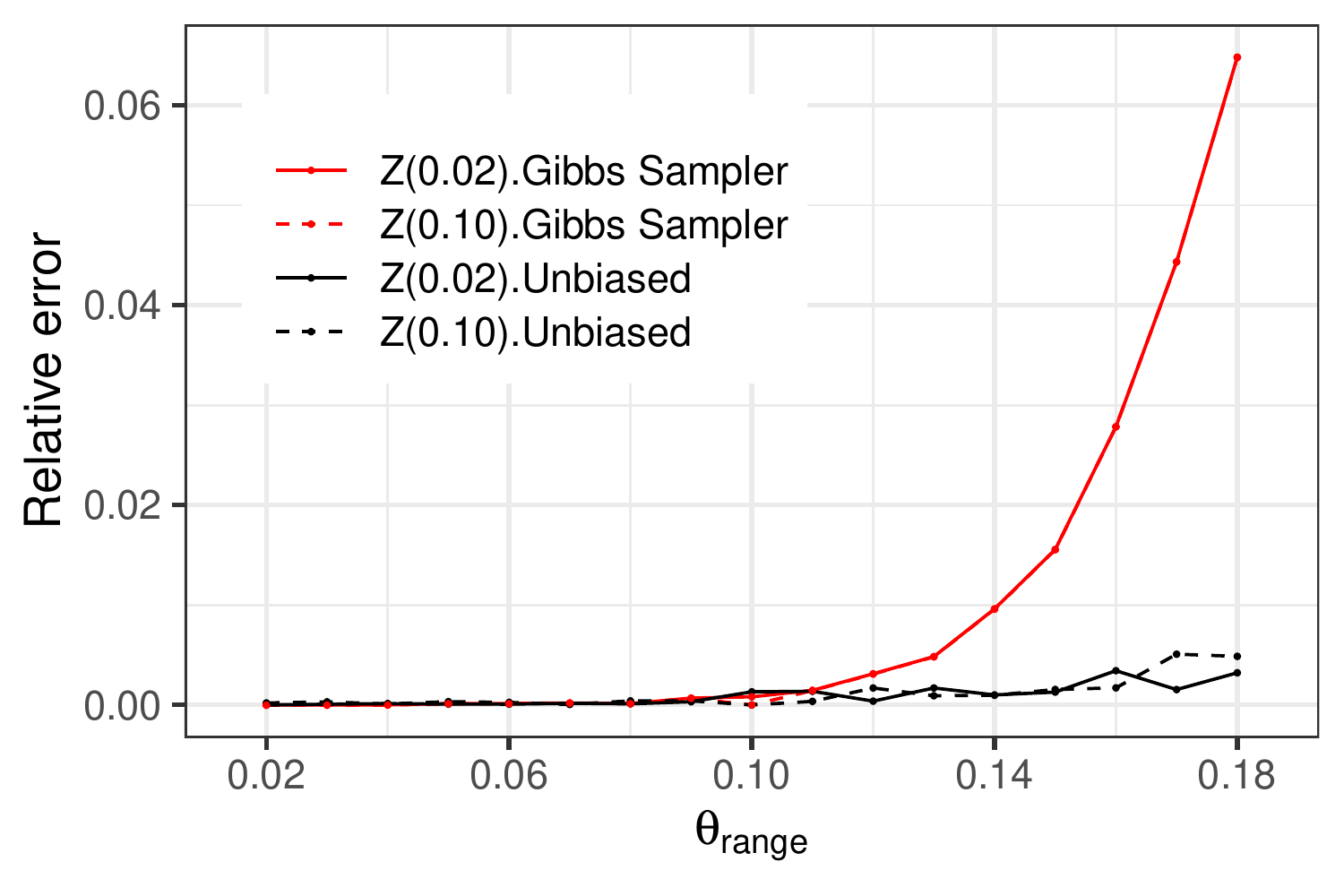}  
		\caption{Relative error of different methods as a function of $\theta_1$}
		\label{fig:relative_error}
	\end{subfigure}
	\caption{Left: The unbiased estimates of $Z(\theta_1)/Z(\theta_2)$ for $n=12$. Solid lines represent $\theta_2 = 0.02$ and dash lines represent $\theta_2 = 0.10$. Right: Relative error for different algorithms. Black lines are unbiased estimators, and red lines are standard Gibbs sampler estimators.}
	\label{fig:normalization}
\end{figure}

\begin{figure}[h]
	\begin{subfigure}{.5\textwidth}
		\centering
		% include first image
		\includegraphics[width=.8\linewidth]{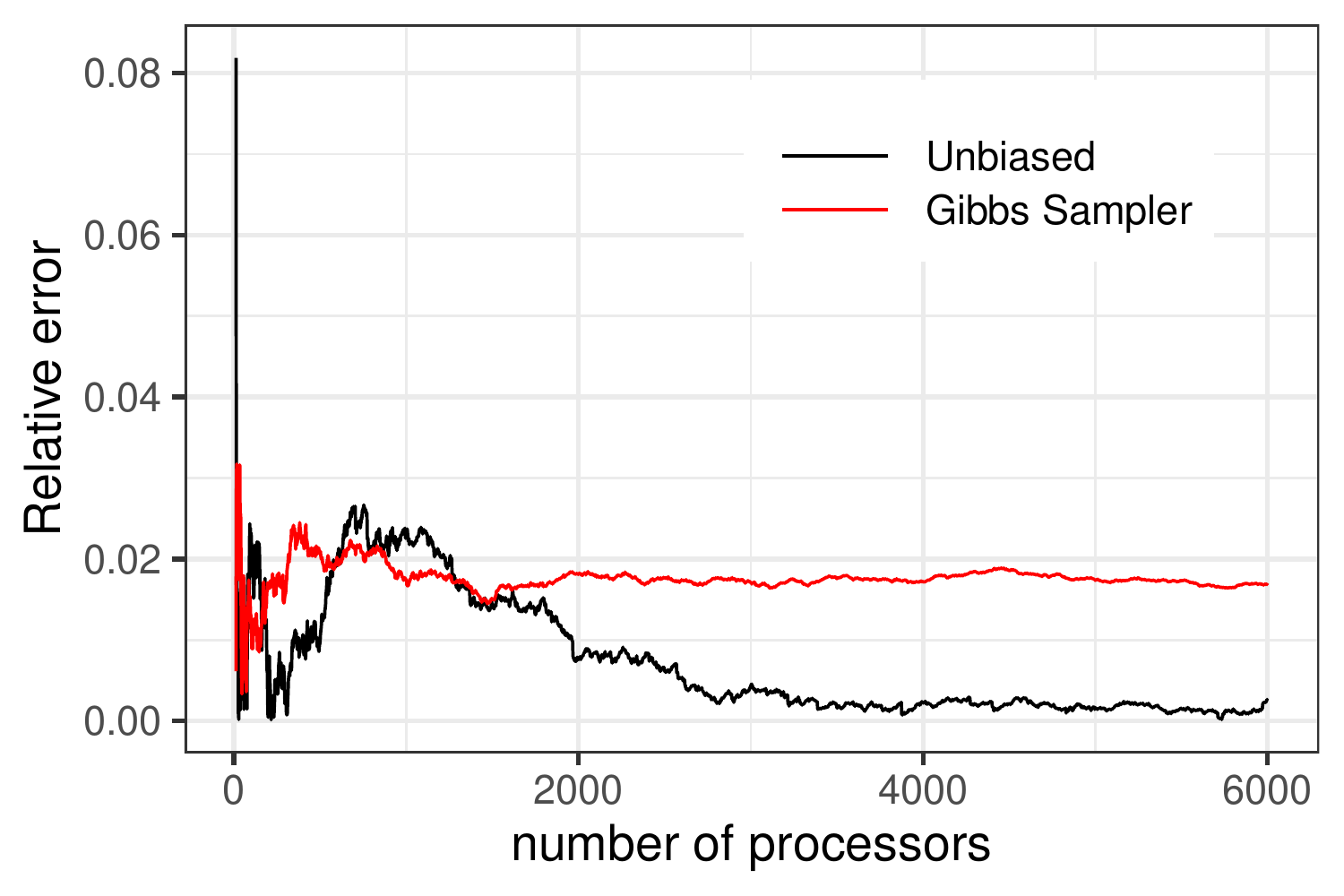}  
		\caption{Error comparison for $Z_{0.15}/Z_{0.10}$}
	\end{subfigure}
	\hfill
	\begin{subfigure}{.5\textwidth}
		\centering
		% include second image
		\includegraphics[width=.8\linewidth]{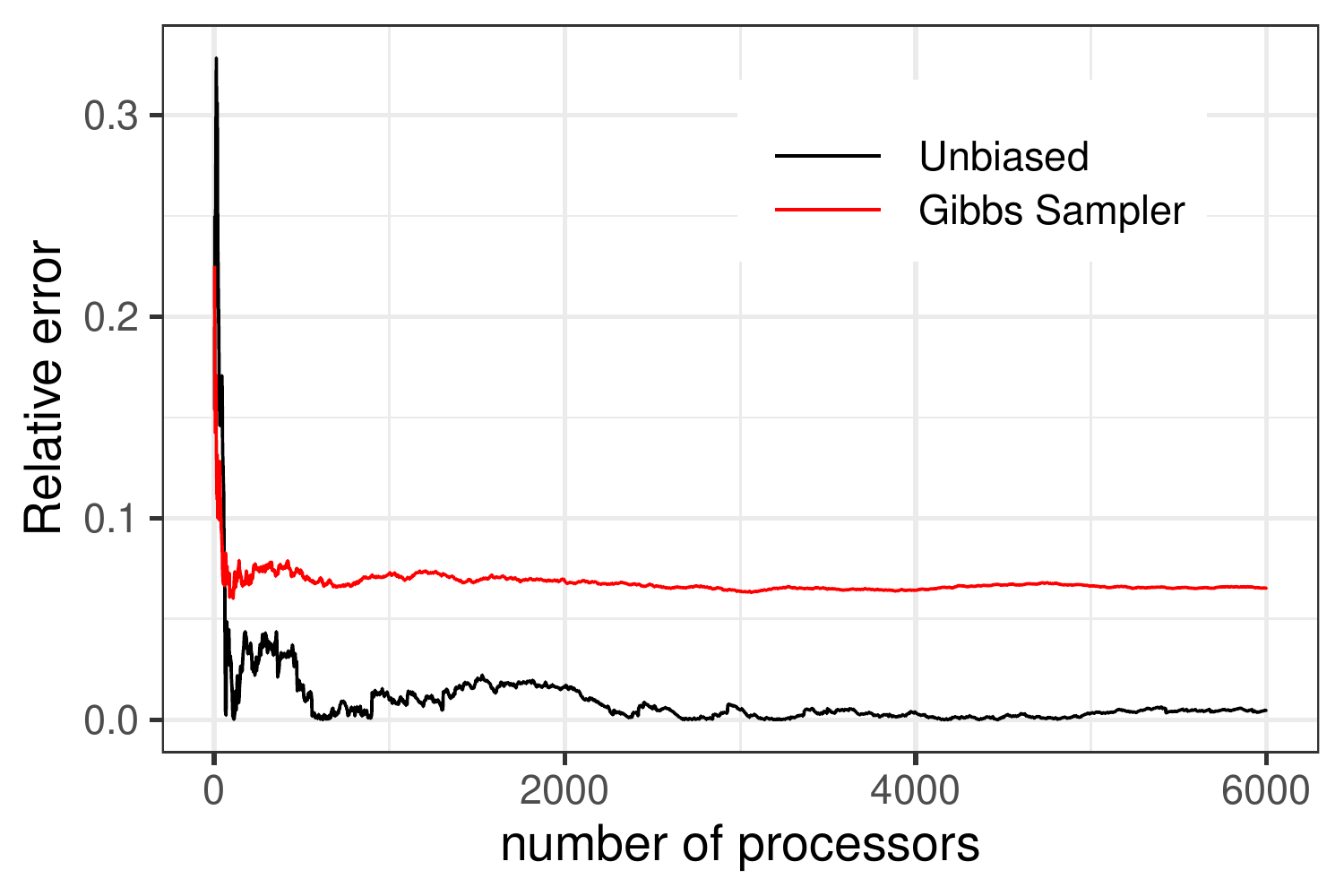}  
		\caption{Error comparison for $Z_{0.18}/Z_{0.10}$}
	\end{subfigure}
	\caption{Relative error of
the standard MCMC estimator (red) and unbiased estimator (black) as a function of the
number of processors.}
        \label{fig:ising_compare}
\end{figure}

\subsection{Nested expectation}
Finally, we estimate the following nested expectation:
$U:= \bE_{\theta_1}[\max_d \bE_{\theta_2|\theta_1}[f_d(\theta_1,\theta_2)| \theta_1]].$
The quantity $\max_d \bE_{\theta_2|\theta_1}[f_d(\theta_1,\theta_2)| \theta_1]$ is often interpreted as the utility or the optimal outcome over $D$ possible choices given the information of $\theta_1$. Since $U$ contains a nested expectation, with an out expectation over $\theta_1$ and an inner expectation over $\theta_2|\theta_1$, the vanilla Monte Carlo approach (sample $N_1$ realizations of $\theta_1$, and sample $N_2$ realizations of $\theta_2$ given each $\theta_1^{(i)}$) typically has suboptimal computational complexity $\calO(\epsilon^{-3})$ or even $\calO(\epsilon^{-4})$ for $\epsilon$ root mean square error (rMSE) under varying assumptions. Therefore, MLMC methods have been proposed when both $\theta_1$ and $\theta_2|\theta_1$ can be perfectly sampled. The case where $\theta_2|\theta_1$ can only be approximately sampled is  considered open  in \citep{giles2019decision}.

We construct unbiased estimators of $U$ using the method described in Section \ref{subsec:nest}. In this example, suppose we have two models. The first model comprises parameter $\theta_1$ with prior $\pi_1(\theta_1)$, data $Y_1$ with likelihood $p_1(y|\theta_1)$, the second model comprises  parameter $\theta_2$ with prior $\pi_2(\theta_2)$, data $Y_2$ with likelihood $p_2(y|\theta_1,\theta_2)$. The cut distribution is defined as $
\pi^\star(\theta_1,\theta_2) := \pi(\theta_1 | Y_1) \pi(\theta_2 | Y_2, \theta_1).$
This is different from the usual posterior distribution $
\pi(\theta_1,\theta_2 | Y_1,Y_2) = \pi(\theta_1 | Y_1, Y_2) \pi(\theta_2 | Y_2, \theta_1).$ In the cut model, the distribution of $\theta_1$ depends on the observations from the first model ($Y_1$) but not  the second model ($Y_2$). Since the cut model prevents the information in the second model from influencing the inference on the first, it is often used as an alternative to Bayes full posterior in the presence of model misspecification. Conducting inference on the cut model is challenging. The conditional distribution $\pi(\theta_2|Y_2,\theta_1)$ is usually  only known up a normalizing constant $Z(\theta_1)$. Standard MCMC methods on the joint space $(\theta_1,\theta_2)$ cannot be directly implemented due to the intractability of $Z(\theta_1)$, see \citep{plummer2015cuts} for detailed discussions.

In our case, we consider the real-data example used in \citep{plummer2015cuts, jacob2020unbiased} from epidemiology, which is motivated by a study of the international correlation between human papilloma virus (HPV) prevalence and cervical cancer incidence \citep{maucort2008international}. The first module consists of high-risk HPV prevalence data from $13$ countries. The data $Y_1 = \{(Z_i, N_i)\}_{i=1}^{13}$ consists of $13$ pair of integers, where $Z_i$ is the number of women infected with HPV, from country $i$ with population $N_i$. We assume a  prior $\mathsf{Beta}(1,1)$ on each component of $\theta_1$ independently, and an independent binomial likelihood $Z_i \sim \Binom(N_i,\theta_i)$ for each $i$. This yields a product beta posterior for $\theta_1$. The second module consists of the cancer data from the same $13$ countries. The data
$Y_2 = \{(X_{1,i}, X_{2,i})\}_{i=1}^{13}$ consists $13$ pair of integers, where $X_{1,i}$ is  numbers of cancer cases  arising from $X_{2,i}$ woman-years of follow-up. We assume a bivariate normal prior with mean $\mathbf 0$ and a diagonal covariance matrix with variance $10^3$ per component on the parameter $\theta_2\in \bR^2$, and a Poisson regression model $
X_{1,i} \sim \Poi(\exp(\lambda_i))$,
where  $\lambda_i = \theta_{2,1} + \theta_{1,i}\theta_{2,2} + X_{2,i}.$

Under the cut model, the first parameter $\pi(\theta_1 | Y_1)$ can be sampled from the product beta, and the second parameter can be approximately sampled from $\pi(\theta_2|Y_2,\theta_1)$ using MCMC. Suppose we are interested in $
U:= \bE_{\theta_1}[\max_{d\in\{1,2,\ldots, 13\}} \bE_{\theta_2|\theta_1}[\lambda_d]],$
which corresponds to the expectation of the largest parameter in the Poisson regression after observing $\theta_1$. We implement Algorithm \ref{alg:MLMCnest} with parameter $p = 0.7$ to get unbiased estimators of $U$. In each run, we first sample one $\theta_1$ from the product beta posterior,  then use the JOA estimator with $k = 2\times 10^3, m = 3\times 10^3$ by the R package `unbiasedMCMC'   to generate unbiased estimators of   $\bE_{\theta_2|\theta_1}[\lambda_d]$. Finally, we use the unbiased MLMC method to eliminate the bias. Our estimates are presented in Figure \ref{fig:Cut} below. Figure \ref{fig:Cut}(a) gives the estimates and their CIs of $\lambda_d$ for each $d$. Figure \ref{fig:Cut}(b) gives the histogram and the fitted curve from $10^5$ unbiased estimators of $U$. Figure \ref{fig:Cut}(a) suggests the $12$-th country has the largest  $\lambda_d$, which is around $21$, which is consistent with the result from our unbiased estimator on Figure \ref{fig:Cut}(b).
\begin{figure}
	\begin{subfigure}{.5\textwidth}
		\centering
		% include first image
		\includegraphics[width=.8\linewidth]{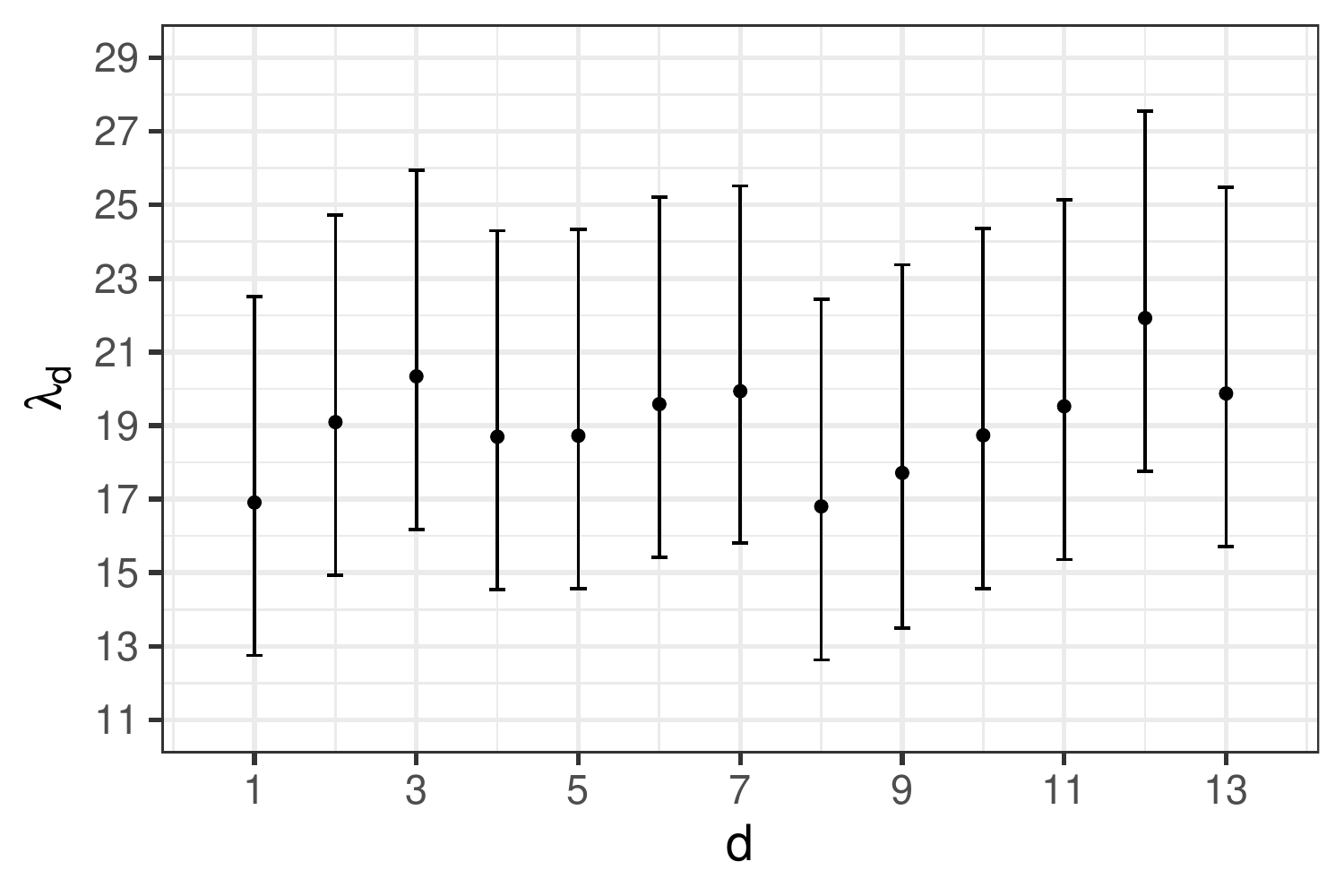}  
		\caption{Estimates and 95\% confidence intervals for $\lambda_d$, computed from $10^5$ JOA estimators.}
		\label{fig:cut_1}
	\end{subfigure}
	\begin{subfigure}{.5\textwidth}
		\centering
		% include second image
		\includegraphics[width=.8\linewidth]{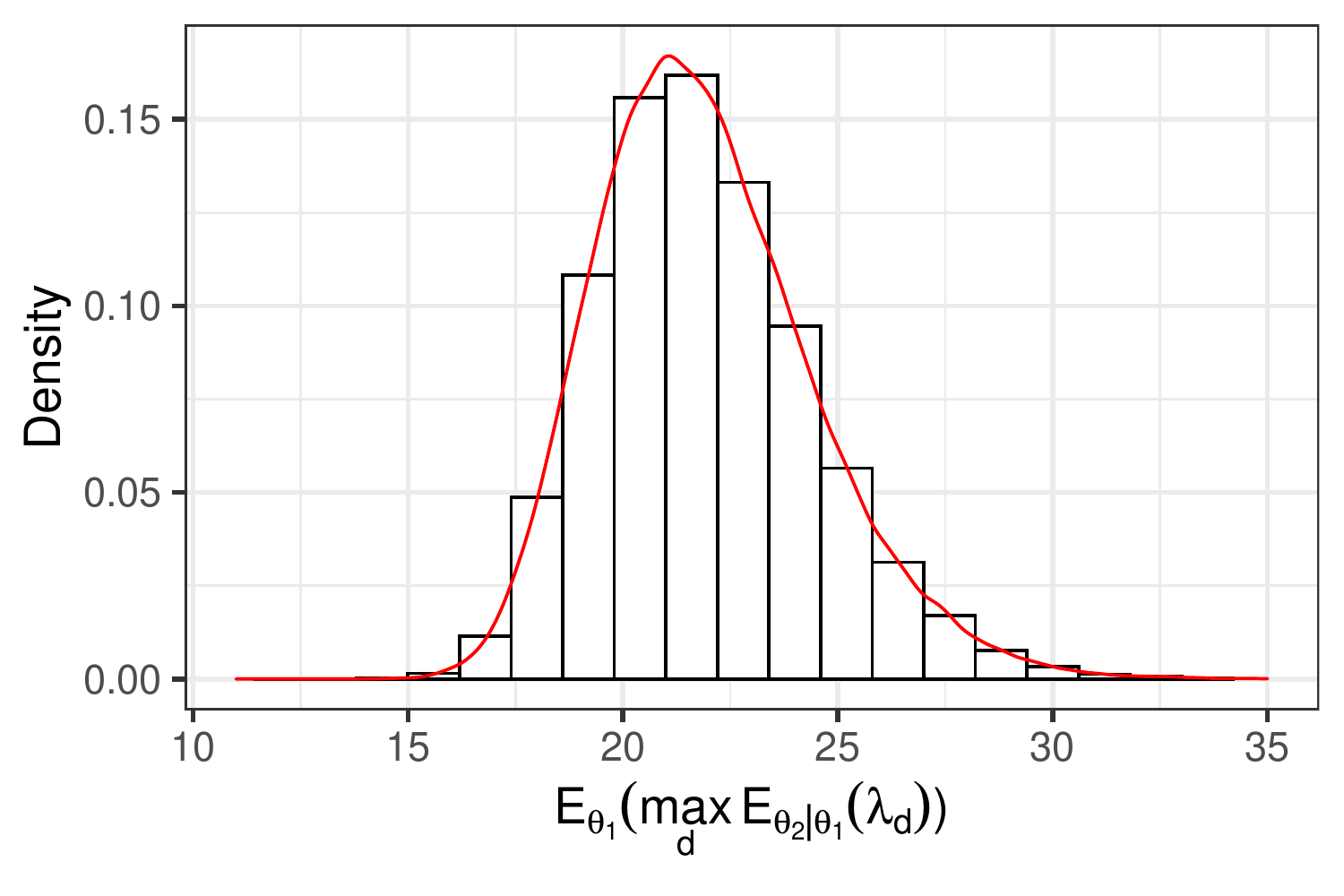}  
		\caption{Histogram of $\bE_{\theta_1}[\max_{d\in\{1,2,\ldots, 13\}} \bE_{\theta_2|\theta_1}[\lambda_d]]$ computed from $10^5$ calls of Algorithm \ref{alg:MLMCnest}. }
		\label{fig:cut_2}
	\end{subfigure}
	\caption{}
	\label{fig:Cut}
\end{figure}

\section{Future works}\label{sec:conclusion}
Based on the combination and generalization of the unbiased MCMC and MLMC method, we propose general unbiased estimators of $g(\bE_\pi[f])$ when $\pi$ can only be approximately sampled.  We further extend this framework to estimate nested expectations under intractable distributions. Although promising,  the existing framework (Algorithm \ref{alg:MLMC} and its variants) still has the potential to be generalized.  We highlight the potential paths forward.

First,   $\calT$ is assumed to be a function of the expectation. This assumption excludes many important applications, including the quantile and maximum a posteriori (MAP) estimations,  where $\calT$ depends directly on the probability measure instead of the expectation of some probability measure. We plan to develop a general method to include some/all of the applications above. Taking a step back, many computational challenges remain even assuming $\calT(\pi) := g(\bE_\pi[f])$. Algorithm \ref{alg:MLMC} implicitly requires the range of $S_H(m)/m$ is a subset of the domain of $g$. For example, our algorithm fails when $g(x) = \sqrt{x}$ since the JOA estimator may not always be non-negative. As remarked by several authors \citep{lyne2015russian}, the domain problem is deeply connected with the sign problem in computational physics, which is NP-hard in its general form. Progress on the domain problem should not only let us improve our existing framework but also benefit both the statistics and physics communities. Lastly,  the efficiency of the existing estimator (Algorithm \ref{alg:MLMC}) is still pretty much unexplored. In practice, we find the implementation time can be slow when the dimension is high, or the Markov chain mixes slowly. In particular, empirical results suggest that the parameter $p$ in Algorithm \ref{alg:MLMC} significantly influences both the variance and the computation cost. Therefore, finding the optimal parameter and the tradeoff between computational and statistical efficiency is an interesting problem.

\bibliographystyle{chicago}
\setstretch{1.24}
\bibliography{bib}
\newpage
\appendix
\section{Proofs}\label{sec:appendix}
\subsection{Auxiliary  Lemmas}\label{subsec: auxiliary lemma}
In this section we prove some auxiliary results that will be used throughout the technical proofs. We start (without proof) the well-known Marcinkiewicz-Zygmund inequality, and then prove two useful corollaries based on this inequality.
\begin{lemma}[Marcinkiewicz-Zygmund inequality \citep{marcinkiewicz1937quelques}]\label{lemma.M-Z_inequality}
	If $X_1,\cdots, X_n$ are independent random variables with $\bE[X_i] = 0$ and $\bE\left[|X_i|^p\right]<\infty$ for some $p>2$. Then,
	\begin{equation*}
		\bE \left[\left|\sum_{i=1}^nX_i\right|^p\right]\leq C_p\bE \left[\left(\sum_{i=1}^n|X_i|^2\right)^{p/2}\right],
	\end{equation*}
	where $C_p$ is a constant that only depends on $p$. 
\end{lemma}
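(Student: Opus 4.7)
The plan is to prove the Marcinkiewicz--Zygmund inequality via the classical symmetrization-plus-randomization trick, reducing the statement to Khintchine's inequality for Rademacher sums. The overall route has four steps: (i) symmetrize the sum, (ii) introduce Rademacher signs, (iii) invoke Khintchine's inequality conditionally, and (iv) desymmetrize to recover the original squares $\sum |X_i|^2$ on the right-hand side.

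First, I would introduce independent copies $X_1', \ldots, X_n'$ of $X_1, \ldots, X_n$ that are independent of the original sequence, and set $Y_i := X_i - X_i'$. Because $\bE[\sum X_i'] = 0$, Jensen's inequality applied to the convex map $x \mapsto |x|^p$ gives
\begin{equation*}
\bE\Bigl[\Bigl|\sum_{i=1}^n X_i\Bigr|^p\Bigr]
= \bE\Bigl[\bigl|\bE[\textstyle\sum_i (X_i - X_i') \mid X]\bigr|^p\Bigr]
\leq \bE\Bigl[\Bigl|\sum_{i=1}^n Y_i\Bigr|^p\Bigr].
\end{equation*}
The $Y_i$'s are independent and symmetric, so $(Y_1, \ldots, Y_n) \stackrel{\mathrm{D}}{=} (\varepsilon_1 Y_1, \ldots, \varepsilon_n Y_n)$ for i.i.d.\ Rademacher signs $\varepsilon_i$ independent of the $Y_i$'s. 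Consequently I can replace the sum by a Rademacher-weighted sum without changing the distribution.

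Next, conditioning on $(Y_1, \ldots, Y_n)$, I would apply Khintchine's inequality to the Rademacher sum, obtaining
\begin{equation*}
\bE_{\varepsilon}\Bigl[\Bigl|\sum_{i=1}^n \varepsilon_i Y_i\Bigr|^p \,\Big|\, Y\Bigr]
\leq K_p \Bigl(\sum_{i=1}^n Y_i^2\Bigr)^{p/2},
\end{equation*}
where $K_p$ is the Khintchine constant for exponent $p$. Taking the outer expectation and using $(a+b)^{p/2} \leq 2^{p/2-1}(a^{p/2} + b^{p/2})$ together with $Y_i^2 \leq 2(X_i^2 + (X_i')^2)$, I arrive at $\bE[(\sum Y_i^2)^{p/2}] \leq 2^{p-1} \bE[(\sum X_i^2)^{p/2}]$ after averaging out the independent copies. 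Chaining the inequalities produces the claim with $C_p = 2^{p-1} K_p$.

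The main obstacle is Khintchine's inequality itself, which is what makes the constant $p$-dependent. I would prove it in the standard way: for Rademacher $\varepsilon_i$ and real $a_i$, use the subgaussian bound $\bE[\exp(\lambda \sum \varepsilon_i a_i)] \leq \exp(\lambda^2 \sum a_i^2 / 2)$, then extract the $p$-th moment by integrating the tail, giving $\bE[|\sum \varepsilon_i a_i|^p]^{1/p} \leq C\sqrt{p}\,(\sum a_i^2)^{1/2}$. The only subtlety is tracking the dependence of $C_p$ on $p$, which is not needed for our application since the lemma will always be used at a fixed exponent determined by Assumption \ref{ass: moment}.
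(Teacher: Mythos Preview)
Your proof is correct and follows the standard symmetrization-plus-Khintchine route; there is no gap. One minor slip: in the desymmetrization step your constant should be $2^{p}$ rather than $2^{p-1}$, since after bounding $\sum Y_i^2 \le 2\sum X_i^2 + 2\sum (X_i')^2$ and applying $(a+b)^{p/2}\le 2^{p/2-1}(a^{p/2}+b^{p/2})$ you pick up a further factor of $2$ when you average over the two i.i.d.\ blocks. This does not matter for the lemma, which only asserts existence of some $C_p$.

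As for comparison with the paper: there is nothing to compare. The paper explicitly states this lemma \emph{without proof}, citing the original 1937 article of Marcinkiewicz and Zygmund, and then moves directly to the two corollaries (the i.i.d.\ specialization and the multivariate version) that feed into the bound on $\bE[|\Delta_n|^2]$. So your write-up actually supplies more than the paper does; if you want to match the paper's level of detail you could simply cite the inequality and omit the argument entirely.
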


One  corollary of the Marcinkiewicz-Zygmund inequality is:
\begin{corollary}\label{cor:M-Z iid}
	
	With all the assumptions as above, if we further assume that $X_1, \cdots, X_n$ are i.i.d. . Then,
	\begin{equation*}
		\bE \left[\left|\frac{1}{n}\sum_{i=1}^nX_i\right|^p\right]\leq  C_p\frac{\bE|X_1|^p}{n^{p/2}}
	\end{equation*}
	for every $p\geq 2$.
\end{corollary}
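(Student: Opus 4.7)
The plan is to apply the Marcinkiewicz--Zygmund inequality (Lemma \ref{lemma.M-Z_inequality}) and then reduce the right-hand side using a convexity (power-mean) bound together with the i.i.d.\ assumption. Since the target involves the normalized sum $\frac{1}{n}\sum_{i=1}^n X_i$, the final step is just homogeneity: pulling the $1/n$ outside the $p$-th power produces a factor $1/n^p$, which combined with an $n^{p/2}$ bound on $\bE[(\sum_i |X_i|^2)^{p/2}]$ gives the advertised $n^{-p/2}$ rate.

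First I would invoke Lemma \ref{lemma.M-Z_inequality} directly (its hypotheses—independence, mean zero, finite $p$-th moment with $p > 2$—are inherited) to obtain
\begin{equation*}
\bE\left[\left|\sum_{i=1}^n X_i\right|^p\right] \leq C_p\,\bE\left[\left(\sum_{i=1}^n |X_i|^2\right)^{p/2}\right].
\end{equation*}
Next, since $p/2 \geq 1$, the function $t \mapsto t^{p/2}$ is convex on $[0,\infty)$, so by Jensen's inequality applied to the uniform average of $|X_i|^2$,
\begin{equation*}
\left(\frac{1}{n}\sum_{i=1}^n |X_i|^2\right)^{p/2} \leq \frac{1}{n}\sum_{i=1}^n |X_i|^{p},
\end{equation*}
equivalently $\left(\sum_{i=1}^n |X_i|^2\right)^{p/2} \leq n^{p/2-1}\sum_{i=1}^n |X_i|^{p}$. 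Taking expectations and using that the $X_i$ are identically distributed gives $\bE[(\sum_i |X_i|^2)^{p/2}] \leq n^{p/2-1} \cdot n \cdot \bE|X_1|^p = n^{p/2}\,\bE|X_1|^p$.

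Finally, I would combine the two bounds: $\bE[|\sum_{i=1}^n X_i|^p] \leq C_p\, n^{p/2}\, \bE|X_1|^p$, and divide both sides by $n^p$ to recover the stated inequality. The only mild subtlety is that Lemma \ref{lemma.M-Z_inequality} is stated under $\bE[X_i] = 0$, which the corollary inherits (the statement says ``with all the assumptions as above''); no centering step is needed. The proof is essentially a one-line chain once Marcinkiewicz--Zygmund is in hand, so there is no serious obstacle—the main point is simply to note that the $C_p$ in the corollary is the same constant as in Lemma \ref{lemma.M-Z_inequality}.
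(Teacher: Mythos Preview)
Your proposal is correct and essentially identical to the paper's proof: both apply the Marcinkiewicz--Zygmund inequality, then use convexity of $t\mapsto t^{p/2}$ (Jensen on the uniform average of the $|X_i|^2$) together with the i.i.d.\ assumption. The only cosmetic difference is that the paper applies Lemma~\ref{lemma.M-Z_inequality} to the rescaled variables $X_i/n$ so that the $n^{-p/2}$ factor falls out of the convexity step directly, whereas you apply it to the $X_i$ and divide by $n^p$ at the end; the arithmetic is the same.
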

\begin{proof}[Proof of Corollary \ref{cor:M-Z iid}]
	Applying the Marcinkiewicz-Zygmund inequality on \\$(X_1/n, X_2/n,\ldots, X_n/n)$, we have:
	\begin{equation*}
		\bE \left[\left|\frac{1}{n}\sum_{i=1}^nX_i\right|^p\right]\leq C_p \bE\left[\left(\sum_{i=1}^n \big\lvert\frac{ X_i}{n}\big\rvert^2\right)^{p/2}\right].
	\end{equation*}
	Since $x^{p/2}$ is convex, we have 
	\begin{equation*}
		\left(\sum_{i=1}^n \big\lvert\frac{ X_i}{n}\big\rvert^2\right)^{p/2} = \left(\frac 1n \sum_{i=1}^n\frac{ \lvert X_i \rvert^2}{n}\right)^{p/2} \leq \frac 1n \sum_{i=1}^n \frac{\lvert X_i\rvert^p}{n^{p/2}}.
	\end{equation*}
	Taking expectation on both sides of the above inequality yields
	\begin{equation*}
		\bE\left[\left(\sum_{i=1}^n \big\lvert\frac{ X_i}{n}\big\rvert^2\right)^{p/2} \right] \leq \frac{\bE|X_1|^p}{n^{p/2}},
	\end{equation*}
	and our desired inequality follows.
\end{proof}

The Marcinkiewicz-Zygmund inequality naturally generalizes to random vectors.
\begin{corollary}[Multivariate Marcinkiewicz-Zygmund inequality]\label{cor: multivariate M-Z}
	Let $X_1, \cdots, X_n$ be i.i.d. random vectors in $\bR^m$, with $\bE[X_i] = \mathbf{0}$ and $\bE[\lVert X_i \rVert_p^p] = \bE[\sum_{j=1}^m \lvert X_{i,j}\rvert^p] < \infty$. Then
	\begin{equation*}
		\bE \left[\big\lVert\frac{1}{n}\sum_{i=1}^nX_i\big\rVert_p^p\right]\leq  C_p\frac{\bE\left[\lVert X_1\rVert_p^p\right]}{n^{p/2}}
	\end{equation*}
	for every $p\geq 2$.
\end{corollary}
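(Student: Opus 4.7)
The plan is to reduce the multivariate inequality to the univariate Corollary \ref{cor:M-Z iid} by expanding the $\ell^p$-norm coordinate-wise. Writing $X_i = (X_{i,1}, \ldots, X_{i,m})$, the definition of the $\ell^p$ norm gives
\begin{equation*}
\Bigl\lVert \frac{1}{n} \sum_{i=1}^n X_i \Bigr\rVert_p^p \;=\; \sum_{j=1}^m \Bigl\lvert \frac{1}{n} \sum_{i=1}^n X_{i,j} \Bigr\rvert^p,
\end{equation*}
so by Tonelli's theorem the expectation of the left-hand side equals $\sum_{j=1}^m \bE \bigl[\lvert n^{-1} \sum_i X_{i,j}\rvert^p\bigr]$.

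The next step is to apply Corollary \ref{cor:M-Z iid} coordinate-by-coordinate. For each fixed $j$, the sequence $X_{1,j}, \ldots, X_{n,j}$ is i.i.d.\ (being coordinate projections of i.i.d.\ vectors), has mean zero (since $\bE[X_1] = \mathbf{0}$), and has finite $p$-th moment (since $\bE[\lvert X_{1,j}\rvert^p] \leq \bE[\lVert X_1\rVert_p^p] < \infty$). Hence
\begin{equation*}
\bE \Bigl[\Bigl\lvert \frac{1}{n} \sum_{i=1}^n X_{i,j} \Bigr\rvert^p\Bigr] \;\leq\; C_p \, \frac{\bE[\lvert X_{1,j}\rvert^p]}{n^{p/2}}.
\end{equation*}

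Finally, summing this bound over $j = 1, \ldots, m$ and swapping sum and expectation on the right-hand side yields
\begin{equation*}
\bE \Bigl[\Bigl\lVert \frac{1}{n} \sum_{i=1}^n X_i \Bigr\rVert_p^p\Bigr] \;\leq\; C_p \, \frac{\sum_{j=1}^m \bE[\lvert X_{1,j}\rvert^p]}{n^{p/2}} \;=\; C_p \, \frac{\bE[\lVert X_1\rVert_p^p]}{n^{p/2}},
\end{equation*}
which is the desired conclusion with exactly the same constant $C_p$ as in the univariate inequality. There is no real obstacle here; the only thing to check carefully is that the univariate hypotheses (i.i.d., mean zero, finite $p$-th moment) transfer to each coordinate, which follows immediately from the assumptions on $X_1$.
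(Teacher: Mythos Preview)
Your proof is correct and follows essentially the same approach as the paper: expand the $\ell^p$-norm coordinate-wise, apply the univariate Corollary~\ref{cor:M-Z iid} to each coordinate, and sum. Your version is slightly more explicit in verifying the hypotheses transfer to each coordinate, but the argument is identical.
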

\begin{proof}[Proof of Corollary \ref{cor: multivariate M-Z}]
	We know $$\bE \left[\big\lVert\frac{1}{n}\sum_{i=1}^nX_i\big\rVert_p^p\right] = \sum_{j=1}^m\bE\left[ \big\lvert \frac 1n\sum_{i=1}^n X_{i,j}\big\rvert^p\right].$$
	Applying Corollary \ref{cor:M-Z iid} on each component of each $X_i$ yields
	$$
	\sum_{j=1}^m\bE\left[ \big\lvert \frac 1n\sum_{i=1}^n X_{i,j}\big\rvert^p\right]\leq C_p \sum_{j=1}^m \frac{\bE\lvert X_{1,j}\rvert^p}{n^{p/2}} = C_p\frac{\bE\left[\lVert X_1\rVert_p^p\right]}{n^{p/2}},
	$$
	as desired.
\end{proof}
We also need the following inequality to compare $\lVert x\rVert_p$ and $\lVert x \rVert_q$ for $p\neq q$ and $x\in \bR^m$. The proof follows directly from the H\"{o}lder's inequality. 
\begin{lemma}\label{lem:norm-comparison}
	For any $x\in \bR^m$ and $p < q$, we have:
	\[
	\lVert x \rVert_p \leq m^{1/p - 1/q} \lVert x \rVert_q.
	\]
\end{lemma}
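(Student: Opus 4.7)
The plan is to derive this standard norm-comparison inequality directly from Hölder's inequality. The key observation is that $\sum_{i=1}^m |x_i|^p$ can be written as an inner product of the vector $(|x_i|^p)_{i=1}^m$ with the constant vector $(1,\ldots,1)$, and then Hölder's inequality with the right conjugate exponents converts the $\ell^p$-sum into an $\ell^q$-sum at the cost of a factor involving $m$.

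Concretely, first I would pick the conjugate exponents $r = q/p$ and $r' = q/(q-p)$, which are valid since $p < q$ and hence $r > 1$ with $1/r + 1/r' = 1$. Applying Hölder's inequality to the pair $\bigl(|x_i|^p, 1\bigr)_{i=1}^m$ yields
\begin{equation*}
\sum_{i=1}^m |x_i|^p \;\leq\; \left(\sum_{i=1}^m |x_i|^{p \cdot q/p}\right)^{p/q} \left(\sum_{i=1}^m 1\right)^{(q-p)/q} \;=\; \left(\sum_{i=1}^m |x_i|^q\right)^{p/q} m^{(q-p)/q}.
\end{equation*}
Raising both sides to the power $1/p$ gives $\lVert x \rVert_p \leq m^{(q-p)/(pq)} \lVert x \rVert_q = m^{1/p - 1/q}\lVert x \rVert_q$, which is exactly the claim.

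There is essentially no obstacle here: the only thing one has to spot is the correct choice of conjugate exponents so that the first factor becomes $\lVert x \rVert_q^p$ and the second factor produces the clean power of $m$. The degenerate cases ($p = q$ or $x = 0$) are immediate from the formula. I would present the proof as a single display of Hölder's inequality followed by taking the $p$-th root.
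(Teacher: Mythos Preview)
Your proof is correct and is essentially identical to the paper's own argument: the paper also writes $\lVert x\rVert_p^p = \sum_i |x_i|^p\cdot 1$, applies H\"older's inequality with the same conjugate exponents $q/p$ and $q/(q-p)$, and then takes the $1/p$-th power.
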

\begin{proof}
	\begin{align*}
		\lVert x \rVert_p^p = \sum_{i=1}^m \lvert x_i\rvert ^p\cdot 1\leq  \left(\sum_{i=1}^m \lvert x_i\rvert ^q\right)^{p/q} m^{1-p/q}
	\end{align*}
	where the last inequality follows from the H\"{o}lder's inequality.  Our result follows by taking the $(1/p)$-th power on both sides. 
\end{proof}

\subsection{Bounding $\bE[\lvert \Delta_n\rvert^2]$}\label{subsec: bounding delta}
Recall that  $\Delta_n = g\left(S_H(2^{n})/2^n\right) -\frac{1}{2}\left(g\left(S_H^O(2^{n-1})/2^{n-1}\right) + g\left(S_H^E(2^{n-1})/2^{n-1}\right)\right)$, and the final estimator takes the form $\Delta_N/p_N + g(H_1)$. Therefore, understanding the theoretical properties of $\Delta_n$ is crucial for studying our estimator.  

\begin{proof}[Proof of Lemma \ref{lem: delta-square}]
	For simplicity, we denote $m(\pi)$ by $\mu$. By Assumption \ref{ass: differentiable},  there exists $\epsilon>0$
	such that $g$ is $\alpha$-H\"{o}lder continuous on $(\mu - \epsilon, \mu + \epsilon)$, we can then write $\Delta_n$ as:
	\begin{align}\label{eqn:Delta, split}
		\lvert \Delta_n\rvert = \lvert \Delta_n\rvert \Indc(A_1) +  \lvert \Delta_n\rvert \Indc(A_2),
	\end{align}
	where $A_1$ is the event  $$
	\left\{\left\lVert\frac{S_H^\sfO(2^{n-1})}{2^{n-1}} - \mu\right\rVert <\epsilon\right\} \cap \left\{\left\lVert\frac{S_H^\sfE(2^{n-1})}{2^{n-1}} - \mu\right\rVert < \epsilon \right\},
	$$
	and $A_2$ is the event
	$$\left\{\max\left(\left\lVert\frac{S_H^\sfO(2^{n-1})}{2^{n-1}} - \mu\right\rVert,\left\lVert\frac{S_H^\sfE(2^{n-1})}{2^{n-1}} - \mu\right\rVert \right)\geq\epsilon\right\}.$$ 
	
	Under the event $A_1$, we have  $\left\lVert\frac{S_H^\sfO(2^{n-1})}{2^{n-1}} - \mu\right\rVert <\epsilon$ and $\left\lVert\frac{S_H^\sfE(2^{n-1})}{2^{n-1}} - \mu\right\rVert <\epsilon$. This further implies  $$\left\lVert\frac{S_H(2^{n})}{2^{n}} - \mu\right\rVert < \epsilon$$
	
	by the triangle inequality and the fact  $\frac{S_H(2^{n})}{2^{n}} = \frac12  \left(\frac{S_H^\sfO(2^{n-1})}{2^{n-1}} +  \frac{S_H^\sfE(2^{n-1})}{2^{n-1}}\right).$
	
	Then we can write $\Delta_n$ as:
	\begin{align*}
		\Delta_n & = g\left(\frac{S_H(2^{n})}{2^{n}}\right) -  \frac12  \left(\frac{S_H^\sfO(2^{n-1})}{2^{n-1}} +  \frac{S_H^\sfE(2^{n-1})}{2^{n-1}}\right)\\
		& = \frac{1}{2}\left(g\left(\frac{S_H(2^{n})}{2^{n}}\right) - g\left(\frac{S_H^\sfO(2^{n-1})}{2^{n-1}}\right) \right) + \frac{1}{2}\left(g\left(\frac{S_H(2^{n})}{2^{n}}\right) - g\left(\frac{S_H^\sfE(2^{n-1})}{2^{n-1}}\right) \right) \\
		& = \frac{1}{2}Dg(\xi^\sfO_n) \left(\frac{S_H(2^{n})}{2^{n}} -\frac{S_H^\sfO(2^{n-1})}{2^{n-1}}\right) + \frac{1}{2}Dg(\xi^\sfE_n) \left(\frac{S_H(2^{n})}{2^{n}} -\frac{S_H^\sfE(2^{n-1})}{2^{n-1}}\right) \\
		& = \frac{1}{4}\left(Dg(\xi^\sfO_n)-Dg(\xi^\sfE_n)\right)\frac{S_H^\sfE(2^{n-1}) - S_H^\sfO(2^{n-1}) }{2^{n-1}},
	\end{align*}
	
	where $\xi^\sfO_n$ is a convex combination of $\frac{S_H(2^{n})}{2^{n}}$ and $\frac{S_H^\sfO(2^{n-1})}{2^{n-1}}$, $\xi^\sfE_n$ is a convex combination of $\frac{S_H(2^{n})}{2^{n}}$ and $\frac{S_H^\sfE(2^{n-1})}{2^{n-1}}$ by the Multivariate Mean value Theorem. Under $A_1$,  both $\xi^\sfO_n$  and $\xi^\sfE_n$ are within the $\epsilon$-neighbor of $\mu$, applying the $\alpha$-H\"{o}lder continuous assumption yields
	
	\begin{align*}
		\lvert\Delta_n\rvert \leq c_1(\epsilon)\left\lVert\xi_n^\sfO-\xi_n^\sfE\right\rVert^{\alpha}\cdot\left\lVert \frac{S_H^\sfO(2^{n-1}) - S_H^\sfE(2^{n-1}) }{2^{n-1}}\right\rVert \leq c_2(\epsilon)\left\lVert \frac{S_H^\sfO(2^{n-1}) - S_H^\sfE(2^{n-1}) }{2^{n-1}} \right\rVert^{1+\alpha}.
	\end{align*}
	
	Then, 
	\begin{align}\label{eqn: Delta_n on A1}
		&\bE\left[\lvert\Delta_n\rvert^2 \Indc(A_1) \right]
		\leq c_2(\epsilon)\mathbb{E}\left[\left\lVert \frac{S_H^\sfO(2^{n-1}) - S_H^\sfE(2^{n-1}) }{2^{n-1}} \right\rVert^{2(1+\alpha)}\right].
	\end{align}
	Since $S_H^\sfO(2^{n-1})$ and $S_H^\sfE(2^{n-1})$ are vectors in $\bR^m$, applying 
	Lemma  \ref{lem:norm-comparison} on $p = 2, q = 2(1+\alpha)$ gives:
	\begin{align}\label{eqn: Delta_n Holder}
		\left\lVert \frac{S_H^\sfO(2^{n-1}) - S_H^\sfE(2^{n-1}) }{2^{n-1}} \right\rVert^{2(1+\alpha)} \leq m^\alpha \left\lVert \frac{S_H^\sfO(2^{n-1}) - S_H^\sfE(2^{n-1}) }{2^{n-1}} \right\rVert^{2(1+\alpha)}_{2(1+\alpha)}
	\end{align}

	Since $S_H^\sfO(2^{n-1}) - S_H^\sfE(2^{n-1})$ is the sum of $2^{n-1}$ i.i.d. random variables, each with the same distribution as $H_2 - H_1$, applying the Multivariate Marcinkiewicz-Zygmund  inequality (Corollary \ref{cor: multivariate M-Z}) gives us:
	\begin{align}
		\mathbb{E}\left[\left\lVert \frac{S_H^\sfO(2^{n-1}) - S_H^\sfE(2^{n-1}) }{2^{n-1}} \right\rVert^{2(1+\alpha)}_{2(1+\alpha)}\right] \leq C_{2(1+\alpha)}\cdot \frac{\bE\left[\lVert H_2 - H_1\rVert^{2(1+\alpha)}_{2(1+\alpha)}\right]}{2^{(1+\alpha)(n-1)}}\\
		\label{eqn: upper bound on A1}
		\leq C_{2(1+\alpha)}\cdot2^{3(1+\alpha)}\cdot \frac{\bE\left[\lVert  H_1\rVert^{2(1+\alpha)}_{2(1+\alpha)}\right]}{2^{(1+\alpha)n}}.
	\end{align}
	where the last step uses  the  inequality $(a+b)^p \leq 2^{p-1} (\lvert a \rvert^p + \lvert b \rvert^p)$ for $p\geq 2$. It is worth mentioning that the right hand side of \eqref{eqn: upper bound on A1} is finite as Assumption \ref{ass: moment} guarantees $H_1$ has finite $l$-th moment with $l > 2+\alpha$.  Combining \eqref{eqn: Delta_n on A1}, \eqref{eqn: Delta_n Holder}, and \eqref{eqn: upper bound on A1}, we have:
	\begin{align}\label{eqn:A1, final}
		\bE\left[\lVert\Delta_n\rVert^2 \Indc(A_1)\right]\leq C_1(m, \alpha,\epsilon) 2^{-n(1+\alpha)},
	\end{align}
	where $C_1(m,\alpha,\epsilon) = c_2(\epsilon)\cdot C_{2(1+\alpha)} \cdot 2^{3(1+\alpha)}\cdot {\bE\left[\lVert  H_1\rVert^{2(1+\alpha)}_{2(1+\alpha)}\right]}$ is a constant when Assumption \ref{ass: domain} -- \ref{ass: moment} are satisfied.
	
	Under $A_2$, we have:
	\begin{align}\label{A2,split}
		\lvert \Delta_n\rvert^2 \Indc(A_2) \leq  \lvert \Delta_n\rvert^2 \Indc \left(\left\lVert\frac{S_H^\sfO(2^{n-1})}{2^{n-1}} - \mu\right\rVert >\epsilon \right) + \lvert \Delta_n\rvert^2 \Indc\left(\left\lVert\frac{S_H^\sfE(2^{n-1})}{2^{n-1}} - \mu\right\rVert > \epsilon\right)
	\end{align}

	Now we upper bound the first term's expectation, 
	\begin{align}
		\mathbb{E}\left[ \lvert\Delta_n\rvert^2\Indc\big(\left\lVert \frac{S_H^\sfO(2^{n-1})}{2^{n-1}} -\mu\right\rVert>\epsilon\big)\right]
		\label{eqn:A2 Holder}&\leq \bE[\lvert \Delta_n\rvert^{2s}]^{1/s} \mathbb P\left(\left\lVert\frac{S_H^\sfO(2^{n-1})}{2^{n-1}} -\mu\right\rVert > \epsilon \right)^{(s-1)/s}\\
		\label{eqn:A2 tradeoff}&\leq \calC_s^{1/s} 2^{-\alpha_sn/s}  \bP\left(\left\lVert\frac{S_H^\sfO(2^{n-1})}{2^{n-1}} -\mu\right\rVert > \epsilon \right)^{(s-1)/s}\\
		\label{eqn:A2  Markov}&\leq  \calC_s^{1/s}  \cdot (\epsilon^{-l(s-1)/s}) \cdot 2^{-\alpha_sn/s} \cdot \bE\left[\big\lVert\frac{S_H^\sfO(2^{n-1})}{2^{n-1}} - \mu\big\rVert^l\right]^{(s-1)/s}.
	\end{align}
	Here \eqref{eqn:A2 Holder} follows from the H\"{o}lder's inequality, \eqref{eqn:A2 tradeoff} uses Assumption \ref{ass:tradeoff}, and \eqref{eqn:A2  Markov} follows from the Markov's inequality. 
	Again, using Lemma \ref{lem:norm-comparison} and Corollary \ref{cor: multivariate M-Z}, the term $\bE\left[\big\lVert\frac{S_H^\sfO(2^{n-1})}{2^{n-1}} - \mu\big\rVert^l\right]$ can be upper bounded by:
	\begin{align}\label{eqn:A2 CZ}
		\bE\left[\big\lVert\frac{S_H^\sfO(2^{n-1})}{2^{n-1}} - \mu\big\rVert^l\right]\leq m^{l/2 - 1}\bE\left[\big\lVert\frac{S_H^\sfO(2^{n-1})}{2^{n-1}} - \mu\big\rVert^l_l\right]\leq 2^{l/2}\cdot m^{l/2 - 1}\cdot C_l \cdot \frac{\bE\left[\lVert H_1 \rVert^l_l\right]}{2^{nl/2}}.
	\end{align}
	Combining \eqref{eqn:A2  Markov} and \eqref{eqn:A2 CZ}, we have
	\begin{align*}
		\mathbb{E}\left[ \lvert\Delta_n\rvert^2\Indc\big(\left\lVert \frac{S_H^\sfO(2^{n-1})}{2^{n-1}} -\mu\right\rVert>\epsilon\big)\right] &\leq C_2(m, l, \epsilon,s) 2^{-\alpha_sn/s}  2^{-nl(s-1)/(2s)} \\
		&= C_2(m, l, \epsilon,s) 2^{-n\big(\frac{\alpha_s}{s} + \frac{(s-1)l}{2s}	\big)},
	\end{align*}
	where 	 $C_2(m, l, \epsilon,s) = \calC_s^{1/s} \cdot \left(\epsilon^{-l}2^{l/2}\cdot m^{l/2 - 1}\cdot C_l\cdot \bE\left[\lVert H_1 \rVert^l_l\right]\right)^{(s-1)/s} $ is a constant when Assumption \ref{ass: domain} -- \ref{ass:tradeoff} are satisfied. Furthermore, by  Assumption \ref{ass:tradeoff}, $2\alpha_s + (s-1)l > 2s$. It is clear that $\frac{\alpha_s} {s}+ \frac{(s-1)l}{2s}  > 1$, and therefore 
	\begin{align}\label{eqn:A2, odd}
		\mathbb{E}\left[ \lvert\Delta_n\rvert^2\Indc\big(\left\lVert \frac{S_H^\sfO(2^{n-1})}{2^{n-1}} -\mu\right\rVert>\epsilon\big)\right]\leq C_2(m,l,\epsilon,s) 2^{-(1+\tilde\alpha) n},
	\end{align}
	where $\tilde\alpha  = \frac{\alpha_s}{s} + \frac{(s-1)l}{2s} - 1 >0$. The same argument also shows
	\begin{align}\label{eqn: A2, even}
		\mathbb{E}\left[ \lvert\Delta_n\rvert^2\Indc\big(\left\lVert \frac{S_H^\sfO(2^{n-1})}{2^{n-1}} -\mu\right\rVert>\epsilon\big)\right]\leq C_2(m,l,\epsilon,s) 2^{-(1+\tilde\alpha) n}.
	\end{align}
	Combining \eqref{eqn:A2, odd}, \eqref{eqn: A2, even}, and \eqref{A2,split}, we have
	\begin{align}\label{eqn: A2, bound}
		\mathbb{E}\left[ \lvert\Delta_n\rvert^2\Indc(A_2)\right]\leq 2C_2(m,l,\epsilon,s) 2^{-(1+\tilde\alpha) n}.
	\end{align}
	Finally, taking $\gamma = \min\{\alpha,\tilde\alpha\}$, $C = C_1 + 2C_2$, and using  \eqref{eqn:Delta, split}, \eqref{eqn:A1, final}, and \eqref{eqn: A2, bound}, we conclude:
	\begin{align}\label{eqn:Delta final}
		\mathbb E[\lvert \Delta_n\rvert ^2] \leq C 2^{-n(1+\gamma)}.
	\end{align}
\end{proof}

\subsection{The Moment Assumption \ref{ass: moment} and Markov chain mixing}\label{subsec: moment assumption and mixing}
In this subsection we discuss the relation between  the  Moment Assumption \ref{ass: moment}  and the mixing time of the underlying Markov chain. Throughout this subsection, the unbiased estimator $H$ of $m(\pi)$ is assumed to be the JOA estimator $H_k(Y,Z)$ defined in Section \ref{subsubsec:unbiased MCMC}, which also extends to $H_{k:m}(Y,Z) = (m-k+1)^{-1}\sum_{l=k}^m H_l(Y,Z)$ naturally.

Before giving a formal statement of Proposition \ref{prop: moment, informal}, we first recall some definitions in Markov chain theory. We say a $\pi$-invariant, $\phi$-irreducible and aperiodic Markov transition kernel $P$ satisfies a geometric drift condition if there exists a measurable function $V: \Omega \rightarrow [1,\infty)$, $\lambda\in (0,1)$, and a measurable set $\calS$ such that for all $x\in\Omega$:
\begin{align}\label{eqn:drift}
	\int P(x,\diff y) V(y) \leq \lambda V(x) + b\Indc(x\in \calS).
\end{align}
Moreover, the set $\calS$ is called a small set if there exists a positive integer $m$, $\epsilon > 0$, and a probability measure $\nu$ on such that for every $x\in \calS$:
\begin{align}\label{eqn: small}
	P^m(x,\cdot) \geq  \epsilon \mu(\cdot).
\end{align}
The technical definitions for irreducibility, aperiodicity and small sets can be found in Chapter 5 of \cite{meyn2012markov}. The geometric drift condition is  a key tool guaranteeing the geometric ergodicity of a Markov chain, meaning the Markov chain $P$ converges to its stationary distribution $\pi$ at a geometric rate. It is known that the geometric drift condition is satisfied for a wide family of Metropolis-Hastings algorithms. We refer the readers to \cite{mengersen1996rates, roberts1996geometric} for existing results. 

Now we give a formal statement of Proposition \ref{prop: moment, informal}.
\begin{prop}[Verifying Assumption \ref{ass: moment}, formal version of Proposition \ref{prop: moment, informal}]\label{prop: moment, formal}
	Suppose the Markov transition kernel described in Section \ref{subsubsec:unbiased MCMC} satisfies a geometric drift condition with a small set $\calS$ of the form $\calS =\{x: V(x)\leq L\}$ for $\lambda + b/(1+L) < 1$. Suppose there exists $\tilde \epsilon \in (0,1)$ such that
	\[
	\inf_{(x,y)\in \calS \times \calS} \barP((x,y), \calD) \geq \tilde\epsilon,
	\]
	where $\calD:= \{(x,x): x\in \Omega\}$ is the diagonal of $\Omega\times \Omega$. Suppose  also there exists $p > l$ and $D_p > 0$ such that $\bE[\Vert f(Y_t)\rVert_p^p] < D_p$ for every $t$. 	Then $\mathbb{E}[\lVert H_k(Y,Z)\rVert_l^l] <\infty$ for every $k$.  
\end{prop}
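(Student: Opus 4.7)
My plan is to reduce the problem to two independent pieces: (i) a geometric tail estimate for the coupling time $\tau$, and (ii) a uniform-in-$t$ $p$-th moment bound on the summands, and then combine them via Minkowski and H\"older. By the triangle inequality in the $\ell^l$ norm,
\begin{align*}
\lVert H_k(Y,Z)\rVert_l \;\leq\; \lVert f(Y_k)\rVert_l \;+\; \sum_{i=k+1}^{\tau-1}\bigl(\lVert f(Y_i)\rVert_l + \lVert f(Z_{i-1})\rVert_l\bigr) \;=:\; \sum_{i=k}^{\infty} A_i\,\Indc(i<\tau),
\end{align*}
where I fold every term into a single non-negative series. Taking $L^l(\mathbb P)$-norms and applying Minkowski, it is enough to show that $\sum_{i\geq k}\bE\bigl[A_i^l\,\Indc(i<\tau)\bigr]^{1/l}<\infty$. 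For each summand H\"older's inequality with conjugates $p/l$ and $p/(p-l)$ (legitimate because $p>l$) gives
\begin{align*}
\bE\bigl[A_i^l\,\Indc(i<\tau)\bigr] \;\leq\; \bE\bigl[A_i^p\bigr]^{l/p}\,\mathbb P(\tau>i)^{1-l/p}.
\end{align*}

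The $p$-th moment factor is easily controlled: by Lemma~\ref{lem:norm-comparison} applied with $p>l$ we have $\lVert v\rVert_l\leq m^{1/l-1/p}\lVert v\rVert_p$, so $\bE[\lVert f(Y_i)\rVert_l^p]\leq m^{p/l-1} D_p$, and the same bound holds for $Z_{i-1}$ since both chains follow the kernel $P$ and start from $\pi_0$. A standard $(a+b)^p\leq 2^{p-1}(a^p+b^p)$ estimate then yields $\sup_i \bE[A_i^p]<\infty$. So everything hinges on proving a geometric tail $\mathbb P(\tau>i)\leq C\rho^{i-k}$ for some $\rho\in(0,1)$; if that holds, the series collapses to $\sum_i \rho^{(i-k)(1/l-1/p)}$, which converges because $1/l-1/p>0$, and Minkowski closes the estimate. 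The extension from $H_k$ to $H_{k:m}$ follows by Jensen.

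The main obstacle, and the step where the hypotheses truly enter, is establishing the geometric tail of $\tau$. The plan is the standard two-ingredient argument: first lift the drift inequality \eqref{eqn:drift} to the product chain by using the drift function $\bar V(x,y):=\tfrac{1}{2}(V(x)+V(y))$, which satisfies $\bar P \bar V \leq \lambda \bar V + b'\Indc(\calS\times\calS)$ when the small set is $\{V\leq L\}$ and $\lambda+b/(1+L)<1$; this is precisely where the threshold condition on $L$ is used, exactly as in Proposition~3.1 of \citet{jacob2020unbiased} and Proposition~4 of \citet{middleton2020unbiased}. Standard geometric-ergodicity theory (see \citealt{meyn2012markov}, Chapter~15) then gives a geometric tail on the return time of $(Y_t,Z_{t-1})$ to $\calS\times\calS$. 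Second, the minorization hypothesis $\inf_{(x,y)\in\calS\times\calS}\bar P((x,y),\calD)\geq\tilde\epsilon$ ensures that at every such return the two chains coalesce with probability at least $\tilde\epsilon$, independently of the past. A geometric-trials argument on the number of returns before successful coupling then yields the required geometric tail $\mathbb P(\tau>i)\leq C\rho^{i-k}$.

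Putting the pieces together, the proof reads: (1) apply the triangle inequality and Minkowski to reduce to the summand-by-summand bound, (2) use H\"older with exponents driven by the gap between $p$ and $l$, (3) bound the moment factor via the uniform assumption and Lemma~\ref{lem:norm-comparison}, and (4) bound the tail factor via the geometric-tail lemma for $\tau$ outlined above. The principal technical work is in step (4); the rest is a routine H\"older/Minkowski bookkeeping exercise.
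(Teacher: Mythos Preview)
Your proposal is correct and follows essentially the same route as the paper: the paper isolates your steps (1)--(3) as Lemma~\ref{lem: tail-moment} (Minkowski on the series $\sum_k D_k\Indc(\tau>k)$, then H\"older with exponents determined by $1/l-1/p$ to split off the moment and tail factors), and handles your step (4) by simply quoting Proposition~3.4 of \citet{jacob2020unbiased} as Proposition~\ref{prop:exponential tail} rather than sketching the product-drift and geometric-trials argument as you do. The only cosmetic difference is that the paper works directly with $D_k=f(Y_k)-f(Z_{k-1})$ (reducing to the scalar case) instead of first passing to $\lVert\cdot\rVert_l$ via the triangle inequality; the H\"older exponents and the resulting summable series are identical.
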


The main ingredient in the proof of Proposition \ref{prop: moment, formal} is to control the tail probability of the meeting time $\tau$. We say $\tau$ has a $\beta$-polynomial tail if there exists a constant $K_\beta > 0$ such that
\begin{align}\label{eqn: poly tail}
	\bP(\tau > n)\leq K_\beta n^{-\beta}.
\end{align}
We say $\tau$ has an exponential tail if there exists a constant $K > 0$ and $\gamma\in(0,1)$ such that
\begin{align}\label{eqn: expo tail}
	\bP(\tau > n)\leq K\gamma^n.
\end{align}

Our next result gives sufficient conditions to ensure Assumption \ref{ass: moment}.

\begin{lemma}\label{lem: tail-moment}
	Suppose one of the following holds:
	\begin{itemize}
		\item There exists $p> l$, $\beta> 0$, and $D_p> 0$ such that $\frac 1p + \beta  > \frac 1l$; $\bE[\lVert f(Y_t)\rVert_p^p] < D_p$ for every $t$, and $\tau$ has a $\beta$-polynomial tail; 
		\item There exists $p > l$ and $D_p > 0$ such that $\bE[\Vert f(Y_t)\rVert_p^p] < D_p$ for every $t$,  and $\tau$ has an exponential tail. 
	\end{itemize}
	Then $\mathbb{E}[\lVert H_k(Y,Z)\rVert_l^l] <\infty$ for every $k$. 
\end{lemma}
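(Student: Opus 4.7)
The plan is to bound $\bE[\lVert H_k(Y,Z)\rVert_l^l]$ directly from the expansion $H_k = f(Y_k) + \sum_{t=k+1}^{\tau-1}\Delta_t$ with $\Delta_t := f(Y_t)-f(Z_{t-1})$, by rewriting the random-length sum as $\sum_{t=k+1}^\infty \Delta_t\,\Indc(\tau > t)$ and then using Minkowski and Hölder to decouple the stopping time $\tau$ from the increments. Each summand can then be controlled separately by the moment hypothesis on $f$ and by the tail hypothesis on $\tau$.

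Concretely, I would first apply Minkowski's inequality in $L^l(\bP)$ to get
\begin{align*}
\bE\big[\lVert H_k\rVert_l^l\big]^{1/l} \leq \bE\big[\lVert f(Y_k)\rVert_l^l\big]^{1/l} + \sum_{t=k+1}^\infty \bE\big[\lVert \Delta_t\rVert_l^l\,\Indc(\tau > t)\big]^{1/l}.
\end{align*}
The first term is immediately finite by the norm-comparison Lemma \ref{lem:norm-comparison} and Jensen's inequality, which together yield $\bE[\lVert f(Y_k)\rVert_l^l]\leq m^{1-l/p}D_p^{l/p}$. For each term in the series, I would apply Hölder with conjugate exponents $(p/l,\,p/(p-l))$ to split the increment moment from the tail indicator,
\begin{align*}
\bE\big[\lVert \Delta_t\rVert_l^l\,\Indc(\tau > t)\big] \leq \bE\big[\lVert \Delta_t\rVert_l^p\big]^{l/p}\,\bP(\tau > t)^{(p-l)/p}.
\end{align*}
The triangle inequality $\lVert \Delta_t\rVert_l\leq \lVert f(Y_t)\rVert_l+\lVert f(Z_{t-1})\rVert_l$, Lemma \ref{lem:norm-comparison}, and the uniform bound $\bE[\lVert f(\cdot)\rVert_p^p]\leq D_p$ then give $\bE[\lVert \Delta_t\rVert_l^p]\leq C(m,l,p,D_p)$ uniformly in $t$. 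The whole problem thus reduces to the convergence of $\sum_{t\geq k+1}\bP(\tau > t)^{(p-l)/(lp)}$.

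Under the exponential-tail assumption $\bP(\tau > t)\leq K\gamma^t$, this series is geometric and converges for every $p>l$ without further conditions, handling the second case of the lemma at once. Under the polynomial-tail assumption $\bP(\tau > t)\leq K_\beta t^{-\beta}$, the series becomes $\sum_t t^{-\beta(p-l)/(lp)}$, and its summability has to be tied back to the stated hypothesis $\tfrac{1}{p}+\beta > \tfrac{1}{l}$. The main obstacle is sharpness: the naive split $(p/l,\,p/(p-l))$ only yields the stronger threshold $\beta(p-l)/(lp) > 1$, and matching exactly the claimed interplay requires either optimizing over an intermediate Hölder exponent $u\in(l,p)$ in place of $p/l$, or performing a summation-by-parts against the polynomial tail — exploiting that $\bE[\tau^a]<\infty$ for every $a<\beta$ — so that the final exponent threshold collapses exactly onto the stated condition.
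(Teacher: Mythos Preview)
Your approach---Minkowski on the telescoping sum, then H\"older with exponents $(p/l,\,p/(p-l))$ to decouple $\Delta_t$ from $\Indc(\tau>t)$---is exactly the route the paper takes. Your treatment of the exponential-tail case is also the same and is correct.

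Where you and the paper diverge is that you have actually caught an error the paper makes. After the H\"older step the tail contribution is $\bP(\tau>t)^{1/q}$ with $1/q=1/l-1/p$, so under the polynomial tail the summand is $t^{-\beta/q}=t^{-\beta(1/l-1/p)}$, precisely the exponent you wrote. The paper, however, miscomputes $\beta/q$ as $\beta/(1/l-1/p)$ rather than $\beta\cdot(1/l-1/p)$, and then declares the series summable from the hypothesis $\beta>1/l-1/p$. That step is simply wrong: the correct summability threshold is the one you found, namely $\beta(1/l-1/p)>1$, i.e.\ $\beta>lp/(p-l)$.

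Your proposed fixes cannot rescue the stated hypothesis. Optimizing the intermediate H\"older exponent $u\in(l,p)$ gives tail exponent $\beta(1/l-1/u)$, which is maximized at $u=p$ and never exceeds $\beta(1/l-1/p)$; so no choice of $u$ improves on the ``naive'' split. Summation by parts against $\bE[\tau^a]<\infty$ for $a<\beta$ does not help either, because it controls $\sum_t t^{a-1}\bP(\tau>t)$, not $\sum_t \bP(\tau>t)^{1/q}$, and these are not comparable in the needed direction. More tellingly, the hypothesis $\beta>1/l-1/p$ can hold with $\beta<1$, in which case even $\bE[\tau]$ may be infinite; there is no hope of proving $\bE[\lVert H_k\rVert_l^l]<\infty$ under so weak a tail. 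In short: your argument is the paper's argument, your threshold is the correct one, and the discrepancy you noticed is an error in the paper's statement (or proof), not a gap in your reasoning.
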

\begin{proof}[Proof of Lemma \ref{lem: tail-moment}]
	We start with the first case. Without loss of generality, we assume $k= 0$ and  the estimator $H_0(Y,Z):= f(Y_0) + \sum_{i=1}^{\tau-1}(f(Y_i) - f(Z_{i-1}))$ takes scalar value. Let $D_k := f(Y_k) - f(Z_{k-1})$ for $k \geq 1$, and $D_0 = f(Y_0)$, the estimator can be written as:
	\begin{align*}
		H_0(Y,Z) = \sum_{k=0}^\infty D_k \Indc(\tau > k).
	\end{align*}
	The meeting time $\tau$ is almost surely (a.s.) finite by the $\beta$-polynomial assumption, therefore $H_0(Y,Z)$ is the  limit of $H_0^n(Y,Z) := \sum_{k=0}^n D_k \Indc(\tau > k)$ in the  a.s. sense. We will now prove $H_0^n(Y,Z) \rightarrow H_0(Y,Z)$ in  $L^l$ , which further implies $\mathbb{E}[\lvert H_0(Y,Z)\rvert^l] <\infty$. 
	
	By the Minkowski’s inequality on the probability space $L^l(\Omega)$, we have
	\begin{align}
		\big(\bE[\lvert H_0^n(Y,Z) - H_0(Y,Z) \rvert^l]\big)^{1/l} &= \big(\bE[\big\lvert  \sum_{k=n+1}^\infty D_k \Indc(\tau > k)\big\rvert^l]\big)^{1/l} \\
		\label{eqn: Minkowski} & \leq \sum_{k=n+1}^\infty 	\big(\bE[\lvert D_k \Indc(\tau > k)\rvert^l]\big)^{1/l}.
	\end{align}

	Every term in \eqref{eqn: Minkowski} can be upper bounded by the H\"{o}lder's inequality
	\begin{align}
		\big(\bE[\lvert D_k \Indc(\tau > k)\rvert^l]\big)^{1/l} &\leq 	
		\big(\bE[\lvert D_k \rvert^p])^{1/p} \big(\bP(\tau > k)\big)^{1/q} \qquad{\text{here } 1/q = 1/l - 1/p} \\
		&\label{eqn: summable}\leq (2D_p)^{1/p} K_\beta^{1/q} k^{-\beta/q } \\&=  (2D_p)^{1/p} K_\beta^{1/q} k^{-\frac{\beta}{\frac 1l - \frac 1p} }.
	\end{align}
	Since $\beta > \frac 1l - \frac 1p > 0$, the right hand side of $\eqref{eqn: summable}$ is summable. Therefore we conclude 
	$$\sum_{k=n+1}^\infty 	\big(\bE[\lvert D_k \Indc(\tau > k)\rvert^l]\big)^{1/l} \rightarrow 0$$ as $n\rightarrow \infty$, so $H_0^n(Y,Z) \rightarrow H_0(Y,Z)$ in $L^l$. 
	
	In the second case,  exponential light tail implies $\beta$-polynomial tail for every $\beta > 0$, our result immediately follows from the first case. 
	
\end{proof}
The assumption $\bE[\lVert f(Y_t)\rVert^p] < D_p$ in Lemma \ref{lem: tail-moment} is generally satisfied as long as $f$ has $p$-th moment under the stationary distribution $\pi$. It remains to verify the tail conditions of $\tau$, i.e., formula \eqref{eqn: poly tail} or \eqref{eqn: expo tail}. The exponential tail \eqref{eqn: expo tail} and polynomial tail \eqref{eqn: poly tail} are closely related to the geometric ergodicity and polynomial  ergodicity of the underlying marginal Markov chain $P$, respectively. For simplicity, we only give conditions for the exponential tail here, which is provided in \cite{jacob2020unbiased}. The sufficient conditions of polynomial tail of $\tau$ can be founded in Theorem 2 of  \cite{middleton2020unbiased}.

\begin{prop}[Proposition 3.4 in \cite{jacob2020unbiased}]\label{prop:exponential tail}
	Suppose the Markov transition kernel described in Section \ref{subsubsec:unbiased MCMC} satisfies a geometric drift condition with a small set $\calS$ of the form $\calS =\{x: V(x)\leq L\}$ for $\lambda + b/(1+L) < 1$. Suppose there exists $\tilde \epsilon \in (0,1)$ such that
	\[
	\inf_{(x,y)\in \calS \times \calS} \barP((x,y), \calD) \geq \tilde\epsilon,
	\]
	where $\calD:= \{(x,x): x\in \Omega\}$ is the diagonal of $\Omega\times \Omega$. Then the meeting time $\tau$ has a exponential light tail. 
\end{prop}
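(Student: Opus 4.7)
The plan is to transfer the single--chain geometric drift to the coupled chain on $\Omega \times \Omega$, and then use the one--step meeting probability $\tilde\epsilon$ on $\calS \times \calS$ to decompose $\tau$ into a geometric number of inter--visits to $\calS \times \calS$, each of which has an exponential moment. The key observation is that $\calS \times \calS$ plays the role of a ``meeting--enabled'' small set for the coupling, and the assumed lower bound on $\barP((x,y),\calD)$ is precisely what is needed to convert visits to $\calS \times \calS$ into geometric trials with success probability $\tilde\epsilon$.

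First I would lift the drift to the product chain by defining $M(x,y) := V(x) + V(y)$ and summing the one--step drift inequality \eqref{eqn:drift} applied separately to $Y_t \mid Y_{t-1}$ and to $Z_{t-1} \mid Z_{t-2}$ (using that the marginals of $\barP$ are each $P$). This yields
\begin{equation*}
  \bE\big[M(Y_t, Z_{t-1}) \,\big|\, Y_{t-1}, Z_{t-2}\big]
  \;\leq\; \lambda M(Y_{t-1}, Z_{t-2}) \;+\; b\,\mathbf{1}(Y_{t-1} \in \calS) \;+\; b\,\mathbf{1}(Z_{t-2} \in \calS).
\end{equation*}
On the complement of $\calS \times \calS$, at least one of $V(Y_{t-1}), V(Z_{t-2})$ exceeds $L$; combined with $V \geq 1$, this forces $M(Y_{t-1},Z_{t-2}) \geq L + 1$ there, and at most two indicators contribute $2b$ (it suffices actually to check the worst case $b$ or $2b$, either works with the same final constant). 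Hence off $\calS \times \calS$ one obtains a strict geometric drift $\bE[M \mid \cdot] \leq \lambda^\star M$ with $\lambda^\star := \lambda + b/(L+1) < 1$ by hypothesis. By the Foster--Lyapunov type argument (e.g.\ Theorem 15.2.5 of \cite{meyn2012markov}), this yields an exponential moment bound: there exist $\beta_0 > 1$ and $K_0 < \infty$, depending only on $(\lambda, b, L)$ and on $M$ at the initial state, such that the first hitting time $\sigma_1$ of $\calS \times \calS$ satisfies $\bE[\beta_0^{\sigma_1}] \leq K_0 M(Y_0,Z_0)$, and the same bound holds uniformly for successive return times starting from any point in $\calS \times \calS$.

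Next I would use the minorization--type assumption $\barP((x,y), \calD) \geq \tilde\epsilon$ for $(x,y) \in \calS \times \calS$. Each time the coupled chain visits $\calS \times \calS$ the two components meet at the following step with probability at least $\tilde\epsilon$, independently of the past after conditioning appropriately; hence the number $K$ of visits to $\calS \times \calS$ before $\tau$ is dominated stochastically by a $\Geo(\tilde\epsilon)$ random variable. Writing $\tau \leq \sigma_1 + \sum_{j=2}^{K} \sigma_j$ where the $\sigma_j$'s are the successive return times to $\calS \times \calS$, an application of H\"older's inequality to $\bE[\beta^{\tau}] = \bE\bigl[\prod_{j=1}^K \beta^{\sigma_j}\bigr]$ for $\beta \in (1, \beta_0)$ chosen sufficiently close to $1$ (so that $(1-\tilde\epsilon)\sup_{(x,y) \in \calS \times \calS} \bE_{(x,y)}[\beta^{\sigma}] < 1$) gives $\bE[\beta^\tau] < \infty$, and Markov's inequality delivers $\bP(\tau > n) \leq \bE[\beta^\tau]\,\beta^{-n}$, i.e.\ \eqref{eqn: expo tail} with $\gamma = 1/\beta$.

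The main obstacle is the bookkeeping in the last step: one must verify that the exponential moment of a single return time $\sigma_j$ starting from $\calS \times \calS$ can be made as close to $1$ as desired (in the sense that $\sup_{(x,y) \in \calS \times \calS} \bE_{(x,y)}[\beta^{\sigma}]$ is finite and tends to $1$ as $\beta \downarrow 1$), so that the product over a $\Geo(\tilde\epsilon)$--many factors still has finite expectation. This is a standard but delicate consequence of the geometric drift together with the fact that $\calS = \{V \leq L\}$ is bounded in $V$, and it is where the sharp condition $\lambda + b/(1+L) < 1$ enters -- it gives quantitative control, uniform over starting points in $\calS \times \calS$, of the tail of $\sigma$, which is what closes the argument.
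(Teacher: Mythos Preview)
The paper does not prove this proposition; it is quoted verbatim as Proposition~3.4 of \cite{jacob2020unbiased} and used as a black box in the proof of Proposition~\ref{prop: moment, formal}. So there is no ``paper's own proof'' to compare against.

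Your sketch is the standard drift--minorization argument and is essentially what \cite{jacob2020unbiased} do. One small correction: on the complement of $\calS\times\calS$ at least one coordinate lies outside $\calS$, so at most \emph{one} of the two indicators $\mathbf{1}(Y_{t-1}\in\calS)$, $\mathbf{1}(Z_{t-2}\in\calS)$ can equal $1$; the extra term is therefore at most $b$, not $2b$. Combined with $M(Y_{t-1},Z_{t-2})\geq L+1$ on that set, this gives exactly $\bE[M\mid\cdot]\leq(\lambda+b/(L+1))M=\lambda^\star M$ off $\calS\times\calS$, which is why the hypothesis $\lambda+b/(1+L)<1$ is stated with $b$ and not $2b$. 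Apart from this bookkeeping point, your decomposition of $\tau$ into a $\Geo(\tilde\epsilon)$ number of returns to $\calS\times\calS$, each with a uniformly bounded exponential moment coming from the product drift, is correct and matches the argument in the cited reference.
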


Combining Lemma \ref{lem: tail-moment} and Proposition \ref{prop:exponential tail}, the proof of Proposition \ref{prop: moment, formal} is immediate.
\begin{proof}[Proof of Proposition \ref{prop: moment, formal}]
	By Proposition \ref{prop:exponential tail}, we know $\tau$ has an exponential  tail. Using the second case of Lemma \ref{lem: tail-moment}, our result follows.
\end{proof}
It is still possible to further strengthen Proposition \ref{prop: moment, formal} given extra assumptions on $\tau$ or $f$. For example, when $\tau$ has an exponential tail and $\bE_\pi[e^{\theta f}] < \infty$ for  a univariate $f$ and some $\theta>0$, one can then prove the JOA estimator also has an exponential moment, and thus has every finite-order moment. The existence of an exponential moment may help analyze the concentration properties of the JOA estimator.
\subsection{Other Technical Proofs}\label{subsec:other proof}
\subsubsection{Proof of Proposition \ref{prop:nested unbias}}\label{subsubsec: proof nested}
\begin{proof}
	Using the law of iterated expectation, the expectation of $\hat \lambda$ can be written as
	\begin{align*}
		\bE[\hat \lambda] &= \bE[\bE[\hat \lambda(x) | x] ]   \\
		& =  \int \bE[\hat \lambda(x)| x] \pi(dx)\\
		& =  \int  \lambda(x) \pi(dx)\\
		& = \bE_\pi[\lambda].
	\end{align*}
\end{proof}

\subsubsection{Proof of Proposition \ref{prop:delta transform}}
\begin{proof}
	We first show the unbiasedness of $\tilde H$. Notice that $\tilde H  =  H 1_{\lVert H\rVert\geq \delta} +  (H + 2\delta B\mathbf 1)1_{\lVert H\rVert < \delta}$ where $B\sim \sfU\{-1,1\}$ is independent with $H$. Therefore,
	\[
	\bE[\tilde H] = \bE[ H 1_{\lVert H\rVert\geq \delta}] +  \bE[(H + 2\delta B \mathbf 1)1_{\lVert H\rVert\geq \delta}] = \bE[ H 1_{H\geq \delta}] + \bE[H1_{H < \delta}] = \bE[H].
	\]
 For the variance, we can calculate:
 \begin{align*}
  \bE[\tilde H^2] = \bE[(H + 2\delta B1_{\lVert H\rVert\geq \delta})^2] &= \bE[H^2] + 4\delta^2\bE[1_{\lVert H\rVert\geq \delta}] + 4\delta \bE[H1_{\lVert H\rVert\geq \delta} B]\\
  & =  \bE[H^2] + 4\delta^2\bP[\lVert H\rVert\geq \delta],
 \end{align*}
 the last equality follows from the fact that $B$ has zero expectation and is independent with $H$. Finally, we have
 \begin{align*}
     \var[\tilde H] = \bE[\tilde H^2] - (\bE[\tilde H])^2 &= \bE[H^2] + 4\delta^2\bP[\lVert H\rVert\geq \delta] - \bE[H]^2\\
     &= \var[H] + 4\delta^2\bP[\lVert H\rVert\geq \delta]  \leq \var[H] + 4\delta^2,
 \end{align*}
 as desired.

\end{proof}

\subsubsection{Proof of Corollary \ref{cor:application_of_unbiasedness}}

\begin{proof}
	Let $W$ be the estimator output from Algorithm \ref{alg:MLMC}. Let $\mathsf{Cost}(W)$ denote its expected computational cost. From Theorem \ref{alg:MLMC}, we know both $\var(W)$ and $\mathsf{Cost}(W)$ is finite. For any fixed integer $n$, let $W_1, W_2,\ldots, W_n$ be the outputs of $n$ independent calls of Algorithm \ref{alg:MLMC}, and let $\tilde W := \frac{\sum_{i=1}^n W_i}{n}$ be its average. It follow from the unbiasedness of each $W_i$ that:
	\[
	\bE[(\tilde W - g(m(\pi)))^2]   =  \var(\tilde W) = \frac{\var(W)}{n}.
	\]
	Taking $n = \var(W)/\epsilon^2$, then the mean square error of $\tilde W$ will be no larger than $\epsilon^2$, and the expected computational cost will be $n \mathsf{Cost}(W) = \var(W)\mathsf{Cost}(W)/\epsilon^2 = \calO(1/\epsilon^2)$.
\end{proof}

\section{Extra numerical experiment for the Ising Model}\label{sec: extra experiment}
Let us denote the `natural statistics' of the Ising model by $h(\sigma) := -H(\sigma)$. In this example we are interested in estimating $1/\bE_\theta[h(\sigma)]$.  Standard calculation in exponential families yields:
\[
\frac{1}{\bE_\theta[h(\sigma)]} = \frac{1}{\log(Z(\theta))'} = \frac{Z(\theta)}{Z'(\theta)}.
\] 
Following the setups in \cite{jacob2020unbiased}, we set $n = 32$ (which means the sample space is of dimension $32^2 = 1024$) and use the JOA estimator for unbiased estimation of $\bE_\theta[h(\sigma)]$ by coupling two single-site Gibbs samplers, and feed these estimators as inputs for the unbiased MLMC estimator with parameter $p = 0.7, k = 10^5,$ and $m = 2\times 10^5$. We implement our estimator for a grid of $\theta$ values ranging from $0.23$ to $0.40$. For each $\theta$, we generate $10^5$ unbiased estimators and  report our results in Figure \ref{fig:naturalstat} below. Similar to the observations in \cite{jacob2020unbiased}, the meeting time increases exponentially as $\theta$ increases. Therefore it may be computationally demanding to generate unbiased estimators when $\theta$ is close to its critical temperature. Meanwhile, the standard deviation has an interesting $U$-shape pattern as $\theta$ increases, as shown in Figure \ref{fig:sub_sising2}. We have no idea how to explain this phenomenon theoretically. 
\begin{figure}[h]
	\begin{subfigure}{.5\textwidth}
		\centering
		% include first image
		\includegraphics[width=.9\linewidth]{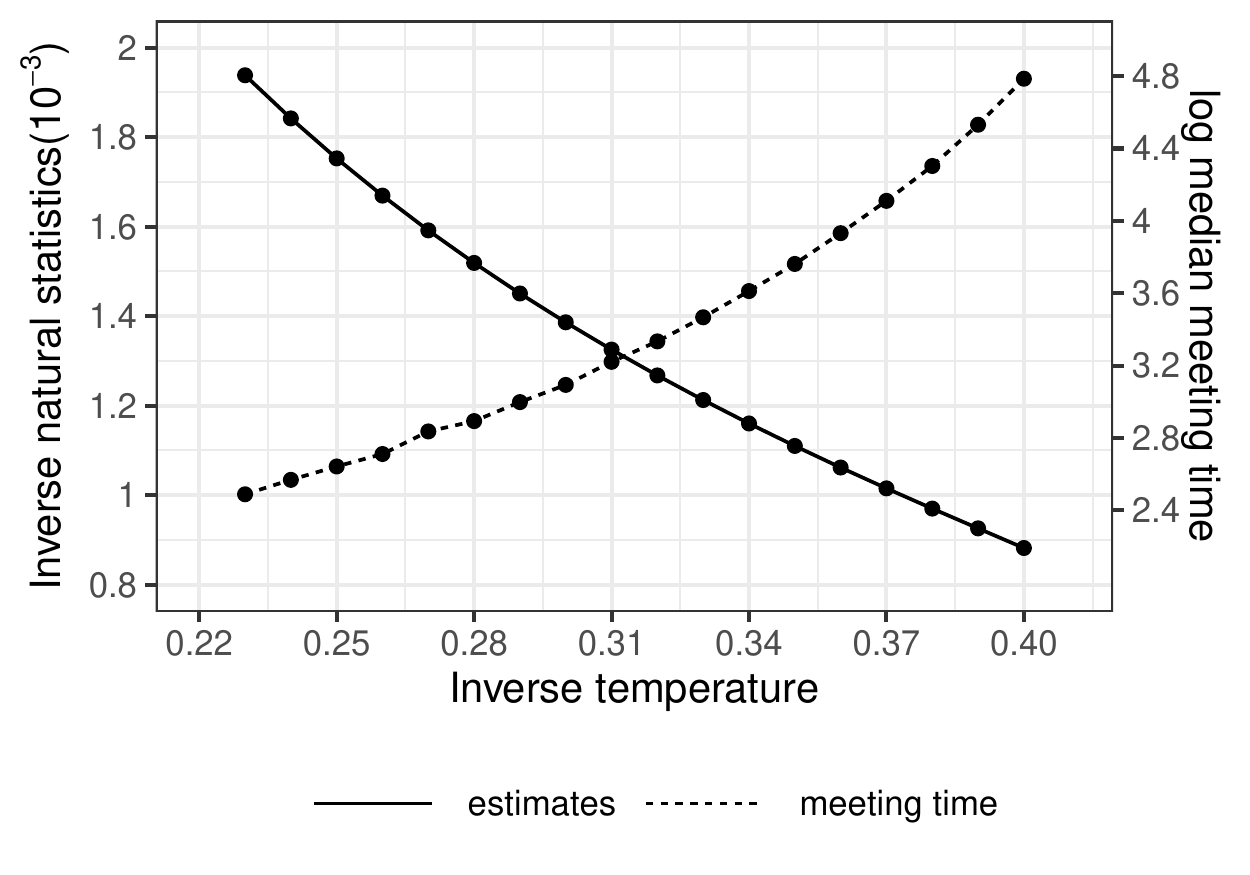}  
		\caption{Estimates and meeting times}
		\label{fig:sub_ising1}
	\end{subfigure}
	\begin{subfigure}{.5\textwidth}
		\centering
		% include second image
		\includegraphics[width=.9\linewidth]{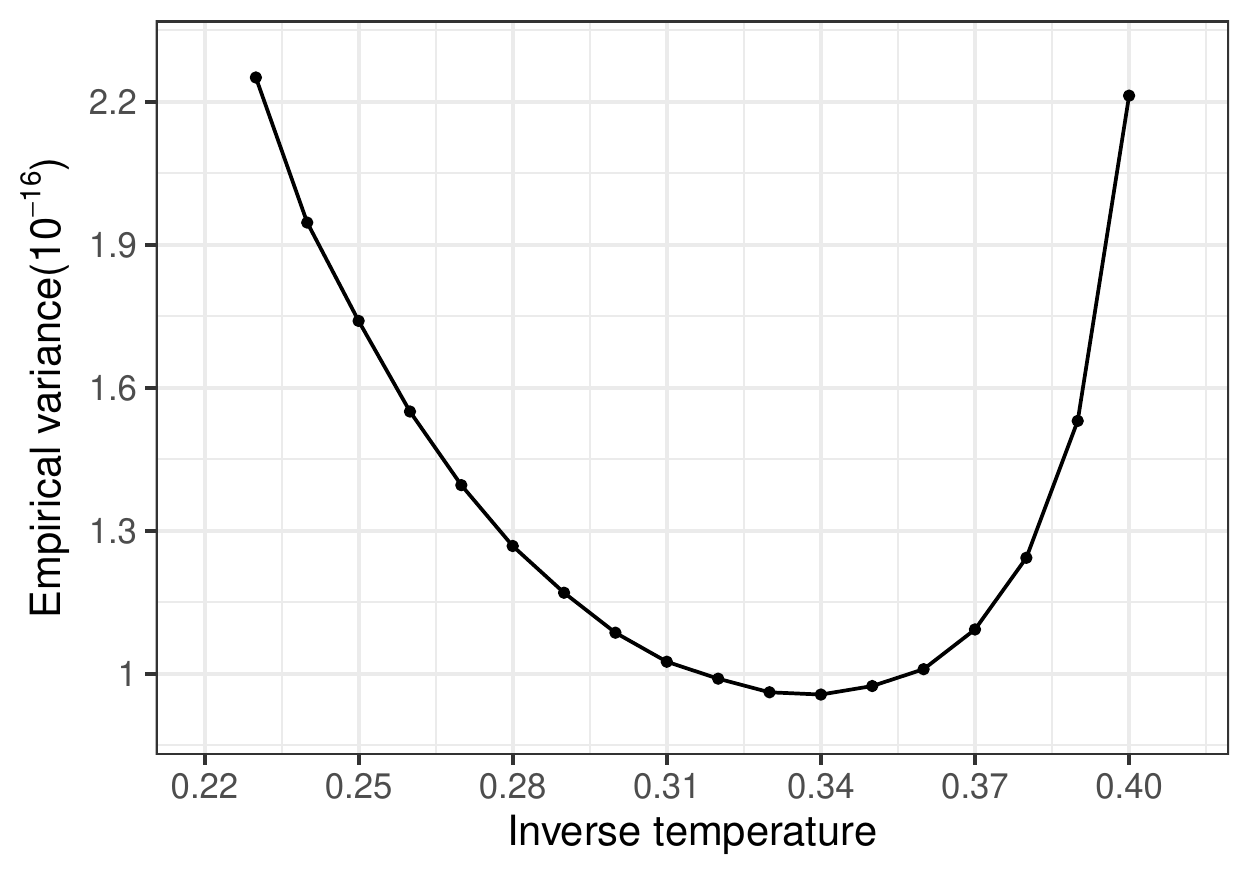}  
		\caption{Empirical variance}
		\label{fig:sub_sising2}
	\end{subfigure}
	\caption{Estimates, meeting times and standard deviations of $1/\bE_\theta[h(\sigma)]$ for $\theta\in\{0.23,0.24,\ldots, 0.4\}$. Left: the solid line stands for the empirical averages of $10^5$ unbiased estimators from Algorithm \ref{alg:MLMC}. The dashed line stands for the log median meeting time of the JOA estimators.}
	\label{fig:naturalstat}
\end{figure}

\end{document}